\let\OLDthebibliography\thebibliography
\renewcommand\thebibliography[1]{
  \OLDthebibliography{#1}
  \setlength{\parskip}{0pt}
  \setlength{\itemsep}{0pt plus 0.3ex}
}
\newcommand{\tr}[1]{\mathrm{tr}\left\{ #1 \right\}}
\newcommand{\rmA}{\textrm{A}}
\newcommand{\Tr}{\mathrm{tr}}
\newcommand{\ket}[1]{\left| #1 \right>} 
\newcommand{\bra}[1]{\left< #1 \right|} 
\newcommand{\id}{\mathds{1}}
\newcommand{\cH}{\mathcal{H}}
\newtheorem{theo}{Theorem}
\newtheorem{thm}[theo]{Theorem}
\newtheorem*{thm*}{Theorem}
\newtheorem{prop}[theo]{Proposition}
\newtheorem{lemma}[theo]{Lemma}
\newtheorem*{lemma*}{Lemma}
\newtheorem{lem}[theo]{Lemma}
\newtheorem{cor}[theo]{Corollary}
\newtheorem*{cor*}{Corollary}
\newtheorem{defn}[theo]{Definition}
\theoremstyle{definition} 
\begin{document}
\title {A channel-based framework for steering, \\ non-locality and beyond}
\author{Matty J. Hoban}
\affiliation{Clarendon Laboratory, Department of Physics, University of Oxford, UK}
\author{Ana Bel\'en Sainz}
\affiliation{Perimeter Institute for Theoretical Physics, 31 Caroline St. N, Waterloo, Ontario, N2L 2Y5, Canada}
\date{\today}
\maketitle
\begin{abstract}
Non-locality and steering are both non-classical phenomena witnessed in Nature as a result of quantum entanglement. It is now well-established that one can study non-locality independently of the formalism of quantum mechanics, in the so-called device-independent framework. With regards to steering, although one cannot study it completely independently of the quantum formalism, ``post-quantum steering" has been described, that is steering which cannot be reproduced by measurements on entangled states but do not lead to superluminal signalling. In this work we present a framework based on the study of quantum channels in which one can study steering (and non-locality) in quantum theory and beyond. In this framework, we show that kinds of steering, whether quantum or post-quantum, are directly related to particular families of quantum channels that have been previously introduced by Beckman, Gottesman, Nielsen, and Preskill [Phys. Rev. A 64, 052309 (2001)]. Utilising this connection we also demonstrate new analytical examples of post-quantum steering, give a quantum channel interpretation of almost quantum non-locality and steering, easily recover and generalise the celebrated Gisin-Hughston-Jozsa-Wootters theorem, and initiate the study of post-quantum Buscemi non-locality and non-classical teleportation. In this way, we see post-quantum non-locality and steering as just two aspects of a more general phenomenon. 
\end{abstract}

\tableofcontents

\bigskip

Entanglement is one of the most striking non-classical features of quantum mechanics. Given appropriately chosen measurements certain, but not all, entangled states can exhibit a violation of local realism (local causality), called ``non-locality" \cite{bell}. Apart from its fundamental interest, non-locality has also turned into a key resource for certain information-theoretic tasks, such as key distribution \cite{qkd} or certified quantum randomness generation \cite{rand}, and has been witnessed experimentally in a loophole-free manner \cite{lfb1, lfb2, lfb3}.

The non-classical implications of entanglement also manifest as a phenomenon called ``Einstein-Podolsky-Rosen steering'', henceforth referred to as solely ``steering''. There, one party, Alice, by performing appropriately chosen measurements on one half of an entangled state, remotely `steers' the states held by a distant party, Bob, in a way which has no local explanation \cite{sch}. A modern approach to steering describes it as a way to certify entanglement in cryptographic situations where some devices in the protocol are not characterised \cite{S35}. Steering hence allows for a ``one-sided device independent" implementation of several information-theoretic tasks, such as quantum key distribution \cite{sqkd}, randomness certification \cite{sr1,sr2}, measurement incompatibility certification \cite{smi1, smi2, smi3}, and self-testing of quantum states \cite{sst1, sst2}. 

Even though these phenomena arise naturally within quantum mechanics, they are not restricted to it. Non-local correlations and steering beyond what quantum theory allows are conceivable while still complying with natural physical assumptions, such as relativistic causality \cite{PR, pqsp}. By ``post-quantum'' we mean non-locality or steering that cannot be realised with local measurements made on an entangled quantum state\footnote{We do not mean post-quantum in the sense of post-quantum cryptography, where one designs cryptographic protocols that cannot be efficiently broken by quantum computers. Post-quantum in our sense could refer to non-locality and steering in generalised probabilistic theories.}. Post-quantum non-locality has been vastly explored, especially its implications in information-theoretic tasks \cite{pples}. Little is known, however, about post-quantum steering, mainly due to the lack of a clear formalism for studying this phenomenon beyond quantum theory. 

{It may be unclear why one would be interested in steering in theories beyond quantum theory, since it is a phenomenon that is defined within the quantum formalism. Indeed, if we are testing quantum theory against all possible, sensible classical descriptions of reality, a local hidden variable is the most general starting point.}
{One may however ask in which sensible ways Nature may differ from a world described by quantum theory. Here we argue that it makes sense to consider the picture where locally in our own laboratory everything is described according to quantum theory, however, the global process governing the interactions between laboratories is not, analogous to the study of indefinite causal order in Ref. \cite{ognyan}. The existence of post-quantum steering demonstrates that the global theory can deviate from quantum theory in intriguing ways, even if our own laboratory is restricted to quantum theory. In fact, because of this, we would argue that post-quantum steering is of more foundational interest than local hidden state models.} {We also note that in quantum information, bounding the set of quantum assemblages from the post-quantum set has also been studied in the guise of extended non-local games by Johnston et al \cite{johnston}.}

{To rectify the lack of a clear formalism for post-quantum steering, we} present a framework to study both non-locality and steering complying with the No-Signalling principle. Our formalism is based on quantum channels, i.e. completely positive trace preserving maps on density matrices. More specifically, we consider channels on multi-partite systems that satisfy a form of the No-Signalling principle, introduced first by Beckman, Gottesman, Nielsen, and Preskill \cite{Beck} in bipartite setups. Indeed, they defined two families of channels. On the one hand, ``causal channels", that do not permit superluminal quantum (and classical) communication between two parties. On the other, ``localizable channels", that can be described by parties sharing a quantum (entangled) state and performing local operations with respect to each party. Furthermore, the set of localizable channels is a strict sub-set of the causal channels \cite{Beck}. 

In this work, a given conditional probability distribution (correlations) in a non-locality scenario or a set of conditional quantum states (assemblage) in a steering scenario, is associated to a causal channel, and vice-versa. We identify the nature of the correlations, or assemblages, with the properties of the channels that may give rise to them. In particular, if correlations or assemblages are post-quantum then they can be associated with a causal, but not localizable, channel. {Utilising this connection we derive results in both the study of quantum channels and steering.}

We also show that our framework is not limited to the study of non-locality and steering. We show that non-locality studied from the perspective of channels can be expanded to other kinds of non-locality studied in the literature. In particular, Buscemi introduced the scenario of the semi-quantum non-local games \cite{Buscemi}, in which we can demonstrate a form of non-locality, denoted as ``Buscemi non-locality''. Buscemi showed that an entangled state can be used as a resource for demonstrating this form of non-locality. Here, we expand upon this original work to introduce post-quantum Buscemi non-locality, and show how it can be understood through quantum channels. Finally, we consider the analogue of steering for Buscemi non-locality, which is the study of non-classical teleportation, as initiated by Cavalcanti, Skrzypczyk, and \v{S}upi\'{c} \cite{teleportation}.

\section*{Summary of Results}

{This manuscript presents a variety of results which, to guide its more comprehensive reading, we now briefly outline. }

{First, in the study of quantum channels, we define a novel class of quantum channels called the ``almost localizable channels'' in Def. \ref{almostloc}, which are a generalisation of the set of localizable channels in \cite{Beck}. We show in Theorem \ref{stalmost} that the set of almost quantum assemblages (as defined in Ref. \cite{pqsp}) result from almost localizable channels, and almost localizable channels only give rise to \textit{almost quantum correlations}\footnote{Almost quantum correlations are defined as a particular relaxation of the set of quantum correlations in Bell scenarios. That is, the set of almost quantum correlations strictly contains those that are achievable by quantum mechanics. Almost quantum assemblages are defines as a particular relaxation of the set of quantum assemblages in steering scenarios. We revise the rigorous definition of these concepts in the next sections.} \cite{aqp} or assemblages. This is the first time that almost quantum assemblages are given a physical definition, rather than just being defined in terms of semi-definite programs.}

{Second, our framework provides a connection between the study of quantum channels and post-quantum steering, which is itself a novel observation. Starting from this connection, in Section \ref{postquantumex} we give new analytical examples of post-quantum steering constructed from non-localizable, yet causal, channels. In addition, Section \ref{constchanass} shows that a consequence of post-quantum steering is the existence of non-localizable channels that cannot be used to violate a Bell inequality through any local operations whatsoever. We moreover give a characterisation of non-signalling assemblages in terms of quantum states and unitary operations, which results in a diagramatic proof of the Gisin-Hughston-Jozsa-Wootters (GHJW) theorem in Corollary \ref{cor:GHJW}. We show in Section \ref{nonclassicaltel} that this proof of the GHJW theorem can be generalised to the study of non-classical teleportation, and we show in Corollary \ref{genGHJW} that post-quantum non-classical teleportation can only be witnessed if there are multiple black boxes in your network.} 

{Finally, we are the first to highlight the possibilities of studying forms of post-quantum Buscemi non-locality and post-quantum non-classical teleportation. Our framework further outlines how to approach these through the study of quantum channels.}

{The paper is structured as follows. In Section \ref{se:channels} we introduce a new family of quantum channels of utmost relevance in this work, while we review relevant known classes of channels in Appendix \ref{ap:channels}. In Sections \ref{se:chanbell} and \ref{se:chanstee} we (i) discuss the interpretation of Bell and steering scenarios in terms of quantum channels, and (iii) present some results that follow when looking at these non-classical phenomena from the scope of quantum channels. The traditional scope to these phenomena is briefly reviewed in Appendices \ref{ap:nonloc} and \ref{ap:eprstee}. 
Finally, in Section \ref{se:gennonloc} we discuss how our framework further includes the above mentioned Buscemi nonlocality \cite{Buscemi} and nonclassical teleportation \cite{teleportation}.  For clarity in the presentation, some of the proofs of results in the main body of the paper are presented in the Appendix.}

A quick note on notation. A Hilbert space will typically be denoted by $\mathcal{H}$, unless otherwise stated, and the set of positive operators acting on $\mathcal{H}$ with trace at most $1$ will be denoted as $\mathcal{D}(\mathcal{H})$. Furthermore, for the more general set of linear operators acting on $\mathcal{H}$, we will use the notation $\mathcal{L}(\mathcal{H})$.

\section{Quantum channels}\label{se:channels}

In the study of non-locality in quantum physics and beyond, a common approach is to have the fundamental objects being a black box associated with some stochastic behaviour: for a given set of inputs for each party, an output is generated stochastically. A stochastic process should be suitably normalised, i.e. the sum over all outcomes for a given input is $1$. The quantum analogue of such a process is a quantum channel. Recall that a channel $\Lambda$ is a trace-preserving, completely-positive (CPTP) map. That is, given an input quantum state described by the density matrix $\rho_{i}$, a channel $\Lambda$ acts on this system producing an output state with density matrix $\rho_{o}:=\Lambda(\rho_{i})$. The suitable normalisation condition is then that the trace of $\rho_{o}$ is $1$ whenever $\tr{\rho_i}{}=1$. A classical stochastic process can be encoded into a channel with respect to some orthonormal basis of the respective Hilbert space. To retrieve the probabilities in the stochastic process one only needs to prepare states in that basis as input, and then only measure in that basis.

Given these simple observations, one can readily relate quantum channels to the study of conditional probability distributions, and thus quantum non-locality. For example, we can ask which channels give rise to correlations that are compatible with a local hidden variable model, or otherwise. {Such non-local properties of quantum channels have been observed and utilised in previous works \cite{Beck,Pao}. There, the relevant objects of study are semicausal and causal multipartite quantum channels, in particular the subset of localizable ones, which we formally review in Appendix \ref{ap:channels}. {To sketch their definitions now, the causal channels are those where one party's output quantum state is the same for all input states for another party, and the localizable channels are those that are generated by local operations and shared entanglement between the parties.} In this section we introduce a new class of channels, called the \textit{almost localizable channels}, which will be pertinent when discussing non-locality and steering.} 

The general scenario we consider is that of multiple space-like separated parties such that they cannot use any particular physical system in their respective laboratories that could result in communication. In this way, the parties are subjected to the same conditions as in a Bell test. We can model the parties' global resources as a device with multiple input and output ports: an input and output port associated with each party. Therefore, each party can produce a local input quantum system, put it into their respective input port, and receive a quantum system from the output port. The global device can contain resources that are shared between distant parties, such as entanglement. For example, if we have two parties, and they each input a system into their respective devices, the output of both devices could be associated with an entangled quantum state. We will now make this picture more formal. 

We have $N$ parties labelled by an index $j\in\{1,2,...,N\}$, and each party has an input and output Hilbert space, $\mathcal{H}^{j}_{in}$ and $\mathcal{H}^{j}_{out}$ respectively, associated with the input and output ports of the parties' device\footnote{In this paper, all Hilbert spaces are assumed to be finite dimensional, unless otherwise specified.}. The input quantum systems have states that are associated with the density matrix $\rho^{j}_{in}\in\mathcal{D}(\mathcal{H}^{j}_{in})$. The $N$-partite device is then associated with a quantum channel $\Lambda_{1...N}:\bigotimes_{j}^{N}\mathcal{L}(\mathcal{H}^{j}_{in})\rightarrow\bigotimes_{j}^{N}\mathcal{L}(\mathcal{H}^{j}_{out})$ taking the input state $\rho^{1}_{in}\otimes\rho^{2}_{in}\otimes...\otimes\rho^{N}_{in}$ to $\rho_{out}:=\Lambda_{1...N}(\rho^{1}_{in}\otimes\rho^{2}_{in}\otimes...\otimes\rho^{N}_{in})$. It will be convenient at times to take bipartitions $S_{A}$, $S_{B}\subseteq\{1, \ldots, N\}$ of the $N$ parties, such that $S_A \cup S_B = \{1, \ldots, N\}$. With these bipartitions, we can then consider Hilbert spaces $\mathcal{H}^{S_{A}}_{in}$, $\mathcal{H}^{S_{B}}_{in}$ and $\mathcal{H}^{S_{A}}_{out}$, $\mathcal{H}^{S_{B}}_{out}$ associated with the input and output Hilbert spaces of $S_{A}$ and $S_{B}$ respectively.

{Given this set-up, we can informally sketch the definition of semicausal and localisable channels in the bipartite case (i.e. $N=2$). The formal definitions can be found in Appendix \ref{ap:channels}. A \textit{semicausal} channel is one where the output state for one particular party is independent of the input state of the other party. In other words, the reduced quantum state for one of the parties is well-defined since it is independent of the other party's input. For example, if a channel $\Lambda_{12}$ is semicausal from $1$ to $2$, denoted $1 \not\rightarrow 2$, then the output state is $\rho_{out}=\Lambda_{12}(\rho_{in}^{1}\otimes\rho_{in}^{2})$ and if we trace out party $1$, the output state of party $2$ is $\rho_{out}^{2}=\textrm{tr}_{1}(\Lambda_{12}(\rho_{in}^{1}\otimes\rho_{in}^{2}))=\Upsilon(\rho_{in}^{2})$, for $\Upsilon:\mathcal{L}(\mathcal{H}^{2}_{in})\rightarrow\mathcal{L}(\mathcal{H}^{2}_{out})$ being a channel. A bipartite channel is \textit{causal} if it is semicausal in both directions, i.e. $1\not\rightarrow 2$ and $2\not\rightarrow 1$. A bipartite channel is \textit{localisable} if there exists a joint quantum state shared between the two parties such all the parties' maps are only from the $j$th party's input and their share of the entangled state to the $j$th party's output.}

{In this work, we will use diagrammatic representations of quantum channels where input and output systems to a channel are represented by wires, and the channels as boxes connecting inputs and outputs. One can see an example of such a digram in Fig. \ref{f:bellmap}, where $\Lambda$ is the channel, and time (the flow from inputs to outputs) goes from bottom to top. Furthermore, later on, we will denote the preparation of states as triangles at the beginning of input wires, and measurements as triangles at the end of output wires.}


{Within this scenario we define a new class of channels called the \textit{almost localizable channels} as follows: 


\begin{defn}\label{almostloc}  \textbf{Almost localizable channels.} \\
A causal channel $\Lambda_{1 \ldots N}$ is almost localizable if and only if there exists {a global ancilla system $E$ with Hilbert space $\mathcal{H}_{E}$, a local ancilla system $E_{k}$ for each $k$th party, with input and output Hilbert spaces $\cH_{in}^{E_k}$ and $\cH_{out}^{E_k}$ respectively, and state $|\psi\rangle_{E}\in\mathcal{H}_{E}\otimes\cH_{in}^{E_1}\otimes\cH_{in}^{E_2}\otimes...\otimes\cH_{in}^{E_N}$ such that, for all states $\rho\in\mathcal{D}(\bigotimes_{j=1}^{N}\mathcal{H}_{in}^{j})$,}
\begin{equation*}
\Lambda_{1 \ldots N} [\rho] = \Tr_{EE_1 \ldots E_N}\left\{\prod_{j=1}^{N}U_{j E} \, (\rho \otimes |\psi\rangle\langle\psi|_{E}) \prod_{k=0}^{N-1}U_{N-k E}^{\dagger}\right\},
\end{equation*}
where $U_{kE}:\mathcal{H}_{in}^{k}\otimes\mathcal{H}_{E}\otimes\cH_{in}^{E_k}\rightarrow\mathcal{H}_{out}^{k}\otimes\mathcal{H}_{E}\otimes\cH_{out}^{E_k}$ is a unitary operator for all $k$, such that, for any permutation $\pi$ on the set $\{1,2,...,N\}$, 
\begin{align*}
\prod_{j=1}^{N}U_{j E} \, (\rho \otimes |\psi\rangle\langle\psi|_{E}) \, \prod_{k=0}^{N-1}U_{(N-k) E}^{\dagger} = \prod_{j=1}^{N}U_{\pi(j) E} \, (\rho \otimes |\psi\rangle\langle\psi|_{E}) \, \prod_{k=0}^{N-1}U_{\pi(N-k) E}^{\dagger}\,.
\end{align*}
\end{defn}

{Notice that in this definition for localizable channels, the ancilla $\sigma_{R}$ is the same for all inputs to the channel $\Lambda_{1 \ldots N}$.} If we compare this definition with that of localizable channels as given by Def.~\ref{def:locuni}, we see that almost localizable channels are a natural generalisation of the localizable ones. Indeed, in Def.~\ref{def:locuni}, the condition of the representation that for all permutations $\pi$, $\prod_{k=1}^{N}U_{k E}=\prod_{k=1}^{N}U_{\pi({k}) E}$ is equivalent to the constraint that $\prod_{k=1}^{N}U_{k E}|\psi\rangle=\prod_{k=1}^{N}U_{\pi({k}) E}|\psi\rangle$ for all possible states $|\psi\rangle\in\mathcal{H}_{in}^{E}\bigotimes_{j=1}^{N}\mathcal{H}_{in}^{j}$. This last universal quantifier over all ancilla states can be relaxed further to an existential quantifier, i.e. that there exists a state $|\psi\rangle$ such that the unitary operators' ordering is invariant under permutations of the parties. This relaxation precisely gives the set of almost localizable channels.

Note that localizable channels are by definition almost localizable, as well as causal. However, as we will show in Section \ref{construct}, there exist almost localizable channels that are not localizable. In showing this, we use the close connection between the so-called \textit{almost quantum correlations} defined in \cite{aqp} (see appendix \ref{ap:nonloc}), and the almost localizable channels. Indeed, the motivation for the name almost localizable comes from this connection. In this direction we also generalise this connection to the study of steering in Section \ref{se:substee}.}

\section{Non-locality from the scope of quantum channels}\label{se:chanbell}

In this section, we reinterpret the traditional Bell scenario \cite{bell} in terms of quantum channels. In particular, we connect every quantum channel to a family of correlations in a Bell test.  {We emphasize that non-locality can, in a sense, be studied independently of the quantum formalism, so considering all processes as fundamentally quantum may seem \textit{excessive}. Instead, one can see our review of non-locality from the point-of-view of quantum channels as just the beginning of a bigger story, as will hopefully become clear.} We review the traditional notion of a Bell scenario and its relevant sets of correlations in appendix \ref{ap:nonloc}.

\subsection{Non-locality via quantum channels}\label{se:sunclc}

{A Bell scenario is characterised by the parameters $(N,m,d)$, where $N$ denotes the number of parties, $m$ the number of measurements each party can choose from, each with $d$ possible outcomes. Consider now the parties to have input and output Hilbert spaces given by $\cH^{j}_{in} = \cH^{j'}_{in} = \cH_m$ and $\cH^{j}_{out} = \cH^{j'}_{out} = \cH_d$ for all $j\neq j'$, where $\cH_m$ has dimension $m$ and $\cH_d$ dimension $d$. Denote by $\{\ket{x}\}_{x=1:m}$ an orthonormal basis of $\cH_m$, and by $\{\ket{a}\}_{a=1:d}$ an orthonormal basis of $\cH_d$. 
In what follows, we relate channels of the form $\Lambda_{1...N}:\mathcal{L}(\mathcal{H}_{m}^{\otimes N})\rightarrow\mathcal{L}(\mathcal{H}_{d}^{\otimes N})$ to correlations in a Bell scenario.}

\begin{defn}
A conditional probability distribution $p(a_1 ... a_N | x_1 ... x_N)$ in a Bell scenario is \textbf{channel-defined} if there exists a channel $\Lambda_{1 ... N}:\mathcal{L}(\mathcal{H}_{m}^{\otimes N})\rightarrow\mathcal{L}(\mathcal{H}_{d}^{\otimes N})$, and some choice of orthonormal bases $\{\ket{x_{j}}\in\mathcal{H}_{m}\}_{x=1:m}$ and $\{\ket{a_{j}}\in\mathcal{H}_{d}\}_{a=1:d}$ for each $j$th party, such that
\begin{equation}\label{eq:NLequiv}
p(a_1 ... a_N | x_1 ... x_N) = \tr{\otimes_{k=1}^N \ket{a_k}\bra{a_k} \, \Lambda_{1 ... N} \left(\otimes_{k=1}^N \ket{x_k}\bra{x_k}\right) }\,.
\end{equation} 
\end{defn}

Given a channel $\Lambda_{1...N}$, it is always possible to define correlations resulting from it for a given choice of input and output orthonormal bases. Figure \ref{f:bellmap} sketches (in the bipartite case) this construction of correlations schematically. 
{Given this connection, we can now directly relate the families of correlations presented earlier to families of channels presented in Section \ref{se:channels} and Appendix \ref{ap:channels}. Although the results pertinent to non-signalling, quantum and classical correlations were noticed in previous works \cite{Beck, Piani, Pao}, we present all proofs in Appendix \ref{appsec21}.}

\begin{prop}\label{probNSprop} 
A conditional probability distribution $p(a_{1}...a_{N}|x_{1}...x_{N})$ is non-signalling if and only if there exists a causal channel $\Lambda_{1...N}^{\mathsf{C}}:\mathcal{L}(\mathcal{H}_{m}^{\otimes N})\rightarrow\mathcal{L}(\mathcal{H}_{d}^{\otimes N})$ such that the distribution is channel-defined by $\Lambda_{1...N}^{\mathsf{C}}$.
\end{prop}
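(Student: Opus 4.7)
The plan is to prove both implications by exploiting the direct correspondence between partial traces on the output of the channel and marginalizations of the conditional distribution, with causality of the channel translating precisely into the independence conditions that define non-signalling.

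For the $(\Leftarrow)$ direction, I would start by assuming $p(a_1 \ldots a_N | x_1 \ldots x_N)$ is channel-defined by some causal $\Lambda^{\mathsf{C}}_{1 \ldots N}$, and show non-signalling directly. Fix any party $j$ and compute the marginal $\sum_{a_j} p(a_1 \ldots a_N | x_1 \ldots x_N)$; using the trace formula \eqref{eq:NLequiv} and $\sum_{a_j}|a_j\rangle\langle a_j| = \id$, this marginal equals $\tr{(\otimes_{k \neq j}|a_k\rangle\langle a_k|) \cdot \Tr_j\!\left[\Lambda^{\mathsf{C}}_{1 \ldots N}(\otimes_k |x_k\rangle\langle x_k|)\right]}$. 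Causality with respect to the bipartition $\{j\}$ versus the remaining parties implies that $\Tr_j \circ \Lambda^{\mathsf{C}}_{1 \ldots N}$, when applied to a product state, gives an output whose dependence on the input on party $j$ is only through its trace; since $\tr{|x_j\rangle\langle x_j|}=1$, the reduced output is independent of $x_j$. Hence the marginal is independent of $x_j$. Iterating this argument over subsets (and invoking multipartite causality for all bipartitions) yields the full non-signalling condition.

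For the $(\Rightarrow)$ direction, given a non-signalling distribution I would explicitly construct a measure-and-prepare (hence automatically CPTP) channel
\begin{equation*}
\Lambda^{\mathsf{C}}_{1 \ldots N}(\rho) := \sum_{x_1 \ldots x_N} \langle x_1 \ldots x_N | \rho | x_1 \ldots x_N\rangle \sum_{a_1 \ldots a_N} p(a_1 \ldots a_N | x_1 \ldots x_N) \bigotimes_{k=1}^N |a_k\rangle\langle a_k| \, ,
\end{equation*}
where the $\{|x_k\rangle\}$ and $\{|a_k\rangle\}$ are the chosen orthonormal bases. Equation \eqref{eq:NLequiv} then holds by inspection because feeding in the basis product state $\otimes_k |x_k\rangle\langle x_k|$ selects a single term of the outer sum. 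The remaining task is to verify causality: applying the above channel to a product state $\rho_{S_A} \otimes \rho_{S_B}$ for any bipartition and tracing out $S_A$'s output produces, after using $\sum_{a_{S_A}} p(a_{S_A} a_{S_B} | x_{S_A} x_{S_B}) = p(a_{S_B} | x_{S_A} x_{S_B})$ and then the non-signalling identity $p(a_{S_B} | x_{S_A} x_{S_B}) = p(a_{S_B} | x_{S_B})$, an output that depends on $\rho_{S_A}$ only through $\tr{\rho_{S_A}}$. This is exactly the semicausality condition $S_A \not\to S_B$, and by symmetry the reverse holds as well, yielding causality.

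I do not foresee a serious obstacle: the construction is a standard classical-to-quantum encoding into a measure-and-prepare channel, and the substantive content of the proof is the one-line conversion of the non-signalling equation into the semicausality of a partial trace. The only mildly delicate point is ensuring that in the multipartite setting the definition of a causal channel truly demands causality across every bipartition (not just the $N=2$ case sketched in the main text); once that is clarified via Appendix \ref{ap:channels}, the argument above applies uniformly, because the measure-and-prepare channel decouples across parties at the level of the classical outcome distribution and so inherits whatever signalling structure the distribution itself possesses.
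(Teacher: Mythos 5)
Your proposal is correct and follows essentially the same route as the paper: the paper proves the assemblage version (Proposition \ref{prop11}) with the measure-and-prepare/Kraus construction $K_{\mathbf{a},\mathbf{x}}=\bigotimes_k|a_k\rangle\langle x_k|$ — which reduces to exactly your canonical channel when Bob's Hilbert space is taken trivial — and verifies causality by the same conversion of output marginalization into a partial trace plus the non-signalling identity. The only cosmetic difference is that you check semicausality on product inputs while the paper's Definition requires it for all inputs, but since product operators span the full operator space and the condition is linear, this is immediate.
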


In Fig.~\ref{f:PRmap} we show how the example of a Popescu-Rohrlich (PR) non-local box can be realised by a causal channel. {The PR non-local box is a device that can violate the Clauser-Horne-Shimony-Holt inequality beyond Tsirelson's bound, and thus cannot be realised by local measurements on an entangled state \cite{PR}. The statistics produced by a PR box, for binary inputs and outputs, are $p(a,b|x,y)=\frac{1}{2}\delta_{a\oplus b}^{xy}$, where $\oplus$ is addition modulo 2.} The channel in this figure is an entanglement-breaking channel \cite{Piani}, and thus its Choi state $\Omega$ is separable across the partition of Alice's and Bob's input and output Hilbert spaces. However, non-localizable causal channels that are not entanglement-breaking have been constructed in the literature \cite{Pao}, and we will refer to one such channel later. How can one detect non-localizability in a particular channel? One possible approach is through the correlations that are channel-defined by that channel, as described in the following result.
\begin{figure}
\begin{center}
\begin{tikzpicture}
\node at (-7.5,0) {$p(a,b|x,y)$};
\node at (-6,0) {$\equiv$};

\shade[draw, thick, ,rounded corners, inner color=white,outer color=gray!50!white] (-5,-1) rectangle (-1,1) ;
\node at (-3,0) {$\Lambda$};
\draw[thick] (-4,-1.5) -- (-4,-1);
\draw[thick] (-2,-1.5) -- (-2,-1);
\draw[thick] (-4,1) -- (-4,1.5);
\draw[thick] (-2,1) -- (-2,1.5);

\draw[thick, rounded corners] (-4,-2) -- (-3.5,-1.5) -- (-4.5, -1.5) -- cycle;
\node at (-4,-1.7) {$x$};
\draw[thick, rounded corners] (-2,-2) -- (-1.5,-1.5) -- (-2.5, -1.5) -- cycle;
\node at (-2,-1.7) {$y$};

\draw[thick, rounded corners] (-4,2) -- (-3.5,1.5) -- (-4.5, 1.5) -- cycle;
\node at (-4,1.7) {$a$};
\draw[thick, rounded corners] (-2,2) -- (-1.5,1.5) -- (-2.5, 1.5) -- cycle;
\node at (-2,1.7) {$b$};

\node at (-4,2.5) {Alice};
\node at (-2,2.5) {Bob};
\end{tikzpicture}
\end{center}
\caption{A conditional probability distribution $p(a,b|x,y)$ resulting from a quantum channel $\Lambda$.}
\label{f:bellmap}
\end{figure}
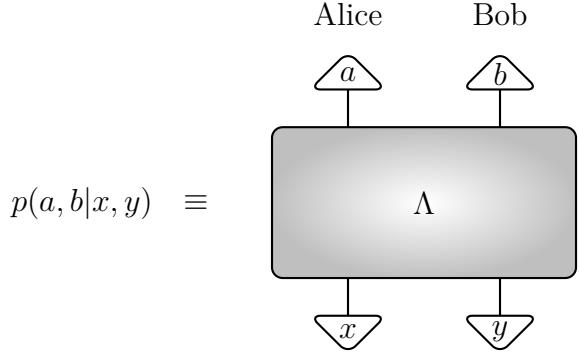
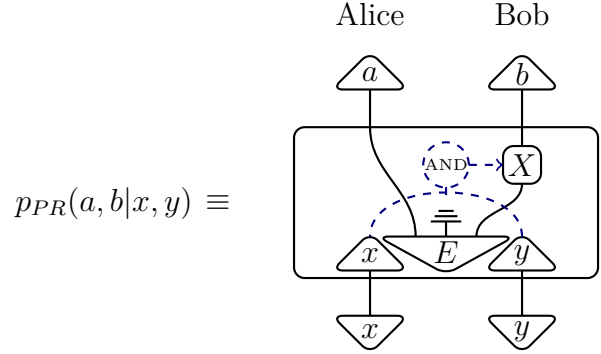
\begin{figure}
\begin{center}
\begin{tikzpicture}
\node at (-7.5,0) {$p_{PR}(a,b|x,y)$};
\node at (-6,0) {$\equiv$};

\draw[thick, ,rounded corners] (-5,-1) rectangle (-1,1) ;
\draw[thick] (-4,-1.5) -- (-4,-0.9);
\draw[thick] (-2,-1.5) -- (-2,-0.9);
\draw[thick] (-4,1) -- (-4,1.5);
\draw[thick] (-2,0.75) -- (-2,1.5);

\draw[thick, rounded corners] (-4,-2) -- (-3.5,-1.5) -- (-4.5, -1.5) -- cycle;
\node at (-4,-1.7) {$x$};
\draw[thick, rounded corners] (-2,-2) -- (-1.5,-1.5) -- (-2.5, -1.5) -- cycle;
\node at (-2,-1.7) {$y$};

\draw[thick, rounded corners] (-4,2) -- (-3.5,1.5) -- (-4.5, 1.5) -- cycle;
\node at (-4,1.7) {$a$};
\draw[thick, rounded corners] (-2,2) -- (-1.5,1.5) -- (-2.5, 1.5) -- cycle;
\node at (-2,1.7) {$b$};

\node at (-4,2.5) {Alice};
\node at (-2,2.5) {Bob};

\draw[thick, rounded corners, xshift=-6cm] (3,-0.95) -- (3.9,-0.45) -- (2.1, -0.45) -- cycle;
\node at (-3,-0.66) {$E$};

\draw[thick] (-3,-0.45) -- (-3,-0.25);
\draw[thick, xshift=-6cm, yshift=-1.1cm] (2.9,0.99) -- (3.1,0.99);
\draw[thick, xshift=-6cm, yshift=-1.1cm] (2.85,0.92) -- (3.15,0.92);
\draw[thick, xshift=-6cm, yshift=-1.1cm] (2.8,0.85) -- (3.2,0.85);

\draw[thick] (-3.4,-0.45) to [out=90,in=-90] (-4,1);
\draw[thick, rounded corners, yshift=-2.4cm] (-4,2) -- (-3.5,1.5) -- (-4.5, 1.5) -- cycle;
\node at (-4,-0.7) {$x$};
\draw[thick, rounded corners, yshift=-2.4cm] (-2,2) -- (-1.5,1.5) -- (-2.5, 1.5) -- cycle;
\node at (-2,-0.7) {$y$};

\draw[thick, rounded corners] (-2.25, 0.25) rectangle (-1.75, 0.75 );
\node at (-2, 0.5) {$X$};

\draw[thick] (-2.6,-0.45) to [out=90,in=-90] (-2,0.25);

\draw[thick, dashed, color=blue!50!black] (-2,-0.45) to [out=90,in=90] (-4,-0.45);

\node at (-3, 0.5) {\tiny{AND}};
\draw[thick, dashed, color=blue!50!black] (-3,0.5) circle [radius=0.3];
\draw[thick, dashed, color=blue!50!black] (-3, 0.1) -- (-3, 0.24);
\draw[thick, dashed, color=blue!50!black, ->] (-2.7, 0.5) -- (-2.25, 0.5);

\end{tikzpicture}
\end{center}
\caption{A causal channel that generates PR box correlations, as shown in Ref. \cite{Beck}. First, the inputs $\ket{x}$ and $\ket{y}$ are measured on the computational basis, obtaining outcomes $x$ and $y$. In addition, a bipartite ancilla state $\rho_{E} = \tfrac{1}{2} \left(\ket{00}\bra{00} + \ket{11}\bra{11} \right)$ is generated by preparing the pure state $\tfrac{1}{\sqrt{2}} (\ket{000}+\ket{111})$ and tracing out the third system. Then, the classical outputs of the first step are compared (gray dashed lines). Whenever $x \cdot y = 1$, an $X$ gate is performed on Bob's subsystem, flipping his qubit. Finally, Alice and Bob project the output state into the computational basis, and so obtain correlations that reproduce a PR box. This whole process can made into a unitary process by replacing the initial measurements with controlled unitaries that change the state of some  ancilla depending on the input. The $AND$ gate and controlled-$X$ gates can then be replaced by a Toffoli gate to get the unitary representation of this channel. Note also that we can interchange Alice and Bob's operations to get another causal channel that gives the PR box correlations.}
\label{f:PRmap}
\end{figure}

\begin{prop}\label{probQprop} 
A conditional probability distribution $p(a_{1}...a_{N}|x_{1}...x_{N})$ is quantum if and only if there exists a localizable channel $\Lambda_{1...N}^{\mathsf{Q}}:\mathcal{L}(\mathcal{H}_{m}^{\otimes N})\rightarrow\mathcal{L}(\mathcal{H}_{d}^{\otimes N})$ such that the distribution is channel-defined by $\Lambda^{\mathsf{Q}}_{1...N}$.
\end{prop}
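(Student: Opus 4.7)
The plan is to prove both implications by translating between the ``measure-in-the-input-basis, measure-in-the-output-basis'' recipe of Eq.~\eqref{eq:NLequiv} and the standard Naimark/Stinespring dilation of a POVM.

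For the direction ($\Rightarrow$), suppose $p(a_1\dots a_N|x_1\dots x_N)=\langle\psi|\bigotimes_{k=1}^{N}M^{k}_{a_k|x_k}|\psi\rangle$ for some shared state $|\psi\rangle\in\mathcal{H}_{E_1}\otimes\dots\otimes\mathcal{H}_{E_N}$ and local POVMs $\{M^{k}_{a|x}\}_{a}$ depending classically on $x$. I would construct a localizable channel $\Lambda^{\mathsf{Q}}_{1\dots N}$ whose $k$th local ingredient is: coherently read $x_k$ from the input in the basis $\{|x\rangle\}$ and use it as a classical control to select the POVM $\{M^{k}_{a|x}\}_{a}$, apply the corresponding quantum instrument on the share $\mathcal{H}_{E_k}$ of $|\psi\rangle$, and write the classical outcome $a_k$ into the output register in the basis $\{|a\rangle\}$. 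By Stinespring/Naimark, this instrument lifts to a unitary $U_{kE}$ on $\mathcal{H}^{k}_{in}\otimes\mathcal{H}_{E_k}\otimes\mathcal{H}^{E_k}_{in}$ where $\mathcal{H}^{E_k}_{in}$ is a private ancilla of party $k$; no party acts on any other party's $\mathcal{H}_{E_{k'}}$ or private ancilla, so the $U_{kE}$ act on pairwise disjoint tensor factors, hence they literally commute. This gives the strict permutation invariance required of a localizable channel (with global ancilla $|\psi\rangle$ on $\bigotimes_k\mathcal{H}_{E_k}$ and trivial $\mathcal{H}_E$). Plugging computational basis states into Eq.~\eqref{eq:NLequiv} and tracing out the ancillae reproduces $p(a_1\dots a_N|x_1\dots x_N)$ by construction.

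For the direction ($\Leftarrow$), let $\Lambda^{\mathsf{Q}}_{1\dots N}$ be localizable with shared pure ancilla $|\psi\rangle$ and local unitaries $U_k$ on $\mathcal{H}^{k}_{in}\otimes\mathcal{H}_{E_k}$ (after folding the private input/output ancillae into $\mathcal{H}_{E_k}$ and using that, in the localizable case, the global register $\mathcal{H}_E$ can be taken trivial since the $U_{kE}$ commute on any input). For each $k$, $x$, $a$ define the operator on $\mathcal{H}_{E_k}$
\begin{equation*}
M^{k}_{a|x}\;:=\;\bigl(\langle a|\otimes I_{E_k}\bigr)\,U_k\,\bigl(|x\rangle\langle x|\otimes I_{E_k}\bigr)\,U_k^{\dagger}\,\bigl(|a\rangle\otimes I_{E_k}\bigr).
\end{equation*}
These operators are manifestly positive, and unitarity of $U_k$ together with completeness of $\{|a\rangle\langle a|\}_a$ implies $\sum_a M^{k}_{a|x}=I_{E_k}$, so $\{M^{k}_{a|x}\}_a$ is a genuine POVM for every $x$. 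Substituting the computational basis inputs into Eq.~\eqref{eq:NLequiv}, commuting the local actions past one another (as they act on disjoint systems), and tracing the local ancillae yields
\begin{equation*}
p(a_1\dots a_N|x_1\dots x_N)=\langle\psi|\,M^{1}_{a_1|x_1}\otimes\dots\otimes M^{N}_{a_N|x_N}\,|\psi\rangle,
\end{equation*}
which is by definition a quantum correlation.

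The proof is essentially bookkeeping; the one subtlety I would treat carefully is the equivalence between the two natural decompositions of a localizable channel, namely the one with a nontrivial ``global'' factor $\mathcal{H}_E$ used in Def.~\ref{almostloc} and the ``disjoint-factor'' picture in which each $U_k$ acts on its own tensor factor. For the $(\Leftarrow)$ direction I need the latter; this is where the exact-commutativity (as opposed to merely existential permutation invariance on a fixed $|\psi\rangle$, as in almost localizable channels) of the local unitaries is used, and it is precisely this point that will later make the contrast with Proposition~\ref{stalmost} and with non-localizable causal channels meaningful.
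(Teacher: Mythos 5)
Your route is the same as the paper's: the paper proves the steering analogue (Prop.~\ref{assQprop}) by exactly the controlled-unitary construction you give for the ($\Rightarrow$) direction, handles ($\Leftarrow$) by observing that local unitaries followed by local basis measurements on a shared entangled ancilla amount to local measurements on that entangled state, and obtains Prop.~\ref{probQprop} by taking the trusted system trivial. So the comparison reduces to checking your details, and there is one concrete error in your ($\Leftarrow$) direction: the displayed POVM element has the unitary conjugation the wrong way round. Writing $C_{a|x}:=(\bra{a}\otimes I_{E_k})\,U_k\,(\ket{x}\otimes I_{E_k})$, your operator is $M^{k}_{a|x}=C_{a|x}C_{a|x}^{\dagger}$, whereas substituting computational-basis inputs into Eq.~\eqref{eq:NLequiv} gives
\begin{equation*}
p(a_1\dots a_N|x_1\dots x_N)=\bra{\psi}\,\bigotimes_{k=1}^{N}C_{a_k|x_k}^{\dagger}C_{a_k|x_k}\,\ket{\psi},
\end{equation*}
so the correct element is $N^{k}_{a|x}:=C_{a|x}^{\dagger}C_{a|x}=(\bra{x}\otimes I_{E_k})\,U_k^{\dagger}\,(\ket{a}\bra{a}\otimes I_{E_k})\,U_k\,(\ket{x}\otimes I_{E_k})$. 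Your completeness argument is valid for $N^{k}_{a|x}$, since $\sum_a N^{k}_{a|x}=(\bra{x}\otimes I_{E_k})U_k^{\dagger}U_k(\ket{x}\otimes I_{E_k})=I_{E_k}$, but it fails for $M^{k}_{a|x}$: one finds $\sum_a M^{k}_{a|x}=\Tr_{out}\{U_k(\ket{x}\bra{x}\otimes I_{E_k})U_k^{\dagger}\}$, which for $U_k$ a SWAP equals $\dim(\cH^{k}_{out})\,\ket{x}\bra{x}\neq I_{E_k}$, and $\bra{\psi}\bigotimes_k M^{k}_{a_k|x_k}\ket{\psi}$ does not reproduce the channel-defined probability. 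With $N^{k}_{a|x}$ substituted for $M^{k}_{a|x}$ the rest of your argument --- including the passage from the global-ancilla form of Def.~\ref{def:locuni} to the disjoint-tensor-factor form of Def.~\ref{def:loc}, which the paper asserts are equivalent in finite dimension --- goes through, and your ($\Rightarrow$) direction is correct as written.
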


In Fig.~\ref{f:mapsinglet} we present the example of a localizable channel that channel-defines the correlations $p_{sing}(a,b|x,y)$ which give Tsirelson's bound for the Clauser-Horne-Shimony-Holt (CHSH) inequality \cite{chsh}, i.e. the maximal violation for local measurements on an entangled state. We present the channel in terms of its unitary representation.

\begin{figure}
\begin{center}
\begin{tikzpicture}

\node at (-7.5,0) {$p_{sing}(a,b|x,y)$};
\node at (-6,0) {$\equiv$};

\draw[thick, ,rounded corners] (-5,-1) rectangle (-1,1) ;
\draw[thick] (-4,-1.5) -- (-4,-0.9);
\draw[thick] (-2,-1.5) -- (-2,-1);
\draw[thick] (-4,1) -- (-4,1.5);
\draw[thick] (-2,0.75) -- (-2,1.5);

\draw[thick, rounded corners] (-4,-2) -- (-3.5,-1.5) -- (-4.5, -1.5) -- cycle;
\node at (-4,-1.7) {$x$};
\draw[thick, rounded corners] (-2,-2) -- (-1.5,-1.5) -- (-2.5, -1.5) -- cycle;
\node at (-2,-1.7) {$y$};

\draw[thick, rounded corners] (-4,2) -- (-3.5,1.5) -- (-4.5, 1.5) -- cycle;
\node at (-4,1.7) {$a$};
\draw[thick, rounded corners] (-2,2) -- (-1.5,1.5) -- (-2.5, 1.5) -- cycle;
\node at (-2,1.7) {$b$};

\node at (-4,2.5) {Alice};
\node at (-2,2.5) {Bob};

\draw[thick, rounded corners, xshift=-6cm] (3,-0.95) -- (3.9,-0.45) -- (2.1, -0.45) -- cycle;
\node[xshift=-6cm] at (3,-0.66) {$E$};

\draw[thick, rounded corners, xshift=-6cm] (1.5,-0.35) rectangle (2.9,0.7);
\draw[thick, rounded corners, xshift=-6cm] (3.1,-0.35) rectangle (4.5,0.7);

\draw[thick, xshift=-6cm] (2,-1.5) -- (2,-0.35);
\draw[thick, xshift=-6cm] (2,0.7) -- (2,1.5);
\draw[thick, xshift=-6cm] (4,-1.5) -- (4,-0.35);
\draw[thick, xshift=-6cm] (4,0.7) -- (4,1.5);

\draw[thick, xshift=-6cm] (2.5,-0.45) -- (2.5,-0.2);
\draw[thick, xshift=-6cm] (3.5,-0.45) -- (3.5,-0.2);

\draw[thick, xshift=-6cm] (2.5,0.7) -- (2.5,0.81);
\draw[thick, xshift=-6.5cm, yshift=-0.05cm] (2.9,0.99) -- (3.1,0.99);
\draw[thick, xshift=-6.5cm, yshift=-0.05cm] (2.85,0.92) -- (3.15,0.92);
\draw[thick, xshift=-6.5cm, yshift=-0.05cm] (2.8,0.85) -- (3.2,0.85);

\draw[thick, xshift=-6cm] (3.5,0.7) -- (3.5,0.81);
\draw[thick, xshift=-5.5cm, yshift=-0.05cm] (2.9,0.99) -- (3.1,0.99);
\draw[thick, xshift=-5.5cm, yshift=-0.05cm] (2.85,0.92) -- (3.15,0.92);
\draw[thick, xshift=-5.5cm, yshift=-0.05cm] (2.8,0.85) -- (3.2,0.85);

\draw[thick, rounded corners] (-3.75,-0.2) rectangle (-3.25,0.35);
\node at (-3.5,0.05) {$U_t$};
\draw[thick] (-4, -0.35) -- (-4, 0.05);
\node[draw,shape=circle,fill, scale=0.5] at (-4, 0.05) {};
\draw[thick] (-4, 0.05) -- (-3.75, 0.05);
\draw[thick] (-4, 0.05) to [out=90,in=-90] (-3.5, 0.7);
\draw[thick] (-3.5, 0.35) to [out=90,in=-90] (-4, 0.7);

\draw[thick, rounded corners] (-2.75,-0.2) rectangle (-2.25,0.35);
\node at (-2.5,0.05) {$V_t$};
\draw[thick] (-2, -0.35) -- (-2, 0.05);
\node[draw,shape=circle,fill, scale=0.5] at (-2, 0.05) {};
\draw[thick] (-2, 0.05) -- (-2.25, 0.05);
\draw[thick] (-2, 0.05) to [out=90,in=-90] (-2.5, 0.7);
\draw[thick] (-2.5, 0.35) to [out=90,in=-90] (-2, 0.7);

\end{tikzpicture}
\end{center}
\caption{A localizable map that generates singlet correlations that violate the CHSH inequality maximally, i.e. up to Tsirelson's bound $2\sqrt{2}$. All basis states are in the computational basis. First each party performs a unitary operation on their share of the ancilla $E$ (initialised in state $\ket{E}=\tfrac{\ket{00}+\ket{11}}{\sqrt{2}}$) controlled on their input qubits $\ket{x}$ and $\ket{y}$. The controlled unitaries inside the boxes are $U_{t} = \ket{0}\bra{0}_c \otimes H_t + \ket{1}\bra{1}_c \otimes \id _t$ and $V_{t} = \ket{0}\bra{0}_c \otimes R_{Y,t}(-\tfrac{\pi}{4}) + \ket{1}\bra{1}_c \otimes R_{Y,t}(\tfrac{\pi}{4})$, for $H$ and $R_{Y}$ being a Hadamard and a rotation about the $Y$-axis in the Bloch sphere respectively. The indices $c$ and $t$ denote the control and target qubits respectively.}
\label{f:mapsinglet}
\end{figure}
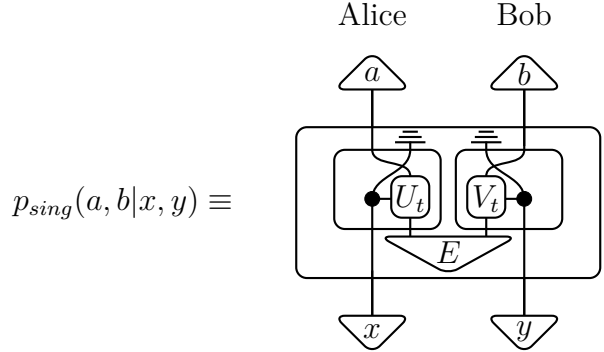

Given a channel $\Lambda$, if the correlations that are channel-defined by it are not compatible with quantum correlations, i.e. they are post-quantum correlations, then the channel was not localizable. For example, if one obtains correlations that are channel-defined by a channel $\Lambda$, and then computes their CHSH value, if this exceeds Tsirelson's bound, the channel $\Lambda$ is non-localizable. Indeed, this is how it is shown that the channel in Fig. \ref{f:PRmap} is non-localizable, as well as the channel given in Ref. \cite{Pao}.

\begin{prop}\label{probLHSprop} 
A conditional probability distribution $p(a_{1}...a_{N}|x_{1}...x_{N})$ is classical if and only if there exists a local channel $\Lambda_{1...N}^{\mathsf{L}}:\mathcal{L}(\mathcal{H}_{m}^{\otimes N})\rightarrow\mathcal{L}(\mathcal{H}_{d}^{\otimes N})$ such that the distribution is channel-defined by $\Lambda_{1...N}^{\mathsf{L}}$.
\end{prop}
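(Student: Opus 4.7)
The plan is to mirror Propositions \ref{probNSprop} and \ref{probQprop} and simply replace the shared entanglement of the localizable case with shared classical randomness. I will take as the definition of a \emph{local} channel (cf. Appendix~\ref{ap:channels}) the convex combination of product channels
$$
\Lambda^{\mathsf{L}}_{1\ldots N} \;=\; \sum_{\lambda} p(\lambda) \bigotimes_{k=1}^{N} \Lambda^{k}_{\lambda}\,,
$$
where $\{p(\lambda)\}$ is a probability distribution over a classical hidden variable $\lambda$ and each $\Lambda^{k}_{\lambda}:\mathcal{L}(\mathcal{H}^{k}_{in})\to\mathcal{L}(\mathcal{H}^{k}_{out})$ is a single-party CPTP map. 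This is precisely the channel-level analogue of an LHV model.

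For the ``if'' direction, I substitute this decomposition into Eq.~\eqref{eq:NLequiv}. Since the trace of a tensor product of operators factorises, I obtain
$$
p(a_1 \ldots a_N | x_1 \ldots x_N) \;=\; \sum_{\lambda} p(\lambda) \prod_{k=1}^{N} \tr{\ket{a_k}\bra{a_k}\, \Lambda^{k}_{\lambda}(\ket{x_k}\bra{x_k})}\,.
$$
Defining $p(a_k | x_k, \lambda) := \tr{\ket{a_k}\bra{a_k}\, \Lambda^{k}_{\lambda}(\ket{x_k}\bra{x_k})}$ yields a valid conditional distribution for each $k$ and each $\lambda$ (since each $\Lambda^{k}_{\lambda}$ is CPTP and the outcome projectors sum to the identity), so the expression is manifestly of the Bell-local form and $p$ is classical.

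For the ``only if'' direction, given a classical distribution $p(a_1\ldots a_N|x_1\ldots x_N) = \sum_{\lambda} p(\lambda) \prod_{k} p(a_k|x_k,\lambda)$, I construct a witnessing local channel explicitly. For each party $k$ and each value $\lambda$, let
$$
\Lambda^{k}_{\lambda}(\rho) \;:=\; \sum_{x_k, a_k} p(a_k | x_k, \lambda)\, \bra{x_k}\rho\ket{x_k}\, \ket{a_k}\bra{a_k}\,,
$$
which is CPTP as the composition of a projective measurement in the input basis $\{\ket{x_k}\}$ with a classically-controlled preparation in the output basis $\{\ket{a_k}\}$. Setting $\Lambda^{\mathsf{L}}_{1\ldots N} := \sum_{\lambda} p(\lambda)\, \bigotimes_{k} \Lambda^{k}_{\lambda}$ produces a local channel by construction, and substituting into Eq.~\eqref{eq:NLequiv} reproduces the target distribution via the same factorisation computation used in the ``if'' direction.

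I do not expect any substantive obstacle: the statement is essentially a dictionary translation between Bell-local correlations and local channels, and both directions reduce to the observation that a product channel evaluated against a product projector factorises into a product of single-party traces. The only care needed is to check that the measure-and-prepare building block is manifestly CPTP (trivially, as a convex sum of rank-one Kraus operators $\ket{a_k}\bra{x_k}$ weighted by $\sqrt{p(a_k|x_k,\lambda)}$) and to confirm that ``local'' in the paper's formal sense coincides with the convex-combination-of-products form used above, which I take from Appendix~\ref{ap:channels}.
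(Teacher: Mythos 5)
Your proof is correct, but it does not follow the paper's route: the paper proves this proposition as a corollary of the steering version (Proposition \ref{assLHSprop}), arguing that a classical distribution is reproduced by local projective measurements on a separable ancilla and then dilating those measurements to controlled unitaries $U_{j}=\sum_{x_{j},a_{j}}M_{a_{j}|x_{j}}\otimes|a_{j}\rangle\langle 0|\otimes|x_{j}\rangle\langle x_{j}|$ acting on the separable state, whereas you work directly at the level of channels, characterising local channels as convex combinations of product channels and exhibiting explicit measure-and-prepare Kraus decompositions. Your version is more self-contained and avoids the unitary dilation entirely; the paper's version buys uniformity, since the same dilation argument handles the classical, quantum and non-signalling cases (and the steering analogues) by only changing the ancilla state. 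Two small points would tighten your write-up. First, the appendix defines a local channel as a localizable channel $\rho\mapsto\bigotimes_{k}\Lambda_{R_{k}}[\rho\otimes\sigma_{R}]$ with $\sigma_{R}$ separable, not as a convex combination of product channels; the equivalence is easy (expand $\sigma_{R}=\sum_{\lambda}p_{\lambda}\bigotimes_{k}\phi_{\lambda}^{k}$ and absorb $\phi_{\lambda}^{k}$ into $\Lambda_{R_{k}}$ to get $\Lambda^{k}_{\lambda}$; conversely distribute a classical flag $|\lambda\rangle\langle\lambda|^{\otimes N}$ as the separable ancilla and let each $\Lambda_{R_{k}}$ measure it and apply $\Lambda^{k}_{\lambda}$) but should be stated, since you are invoking a definition the appendix does not literally give. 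Second, the paper's classical correlations are defined with deterministic response functions $D^{j}_{\lambda}\in\{0,1\}$, while your ``if'' direction produces stochastic ones $p(a_{k}|x_{k},\lambda)$; a one-line appeal to the standard refinement of $\lambda$ into deterministic strategies closes that gap. Neither issue is substantive.
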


It should be noted that there can exist non-local but localizable channels that will only channel-define classical correlations. A simple bipartite example of such a channel is one where the maximally entangled two-qubit state is prepared in the ancilla register, the input systems are discarded (or traced out), and each party's output is one half of the two-qubit register. For this channel, correlations are generated by each party is measuring one half of a maximally entangled state in a fixed basis, which can be reproduced by classical correlations. 

Finally, we now address the set of almost quantum correlations. We have included the proof of the following result, since it will be useful for our subsequent discussion.

\begin{prop} \label{thmalmost}
A conditional probability distribution $p(a_{1}...a_{N}|x_{1}...x_{N})$ is almost quantum if and only if there exists an almost localizable channel $\Lambda^{\tilde{\mathsf{Q}}}_{1...N}:\mathcal{L}(\mathcal{H}_{m}^{\otimes N})\rightarrow\mathcal{L}(\mathcal{H}_{d}^{\otimes N})$ such that the distribution is channel-defined by $\Lambda_{1...N}^{\tilde{\mathsf{Q}}}$.
\end{prop}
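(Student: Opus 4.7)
The proof must establish a bidirectional correspondence between the almost quantum property (defined in Ref.~\cite{aqp} via a state $|\psi\rangle$ and projective measurements $\{E^{x_j}_{a_j}\}$ whose products on $|\psi\rangle$ are invariant under permutations of the parties) and the almost localizable property of channels (per Def.~\ref{almostloc}, a Stinespring-type representation with unitaries whose ordering on a specific ancilla state is permutation-invariant). The parallel between the two invariance conditions strongly suggests a direct dilation argument in both directions, mirroring the classical and quantum cases of Props.~\ref{probNSprop}, \ref{probQprop}, \ref{probLHSprop}.

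For the forward direction, I would take an almost quantum representation of $p$, with state $|\psi\rangle$ living in some Hilbert space $\mathcal{H}_E$ and projectors $\{E^{x_j}_{a_j}\}$ satisfying $\sum_{a_j} E^{x_j}_{a_j} = \id$ together with the reorderability property $E^{x_{\pi(1)}}_{a_{\pi(1)}}\cdots E^{x_{\pi(N)}}_{a_{\pi(N)}}|\psi\rangle = E^{x_1}_{a_1}\cdots E^{x_N}_{a_N}|\psi\rangle$ for every permutation $\pi$. I would then define each unitary $U_{jE}$ via its action on the computational input basis,
\begin{equation*}
U_{jE}\bigl(|x_j\rangle_{in}\otimes|\phi\rangle_{E}\otimes|0\rangle_{E_j,in}\bigr)\;=\;\sum_{a_j}|a_j\rangle_{out}\otimes E^{x_j}_{a_j}|\phi\rangle_E \otimes |x_j\rangle_{E_j,out},
\end{equation*}
with the $|x_j\rangle_{E_j,out}$ register ensuring orthonormality of the images (so the map extends to a unitary on the full input space). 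Initialising the local ancillas in $|0\rangle$ (absorbed into $|\psi\rangle$), a direct computation shows that $\prod_j U_{jE}$ applied to $(\otimes_k|x_k\rangle\langle x_k|)\otimes|\psi\rangle\langle\psi|$ produces $\sum_{\vec a}|\vec a\rangle E^{x_N}_{a_N}\cdots E^{x_1}_{a_1}|\psi\rangle|x_1\cdots x_N\rangle_{E}$. The permutation invariance required by Def.~\ref{almostloc} then follows precisely because the reorderability of the $E^{x_j}_{a_j}$ on $|\psi\rangle$ makes the resulting state independent of the order in which the $U_{jE}$ were applied. Projecting onto $\otimes_k|a_k\rangle\langle a_k|$ and tracing out the ancillas gives $\|E^{x_N}_{a_N}\cdots E^{x_1}_{a_1}|\psi\rangle\|^2$, which equals $\langle\psi|E^{x_1}_{a_1}\cdots E^{x_N}_{a_N}|\psi\rangle$ using idempotency and reorderability, matching the almost quantum probability.

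For the converse, given an almost localizable channel $\Lambda$ that channel-defines $p$, I would extract from the Stinespring data of each $U_{jE}$ a family of Kraus-type operators on $\mathcal{H}_E \otimes \mathcal{H}_{out}^{E_j}$ by sandwiching $U_{jE}$ between $|x_j\rangle_{in}\otimes|0\rangle_{E_j,in}$ and $\langle a_j|_{out}$. These operators form a POVM for each fixed $x_j$; applying Naimark's dilation on a suitable enlargement of $\mathcal{H}_E$ (augmented by each party's output ancilla registers, with the combined $|\tilde\psi\rangle = |\psi\rangle\otimes\bigotimes_j|0\rangle_{E_j,in}$) produces genuine projectors $\tilde E^{x_j}_{a_j}$ summing to identity. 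The reduction of the channel action to the output basis, together with the permutation invariance of $\prod_j U_{jE}$ on $(\otimes_k|x_k\rangle\langle x_k|)\otimes|\psi\rangle\langle\psi|$, then translates termwise into the reorderability of products $\tilde E^{x_{\pi(1)}}_{a_{\pi(1)}}\cdots \tilde E^{x_{\pi(N)}}_{a_{\pi(N)}}|\tilde\psi\rangle$, since each permuted unitary ordering reads off a corresponding permuted product of extracted projectors on $|\tilde\psi\rangle$. This yields an almost quantum representation of $p$.

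The main obstacle is the backward direction, specifically converting POVM-like objects extracted from the unitary dilation into PVMs that still act on a common Hilbert space with a shared state, while faithfully transferring the permutation-invariance condition of the channel into the reorderability condition of almost quantum. One has to keep track of the local ancilla registers $E_j$ carefully: different unitary orderings entangle the $E_j$ systems with $\mathcal{H}_E$ in different ways, and the equality demanded by Def.~\ref{almostloc} must be matched with the purely algebraic reorderability on $|\psi\rangle$ used in \cite{aqp}. I expect that the cleanest way to handle this is to fix a canonical basis in each $\mathcal{H}_{out}^{E_j}$ and show that the components of the resulting pure state, when read off in any order of the unitaries, produce identical $\mathcal{H}_E$-vectors; the forward construction is in fact designed to make this tautological, and the general case follows because any almost localizable dilation is, up to an isometric embedding, of the form built in the forward direction.
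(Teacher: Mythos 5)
Your proposal is correct and follows essentially the same route as the paper's proof: in the forward direction you dilate the almost-quantum projectors into controlled unitaries acting on the ancilla state $\ket{\psi}$, with the state-dependent reorderability of the projectors translating directly into the permutation-invariance condition of Def.~\ref{almostloc} (the paper's explicit dilation uses an extra per-party qudit with involution operators and a swap, yours writes $\ket{a_j}$ into the output register and stores $\ket{x_j}$ in a local ancilla, but these are cosmetically different choices of the same isometry), and in the converse direction you extract projective measurements from the unitary dilation acting on the fixed ancilla state, exactly as the paper sketches. Your discussion of the converse is in fact somewhat more careful than the paper's, which asserts the extraction of suitable projectors without addressing the dilation bookkeeping you flag.
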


\begin{proof}
First we take a probability distribution $p(a_{1}...a_{N}|x_{1}...x_{N})$ that is in the set of almost quantum correlations.  Let $\Pi^{(i)}_{a_i|x_i}$ and $\ket{\psi}$ be the projectors and state that realise\footnote{The state $\rho$ that realises an almost quantum correlation can be taken to be a pure state, without loss of generality when there are no restrictions on the dimension of the Hilbert space.} this distribution, which have an associated Hilbert space $\mathcal{H}_{d^\prime}$ of dimension $d^\prime$. 
From these we will define an almost localizable channel $\Lambda^{\tilde{\mathsf{Q}}}_{1...N}:\mathcal{L}(\mathcal{H}_{m}^{\otimes N})\rightarrow\mathcal{L}(\mathcal{H}_{d}^{\otimes N})$ such that the correlations are channel-defined for this channel. Let the ancilla be a quantum system initialised on the state $\ket{\Psi}= \ket{\psi}\otimes  \ket{0} ^{\otimes N}$, where these extra $N$ systems are qu$d$its in the Hilbert space $\mathcal{H}_{d}$, one for each party, initialised in $\ket{0}$. Now define the following operators for each party:
\begin{align}\label{eq:theOaq}
O^{(i)}_x = \sum_{j=1}^d \Pi^{(i)}_{j|x} \otimes A_{j} \,,
\end{align}
where $A_1 = \id_{d} $ and $A_j = \ket{j}\bra{0} + \ket{0}\bra{j} + \sum_{\alpha=1, \alpha \neq j}^{d-1} \ket{\alpha}\bra{\alpha}$ for $j = 1:d-1$. The operator $\Pi^{(i)}_{j|x}$ acts on $\mathcal{H}_{d^\prime}$, $A_{j}$ on the ancillary qu$d$it that corresponds to party $i$.
Then define, for each party, controlled unitary operations on their input qu$m$it in Hilbert space $\mathcal{H}_{m}$ together with their ancillary system as follows:
\begin{align}\label{eq:AQuni}
U^{(i)} = \sum_{x=1}^m \ket{x}\bra{x} \otimes O^{(i)}_x. 
\end{align}
Now the almost localizable channel can be defined, as in Fig.~\ref{f:theAQconstr}. Each party has as input system a qu$m$it, the unitary representation of the channel is given by the $U^{(i)}$ followed by a swap on the ancilla qu$d$it and the input qu$m$it for each party. 
Finally, the output system for each party is their corresponding qu$d$it, and the input qumits and the ancillary subsystem on $\mathcal{H}_{d^\prime}$ are traced out. The commutation relations of the $ \Pi^{(i)}_{j|x}$ on the state $\ket{\psi}$ imply that the unitaries $U^{i}$ associated with the different parties commute on the ancilla, thus implying the channel is almost localizable. It is straightforward to check that the correlation $p(a_{1}...a_{N}|x_{1}...x_{N})$ is recovered by the parties inputting their measurement settings $\ket{x_i}$, and measuring their output systems in the basis $\{\ket{a_{i}}\}$. 

So far we have seen that an almost localizable channel can be constructed from almost quantum correlations. Given an almost localizable channel, it is relatively straightforward to see that the channel-defined correlations will be almost quantum. Note that the action at each $j$th party of preparing an input state, followed by a unitary and then a projective measurement can be simulated by the projective measurements $\Pi_{a_{j}|x_{j}}$ on the ancilla state $|E\rangle$ as per the definition of almost localizable channels. These projectors will then satisfy the properties required to produce almost quantum correlations given the definition of an almost localizable channel.
\end{proof}

\begin{figure}
\begin{center}
\begin{tikzpicture}
\draw[thick, rounded corners] (-0.5,-1) rectangle (4.5,1.6);
\draw[thick, rounded corners] (2,-0.99) -- (2.5,-0.49) -- (1.5, -0.49) -- cycle;
\node at (2,-0.69) {\tiny{$\ket{\psi}$}};
\draw[thick, rounded corners, xshift=-1cm] (2,-0.99) -- (2.5,-0.49) -- (1.5, -0.49) -- cycle;
\node[xshift=-1cm] at (2,-0.69) {\tiny{$\ket{0}$}};
\draw[thick, rounded corners, xshift=1cm] (2,-0.99) -- (2.5,-0.49) -- (1.5, -0.49) -- cycle;
\node[xshift=1cm] at (2,-0.69) {\tiny{$\ket{0}$}};

\draw[thick, rounded corners] (0,-0.4) rectangle (2.5,0.4);
\draw[thick, rounded corners] (1.5,0.5) rectangle (4,1.3);
\node at (1.5,-0.1) {\tiny{$O_x^{(A)}$}};
\node at (2.5,0.8) {\tiny{$O_y^{(B)}$}};

\draw[thick, rounded corners] (0.8,-0.3) rectangle (2.3,0.15);
\draw[thick, rounded corners] (1.7,0.6) rectangle (3.2,1.05);

\draw[thick] (0.5,-1.5) -- (0.5,0.15);
\draw[thick] (0.5,0.4) -- (0.5,2);
\draw[thick] (3.5,-1.5) -- (3.5,1.05);
\draw[thick] (3.5,1.3) -- (3.5,2);

\draw[thick,xshift=-1cm] (2,-0.49) -- (2,-0.3);
\draw[thick,xshift=1cm] (2,-0.49) -- (2,0.6);

\node[draw, circle, fill, scale=0.5] at (0.5,-0.1) {};
\node[draw, circle, fill, scale=0.5] at (3.5,0.8) {};
\draw[thick] (0.5,-0.1) -- (0.8,-0.1);
\draw[thick] (3.5,0.8) -- (3.2,0.8);
\draw[thick] (1,0.15) to [out=90, in=-90] (0.5,0.4);
\draw[thick] (0.5,0.15) to [out=90, in=-90] (1,0.4);

\draw[thick] (3,1.05) to [out=90, in=-90] (3.5,1.3);
\draw[thick] (3.5,1.05) to [out=90, in=-90] (3,1.3);
\draw[thick, xshift=-1cm] (2,0.4) -- (2,1.45);
\draw[thick, yshift=0.6cm, xshift=-1cm] (1.8,0.85) -- (2.2,0.85);
\draw[thick, yshift=0.6cm, xshift=-1cm] (1.85,0.9) -- (2.15,0.9);
\draw[thick, yshift=0.6cm, xshift=-1cm] (1.9,0.95) -- (2.1,0.95);
\draw[thick, xshift=1cm] (2,1.3) -- (2,1.45);
\draw[thick, yshift=0.6cm, xshift=1cm] (1.8,0.85) -- (2.2,0.85);
\draw[thick, yshift=0.6cm, xshift=1cm] (1.85,0.9) -- (2.15,0.9);
\draw[thick, yshift=0.6cm, xshift=1cm] (1.9,0.95) -- (2.1,0.95);

\draw[thick] (2,-0.49) -- (2,-0.3);
\draw[thick] (2,0.15) -- (2,0.6);
\draw[thick] (2,1.05) -- (2,1.45);
\draw[thick, yshift=0.6cm] (1.8,0.85) -- (2.2,0.85);
\draw[thick, yshift=0.6cm] (1.85,0.9) -- (2.15,0.9);
\draw[thick, yshift=0.6cm] (1.9,0.95) -- (2.1,0.95);

\draw[thick, rounded corners] (0.5,-2) -- (0,-1.5) -- (1, -1.5) -- cycle;
\node at (0.5,-1.7) {$x$};
\draw[thick, rounded corners, xshift=5.5cm] (-2,-2) -- (-1.5,-1.5) -- (-2.5, -1.5) -- cycle;
\node[xshift=5.5cm] at (-2,-1.7) {$y$};

\draw[thick, rounded corners, xshift=4.5cm, yshift=0.5cm] (-4,2) -- (-3.5,1.5) -- (-4.5, 1.5) -- cycle;
\node[xshift=4.5cm, yshift=0.5cm] at (-4,1.7) {$a$};
\draw[thick, rounded corners,xshift=5.5cm, yshift=0.5cm] (-2,2) -- (-1.5,1.5) -- (-2.5, 1.5) -- cycle;
\node[xshift=5.5cm, yshift=0.5cm] at (-2,1.7) {$b$};

\node at (0.5,3) {Alice};
\node at (3.5,3) {Bob};

\node at (-1,0.2) {$=$};
\node at (-2.5,0.2) {$p_{\text{AQ}}(a,b|x,y)$};

\end{tikzpicture}
\end{center}
\caption{An almost quantum map constructed from the realisation of an almost quantum correlation. For simplicity we depict the case of two parties, Alice and Bob. First each party performs a unitary operation on their share of the ancilla $R$ (initialised in state $\ket{\Psi}=\ket{0}\ket{\psi}\ket{0}$) controlled on their input qumits $\ket{x}$ and $\ket{y}$ (see eqs.~\eqref{eq:AQuni} and \eqref{eq:theOaq}). Then, the input systems together with the part of the ancilla on Hilbert space $\cH_{d^\prime}$ are traced out, and the ancilla qudits measured on the computational basis $\ket{a}$ and $\ket{b}$. }
\label{f:theAQconstr}
\end{figure}
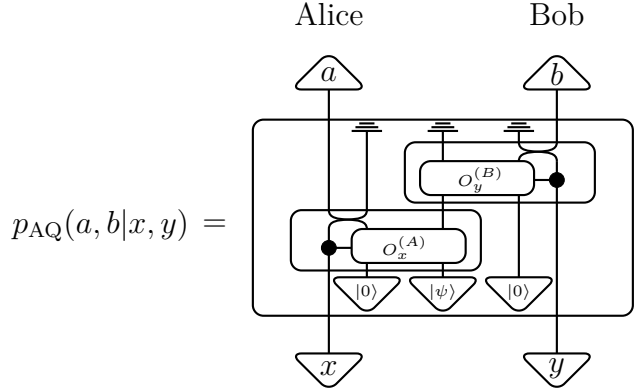

\subsection{Connections between channels and correlations}\label{construct}

{In this section, we first comment on how, given some correlations in a Bell scenario, one can find a canonical channel that channel-defines them. Then, we elaborate on further ways one may use a channel to generate correlations. }

{We have previously considered how correlations result from channels. One can then readily ask how channels can be constructed once we are given a set of correlations. Given correlations $p(a_{1}...a_{N}|x_{1}...x_{N})$, there is a \textit{canonical channel} that channel-defines them, which amounts to a controlled preparation of a quantum system. In particular, for a given choice of input and output orthonormal bases $\{|a_{j}\rangle\}$, $\{|x_{j}\rangle\}$ for all $N$ parties, such a channel is defined as
\begin{equation}\label{canonical}
\Lambda^{c}_{1...N}(\cdot)=\sum_{x_{1},...,x_{N}}\sum_{a_{1},...,a_{N}}p(a_{1}...a_{N}|x_{1}...x_{N})|a_{1}...a_{N}\rangle\langle x_{1}...x_{N}|(\cdot)|x_{1}...x_{N}\rangle\langle a_{1}...a_{N}|.
\end{equation}}
{It can be readily seen that $\Lambda^{c}_{1...N}$ channel defines the correlations $p(a_{1}...a_{N}|x_{1}...x_{N})$ for the choice of input and output orthonormal bases $\{|a_{j}\rangle\}$, $\{|x_{j}\rangle\}$ for all $N$ parties. We also remark that one can take any channel $\Lambda$ that channel-defines some correlations with given preparations and measurements, and then construct the canonical channel $\Lambda^{c}$ from $\Lambda$ with those preparations and measurements. This intuitively amounts of the taking the original channel and applying fully decoherent channels to the inputs and outputs.}

{Now, let us elaborate further on how correlations may arise from the use of quantum channels.}
If we are given a particular channel $\Lambda_{1...N}$, indeed, choosing a set of orthonormal bases such that correlations are channel-defined by $\Lambda_{1...N}$ may not be \textit{optimal} for witnessing non-locality. That is, given access to a channel, correlations can be generated through more elaborate means than just preparing a state from an orthonormal basis, plugging it into a local port of the channel, and then measuring in another basis. For example, one party could prepare a bipartite system and send one half of it into the channel, then after the system has emerged from the channel, one can jointly measure this output and the remaining half of the bipartite system\footnote{A more general strategy would be to apply an instrument with a quantum memory to the channel. That is, preparing a bipartite state, and then sequentially using the channel, in between each use a party applies an operation to the output of the channel and the other half of the bipartite system (stored in a memory). This would be in analogy to performing a Bell test through collective measurements on a number of quantum states.}. More formally, with each party $j$, in addition to the Hilbert spaces associated with the $j$th input and output ports of the channel, we associate an auxiliary Hilbert space $\mathcal{H}_{aux}^{j}$. Then for a given input $x_{j}$ for the $j$th party, without loss of generality, this party can prepare a state $\rho_{x_j}\in\mathcal{D}(\mathcal{H}_{in}^{j}\otimes\mathcal{H}_{aux}^{j})$, and then the output $a_j$ is associated with some POVM element $M_{a_j}\in\mathcal{L}(\mathcal{H}_{out}^{j}\otimes\mathcal{H}_{aux}^{j})$, such that $\sum_{a_j}M_{a_j}=\id$ and every element $M_{a_j}$ is a positive operator\footnote{We do not need to explicitly consider choices of different measurements for $\{M_{a}\}$, since the state $\rho_{x}$ carries the information about the input.}. Putting this together, given a channel $\Lambda_{1...N}$, correlations are generated as:
\begin{equation}\label{gencorr}
p(a_{1},a_{2},...,a_{N}|x_{1},x_2,...,x_{N})=\Tr\left\{\bigotimes_{j=1}^{N}M_{a_{j}}\left(\Lambda_{1...N}\otimes\id_{aux}(\bigotimes_{j=1}^{N}\rho_{x_{j}})\right)\right\},
\end{equation}
where $\id_{aux}$ is the identity operator acting on all Hilbert spaces $\mathcal{H}_{aux}^{j}$. {This allows us to explore whether a particular channel may result in non-local correlations, as we now formalise in the following definition.} 

\begin{defn}\textbf{Local-limited channels.}\\
{If for all states $\rho_{x_{j}}$ and measurements $\{M_{a_{j}}\}$, a channel $\Lambda_{1...N}$ never produces non-local correlations via eq.~\eqref{gencorr}, then the channel is \textit{local-limited}.}
\end{defn}

{There are several channels which are local-limited. As an example, consider the entangled quantum states that can never produce non-local correlations for all general measurements \cite{barrett,quintino}.}
These quantum states can give rise to localizable channels that are not local yet are local-limited. The construction goes as follows. Take a localizable channel where the ancillary system is initiated in such an entangled quantum state. In addition, the `unitary operations' between the input and ancillary ports of each party simply trace out the input states. For all practical purposes then, this channel only prepares a fixed quantum state among the parties, which then goes to the output ports. It follows that event though such channel is not local, it is however local-limited.

In general, if $\Lambda^{c}_{1...N}$ is a \textit{canonical} channel for local correlations $p(a_{1}...a_{N}|x_{1}...x_{N})$, we have the following result, which is proven in Section \ref{appseccon} of the Appendix.

\begin{prop}\label{canon}
Given $\Lambda^{c}_{1...N}(\cdot)$ from $p(a_{1}...a_{N}|x_{1}...x_{N})$, for all measurements $M_{a'_{j}}$, and all states $\rho_{x'_{j}}$, the correlations
\begin{equation*}
p(a'_{1},a'_{2},...,a'_{N}|x'_{1},x'_2,...,x'_{N})=\Tr\left\{\bigotimes_{j=1}^{N}M_{a'_{j}}\left(\Lambda^{c}_{1...N}\otimes\id_{aux}(\bigotimes_{j=1}^{N}\rho_{x'_{j}})\right)\right\}
\end{equation*}
are local if the correlations $p(a_{1}...a_{N}|x_{1}...x_{N})$ are local.
\end{prop}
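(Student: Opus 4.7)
The strategy is straightforward: exploit that any local correlation admits a local hidden-variable (LHV) decomposition, and observe that under this decomposition the canonical channel factorises as a convex combination of tensor products of single-party channels. Specifically, if $p(a_{1}\ldots a_{N}|x_{1}\ldots x_{N}) = \sum_{\lambda} q(\lambda) \prod_{j} p_{j}(a_{j}|x_{j},\lambda)$, then using $|a_{1}\ldots a_{N}\rangle\langle x_{1}\ldots x_{N}| = \bigotimes_{j}|a_{j}\rangle\langle x_{j}|$ and the distributivity of tensor products over sums, eq.~\eqref{canonical} rearranges to $\Lambda^{c}_{1\ldots N} = \sum_{\lambda} q(\lambda) \bigotimes_{j} \Lambda^{\lambda}_{j}$, where each
\begin{equation*}
\Lambda^{\lambda}_{j}(\sigma) = \sum_{x_{j},a_{j}} p_{j}(a_{j}|x_{j},\lambda)\, |a_{j}\rangle\langle x_{j}|\, \sigma\, |x_{j}\rangle\langle a_{j}|
\end{equation*}
is a single-party CPTP map from $\mathcal{L}(\mathcal{H}^{j}_{in})$ to $\mathcal{L}(\mathcal{H}^{j}_{out})$.

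Next I would substitute this decomposition into eq.~\eqref{gencorr}. Since the input state $\bigotimes_{j}\rho_{x'_{j}}$ and the POVM $\bigotimes_{j} M_{a'_{j}}$ already factorise across the parties, and since appending identity channels on the auxiliary systems preserves the tensor structure, the trace for each fixed $\lambda$ splits into a product of single-party traces. The result is
\begin{equation*}
p(a'_{1}\ldots a'_{N}|x'_{1}\ldots x'_{N}) = \sum_{\lambda} q(\lambda) \prod_{j} \Tr\bigl\{ M_{a'_{j}}\,(\Lambda^{\lambda}_{j}\otimes \id^{j}_{aux})(\rho_{x'_{j}}) \bigr\},
\end{equation*}
which is manifestly an LHV decomposition of the new correlations, provided each factor is a valid conditional distribution in $a'_{j}$. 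This last point is immediate: positivity of $M_{a'_{j}}$ and $\rho_{x'_{j}}$ combined with complete positivity of $\Lambda^{\lambda}_{j}$ give non-negativity, while $\sum_{a'_{j}} M_{a'_{j}} = \id$ and trace preservation of $\Lambda^{\lambda}_{j}$ give normalisation to one.

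There is no genuine obstacle here; the proof is essentially an algebraic unfolding of definitions. The only ingredient worth highlighting is the factorisation $|a_{1}\ldots a_{N}\rangle\langle x_{1}\ldots x_{N}| = \bigotimes_{j}|a_{j}\rangle\langle x_{j}|$ of the canonical Kraus operators, which is what permits the LHV sum over $\lambda$ to pass through the multi-index sum and recover a tensor product of local channels. If the canonical channel were instead constructed with entangled input or output bases the argument would break, so the tensor-product structure of the chosen bases in eq.~\eqref{canonical} is doing all the work.
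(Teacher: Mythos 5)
Your proof is correct, but it takes a genuinely different route from the paper's. You use the LHV decomposition $p(\vec{a}|\vec{x})=\sum_{\lambda}q(\lambda)\prod_{j}p_{j}(a_{j}|x_{j},\lambda)$ to factorise the canonical channel itself into a convex combination of tensor products of single-party CPTP maps, $\Lambda^{c}_{1\ldots N}=\sum_{\lambda}q(\lambda)\bigotimes_{j}\Lambda^{\lambda}_{j}$, after which the product structure of the inputs and POVMs makes the new correlations manifestly local; your checks of positivity and normalisation of each factor are all that is needed, and the step is sound because the canonical Kraus operators factorise as $\bigotimes_{j}|a_{j}\rangle\langle x_{j}|$, exactly as you emphasise. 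The paper instead never decomposes the channel: it dephases each input state and each POVM element in the canonical bases (which leaves the correlations invariant because the channel completely decoheres inputs and outputs), and thereby rewrites $p(\vec{a}'|\vec{x}')$ as $\sum_{\vec{a},\vec{x}}p(\vec{a}|\vec{x})\prod_{j}q(a'_{j}|a_{j},x_{j},x'_{j})$, i.e.\ as a local stochastic processing (local wiring) of the original correlations, and then invokes closure of the local set under such processing. The paper's argument buys generality: it shows that \emph{whatever} the nature of $p$ (local, quantum, almost quantum, non-signalling), the correlations extracted from $\Lambda^{c}$ are a local processing of $p$, so the conclusion transfers to any set closed under local wirings, and the same template is reused for Proposition \ref{canonass}. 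Your argument buys directness and self-containedness for the local case, but it leans on the LHV decomposition in an essential way and would not extend to, say, showing that the canonical channel of quantum correlations is quantum-limited.
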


{Another interesting question is that of constructing almost localizable channels that are non-localizable. The following method works for any general Bell scenario as a starting point, depending on which type of channel one wishes to construct, and goes beyond the canonical form previously discussed. For the sake of simplicity, however, we focus on a bipartite Bell scenario with two dichotomic measurements per party. 

First, take an almost quantum correlation $p(a,b|x,y)$ with no quantum realisation. Such  correlations can be found by taking those that violate Bell inequalities beyond a Tsirelson-like bound, as presented in \cite{aqp}. Then, obtain a state and measurements that reproduce the correlations as outlined in \cite{NPApaper}. Using the protocol described in the proof of Prop.~\ref{thmalmost}, also depicted in Fig.~\ref{f:theAQconstr}, an almost localizable channel that channel-defines these correlations can be hence constructed from these `state and measurements'. 
This almost localizable channel is hence provably non-localizable, since it channel-defines Bell correlations beyond what quantum theory allows, and completes the picture of the hierarchy of channels in Theorem \ref{channelhierarchy}.
}

{Finally, while Theorem \ref{thmalmost} tells us that almost localisable channels channel-define the almost quantum correlations, does this mean the correlations in eq. \ref{gencorr} that are generated by an almost localisable channel $\Lambda_{1...N}$ will necessarily be almost quantum correlations? The answer does not follow immediately from the statement of Theorem \ref{thmalmost}, but the proof of this theorem can be slightly extended to give an answer in the affirmative. To sketch this extension, first note that all states $\rho_{x_{j}}$ and measurements $M_{a_{j}}$ can be made pure and projective respectively by introducing a large enough auxiliary system for each party. That is, $\rho_{x_{j}}$ can be replaced by a pure state $\ket{\psi_{x_{j}}}$ in a larger space, and then we can rewrite these states to be $\ket{\psi_{x_{j}}}=V_{x_{j}}\ket{\psi_{0}^{j}}$ for some fixed state $\ket{\psi_{0}}$. Now if we apply an almost localisable channel to (part of) these input states, the whole process can be modelled as preparing the state $\ket{\psi}_{E}\bigotimes\ket{\psi_{0}^{j}}$, then applying $V_{x_{j}}$ to the input states, followed by the unitaries in the almost localisable channel. Finally, a projective measurement is made on the output qubits. This whole process is equivalent to applying the inverse of the unitaries to these projective measurements to form new projective measurements which act on the state $\ket{\psi}_{E}\bigotimes\ket{\psi_{0}^{j}}$. These new projective measurements, due to the definition of the almost localisable channel, will ``commute'' for the particular state $\ket{\psi}_{E}\bigotimes\ket{\psi_{0}^{j}}$, and thus will generate almost quantum correlations by definition. Note that due to Theorem \ref{thmalmost}, given almost quantum correlations, we can always find states and measurements and an almost localisable channel that reproduce these correlations.}

\section{Steering from the scope of quantum channels}\label{se:chanstee}

Steering refers to the phenomenon where one party, Alice, by performing measurements on one half of a shared state, seemingly remotely `steers' the states held by a distant party, Bob, in a way which has no classical explanation \cite{S35}. This resembles the phenomenon of non-locality presented in last section, but with a slight change: now one party describes its system as a quantum system. In this section, we discuss an approach to studying steering via quantum channels. 
{Here, we review the traditional notion of a steering scenario, while its relevant sets of assemblages are presented in Appendix \ref{ap:eprstee}.}

{In a bipartite steering scenario, the actions of one party (here Alice, also referred to as `untrusted' or `uncharacterised'\footnote{The sense in which the parties are untrusted is that whatever is used to produce classical data is some black box.}) are described solely by $m$ possible classical inputs to her system, labelled by $x \in \{1 ... m\}$, each of which results in one of $d$ possible classical outputs, labelled $a = \in\{1 ... d\}$. The second party (Bob, also referred to as `trusted' or `characterised') fully describes the state of his share of the system by a sub-normalised quantum state $\sigma_{a|x}\in\mathcal{L}(\mathcal{H}_{B})$, where $\mathcal{H}_{B}$ is the Hilbert space associated with Bob's quantum system with dimension $d_{B}$. The set of sub-normalised conditional states Alice prepares on Bob's side $\{ \sigma_{a|x} \}_{a,x}$ is usually called \textit{assemblage}, and $p(a|x)=\tr{\sigma_{a|x}}$ denotes the probability that such a sub-normalised state is prepared, i.e. the probability that Alice obtains $a$ when measuring $x$. }

{In this work we go beyond the bipartite definition of steering, and consider a setting with $N$ untrusted parties and a single trusted party, still called Bob, who has some associated Hilbert space $\mathcal{H}_{B}$. Now, we have $N$ Alices, where for the $j$th Alice, her input is $x_{j}\in\{1...m\}$ and output is $a_{j}\in\{1...d\}$. As a result Bob obtains an assemblage $\{\sigma_{a_{1}...a_{N}|x_{1}...x_{N}}\}_{a_{1}...a_{N},x_{1},...x_{N}}$ with elements $\sigma_{a_{1}...a_{N}|x_{1}...x_{N}}\in\mathcal{L}(\mathcal{H}_{B})$ such that $p(a_1...a_{N}|x_1...x_{N})=\tr{\sigma_{a_{1}...a_{N}|x_{1}...x_{N}}}$.  As with Bell scenarios, for the case of $N\leq 2$ we will use the same notation of inputs being $x$ and $y$ and outputs being $a$ and $b$.}

\subsection{Steering via quantum channels}\label{se:substee}

Here we extend the ideas of Section \ref{se:sunclc} to steering scenarios, which provides a novel way to understanding the phenomenon. First we introduce the formalism and then characterise the channels that give rise to each set of assemblages. 

{A steering scenario is characterised by $N$ untrusted parties, each of which generate one of $m$ possible inputs, of $d$ possible outcomes each}, and a trusted party Bob with Hilbert space $\mathcal{H}_{B}$ with dimension $d_{B}$. Consider now all $(N+1)$ parties (including Bob) to have input and output Hilbert spaces. For the $N$ untrusted parties, these Hilbert spaces are $\cH^{j}_{in} = \cH^{j'}_{in} = \cH_m$ and $\cH^{j}_{out} = \cH^{j'}_{out} = \cH_d$ for all $j\neq j'$, where $\cH_m$ has dimension $m$ and $\cH_d$ dimension $d$. Denote by $\{\ket{x}\}_{x=1:m}$ an orthonormal basis of $\cH_m$, and by $\{\ket{a}\}_{a=1:d}$ an orthonormal basis of $\cH_d$. For Bob, he has Hilbert spaces $\mathcal{H}_{B_{in}}$ and $\mathcal{H}_{B_{out}}$, which are taken to be equal\footnote{For the specific purposes of studying steering, we could equally take $\mathcal{H}_{B_{in}}$ to be $\mathbb{C}$, i.e. the scalars, but for the sake of simplicity in our presentation, we have this more symmetric set-up.}. 
{In what follows, we relate channels of the form $\Lambda_{1...N,B}:\mathcal{L}(\mathcal{H}_{m}^{\otimes N}\otimes\mathcal{H}_{B_{in}})\rightarrow\mathcal{L}(\mathcal{H}_{d}^{\otimes N}\otimes\mathcal{H}_{B_{out}})$ to assemblages in a steering scenario as in Fig. \ref{f:Stmap}.}

\begin{defn}
An assemblage $\{\sigma_{a_1 ... a_N | x_1 ... x_N}\in\mathcal{D}(\mathcal{H}_{B})\}$ in a steering scenario is \textbf{channel-defined} if there exists a channel $\Lambda_{1...N,B}:\mathcal{L}(\mathcal{H}_{m}^{\otimes N}\otimes\mathcal{H}_{B_{in}})\rightarrow\mathcal{L}(\mathcal{H}_{d}^{\otimes N}\otimes\mathcal{H}_{B_{out}})$, some choice of orthonormal bases $\{\ket{x_{j}}\in\mathcal{H}_{m}\}_{x=1:m}$ and $\{\ket{a_{j}}\in\mathcal{H}_{d}\}_{a=1:d}$ for each  party $j$, and a state $\ket{0}\in\mathcal{H}_{B_{in}}$, such that
\begin{equation}\label{eq:Sequiv}
\sigma_{a_1 ... a_N | x_1 ... x_N} = \Tr_{1 ... N}\left\{\otimes_{k=1}^N \ket{a_k}\bra{a_k} \otimes \id_B \,\, \Lambda_{1 ... N,B} [\otimes_{k=1}^n \ket{x_k}\bra{x_k} \otimes \ket{0}\bra{0}] \right\}\,,
\end{equation}
where the partial trace is taken over all $N$ untrusted systems.
\end{defn}

Note that the main difference between the correlations and assemblages from the point-of-view of channels is that one of the outputs of the channels is left unmeasured, and Bob has a fixed input state $\ket{0}$. {We now relate channels to the families of assemblages presented in Appendix \ref{ap:eprstee}, starting with the local hidden state assemblages.}

\begin{figure}
\begin{center}
\begin{tikzpicture}
\node at (-7.5,0) {$\sigma_{a|x}$};
\node at (-6,0) {$\equiv$};

\shade[draw, thick, ,rounded corners, inner color=white,outer color=gray!50!white] (-5,-1) rectangle (-1,1) ;
\node at (-3,0) {$\Lambda_{A \not\leftrightarrow B}$};
\draw[thick] (-4,-1.5) -- (-4,-1);
\draw[thick] (-2,-1.5) -- (-2,-1);
\draw[thick] (-4,1) -- (-4,1.5);
\draw[thick] (-2,1) -- (-2,1.5);

\node at (-2, 1.7) {$\sigma_{a|x}$};

\draw[thick, rounded corners] (-4,-2) -- (-3.5,-1.5) -- (-4.5, -1.5) -- cycle;
\node at (-4,-1.7) {$x$};
\draw[thick, rounded corners] (-2,-2) -- (-1.5,-1.5) -- (-2.5, -1.5) -- cycle;
\node at (-2,-1.7) {$0$};

\draw[thick, rounded corners] (-4,2) -- (-3.5,1.5) -- (-4.5, 1.5) -- cycle;
\node at (-4,1.7) {$a$};

\node at (-4,2.5) {Alice};
\node at (-2,2.5) {Bob};
\end{tikzpicture}
\end{center}
\caption{An assemblage $\sigma_{a|x}$ seen as generated by a quantum channel $\Lambda_{A \not\leftrightarrow B}$.}
\label{f:Stmap}
\end{figure}
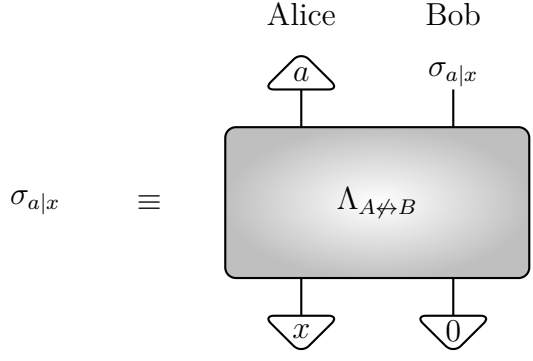

\begin{prop}\label{assLHSprop} 
An assemblage $\{\sigma_{a_{1}...a_{N}|x_{1}...x_{N}}\}$ is a local hidden state assemblage if and only if there exists a local channel $\Lambda_{1...N,B}^{\mathsf{L}}:\mathcal{L}(\mathcal{H}_{m}^{\otimes N}\otimes\mathcal{H}_{B_{in}})\rightarrow\mathcal{L}(\mathcal{H}_{d}^{\otimes N}\otimes\mathcal{H}_{B_{out}})$ such that the assemblage is channel-defined by $\Lambda_{1...N,B}^{\mathsf{L}}$.
\end{prop}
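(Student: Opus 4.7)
The plan is to prove both directions of the equivalence by mirroring the argument of Proposition \ref{probLHSprop}, exploiting the fact that a steering scenario differs from a Bell scenario only in that Bob's output wire is left as an unmeasured quantum system rather than being projected onto a basis. Accordingly, the proof reduces to tracking what happens to Bob's wire on each side of the iff.

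For the forward direction, I would start with a local hidden state assemblage
\begin{equation*}
\sigma_{a_1\ldots a_N|x_1\ldots x_N}=\sum_\lambda p(\lambda)\prod_{j=1}^N p(a_j|x_j,\lambda)\,\sigma_\lambda\,,
\end{equation*}
and explicitly construct a local channel $\Lambda^{\mathsf{L}}_{1\ldots N,B}$ using $\lambda$ as shared classical randomness distributed through the ancilla. Each untrusted party $j$'s local operation measures its input in the basis $\{\ket{x}\bra{x}\}$, reads $\lambda$ from its share of the ancilla, and outputs $\ket{a_j}\bra{a_j}$ sampled with probability $p(a_j|x_j,\lambda)$. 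Bob's local operation discards his input $\ket{0}\bra{0}$ and outputs $\sigma_\lambda$ via a $\lambda$-controlled preparation channel. Averaging over $\lambda$ yields a convex mixture of product channels, which matches the definition of a local channel reviewed in Appendix \ref{ap:channels}; substituting into eq.~\eqref{eq:Sequiv} then reproduces the assumed LHS decomposition termwise.

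For the converse, I would take any local channel $\Lambda^{\mathsf{L}}_{1\ldots N,B}=\sum_\lambda p(\lambda)\bigotimes_{j=1}^N\Lambda_j^\lambda\otimes\Lambda_B^\lambda$ that channel-defines the assemblage, insert the decomposition into eq.~\eqref{eq:Sequiv}, and use the product structure together with the partial trace over the $N$ untrusted outputs to obtain
\begin{equation*}
\sigma_{a_1\ldots a_N|x_1\ldots x_N}=\sum_\lambda p(\lambda)\prod_{j=1}^N \tr{\ket{a_j}\bra{a_j}\,\Lambda_j^\lambda(\ket{x_j}\bra{x_j})}\,\Lambda_B^\lambda(\ket{0}\bra{0})\,.
\end{equation*}
Identifying $p(a_j|x_j,\lambda):=\tr{\ket{a_j}\bra{a_j}\,\Lambda_j^\lambda(\ket{x_j}\bra{x_j})}$ (which is a valid conditional probability distribution because each $\Lambda_j^\lambda$ is CPTP) and $\sigma_\lambda:=\Lambda_B^\lambda(\ket{0}\bra{0})$ (which is a valid state on $\mathcal{H}_B$) gives exactly an LHS decomposition.

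The main obstacle is largely bookkeeping rather than conceptual: one must verify that the decomposition of local channels referenced in Appendix \ref{ap:channels} is indeed the ``shared classical randomness over product channels'' form used in the backward direction, and confirm that the fixed input $\ket{0}$ on Bob's side does not restrict which $\sigma_\lambda$ can be prepared (which it does not, as a CPTP preparation channel can realise any density operator from a fixed pure input). Once these two points are checked, the argument is a direct steering-scenario translation of the Bell-scenario Proposition \ref{probLHSprop}, with Bob's output wire left quantum instead of being projected onto the computational basis.
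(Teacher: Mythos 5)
Your proof is correct and follows essentially the same route as the paper's: both directions hinge on identifying the separable ancilla of a local channel with the hidden variable $\lambda$, simulating the response functions by input-controlled local operations and preparing $\sigma_\lambda$ on Bob's output wire. The only cosmetic difference is that the paper first recasts the LHS assemblage as projective measurements on a separable quantum state and then dilates to an explicit controlled unitary, whereas you work directly with classical shared randomness; your converse direction is also spelled out more explicitly than the paper's one-sentence argument, and the point you flag about the ``mixture of product channels'' form of local channels does follow from the separability of $\sigma_R$ when the channel is evaluated on product inputs.
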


In the literature, {most of the focus has been on detecting whether an assemblage has a local hidden state model, thus revealing entanglement in a shared resource}. It is therefore reassuring that the local hidden state assemblages do not involve entanglement when viewed through the channels that define them. Note that it is possible to have an assemblage which does not have a local hidden state model, yet the correlations resulting any measurement Bob makes on the assemblage can be local. In other words, steering is a distinct phenomenon from non-locality.

\begin{prop}\label{prop11} 
An assemblage $\{\sigma_{a_{1}...a_{N}|x_{1}...x_{N}}\}$ is non-signalling if and only if there exists a causal channel $\Lambda_{1...N,B}^{\mathsf{C}}:\mathcal{L}(\mathcal{H}_{m}^{\otimes N}\otimes\mathcal{H}_{B_{in}})\rightarrow\mathcal{L}(\mathcal{H}_{d}^{\otimes N}\otimes\mathcal{H}_{B_{out}})$ such that the assemblage is channel-defined by $\Lambda_{1...N,B}^{\mathsf{C}}$.
\end{prop}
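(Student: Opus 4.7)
The plan is to prove both directions by direct translation of the definitions, closely mirroring the proof of Proposition~\ref{probNSprop} but now accounting for Bob's trusted quantum register. For the forward direction, suppose $\Lambda^{\mathsf{C}}_{1\ldots N,B}$ is a causal channel that channel-defines the assemblage via eq.~\eqref{eq:Sequiv}. To verify the no-signalling condition, I would fix any subset $S \subseteq \{1,\ldots,N\}$ of untrusted parties and sum eq.~\eqref{eq:Sequiv} over $\{a_k\}_{k \in S}$. Because $\sum_{a_k}\ket{a_k}\bra{a_k}=\id$, this amounts to tracing out the $S$-labelled Alice output wires of $\Lambda^{\mathsf{C}}$. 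Since $\Lambda^{\mathsf{C}}$ is causal, it is in particular semicausal in the direction $S\not\to \bar S$ with $\bar S:=(\{1,\ldots,N\}\setminus S)\cup\{B\}$, so the reduced channel on $\bar S$ depends only on the inputs to $\bar S$. Hence $\sum_{\vec a_S}\sigma_{\vec a|\vec x}$ is independent of $\vec x_S$, which is precisely the no-signalling constraint on the assemblage; taking $S=\{1,\ldots,N\}$ further recovers the familiar fact that Bob's marginal $\sum_{\vec a}\sigma_{\vec a|\vec x}$ is a fixed state on $\mathcal{H}_B$.

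For the converse, I would build a canonical channel directly from the assemblage, in the spirit of eq.~\eqref{canonical}. Writing $\vec x=x_1\ldots x_N$ and $\vec a=a_1\ldots a_N$, define
\begin{equation*}
\Lambda^{c}_{1\ldots N,B}(\rho)\;=\;\sum_{\vec x,\vec a}\bra{\vec x}\,\Tr_{B_{in}}(\rho)\,\ket{\vec x}\;\ket{\vec a}\bra{\vec a}\otimes\sigma_{\vec a|\vec x}.
\end{equation*}
This measure-and-prepare map discards Bob's input, dephases the Alice inputs in the computational basis, and prepares the corresponding assemblage element alongside the classical label $\ket{\vec a}$. It is manifestly CP, and trace preservation follows from $\sum_{\vec a}\Tr \sigma_{\vec a|\vec x}=1$ for every $\vec x$. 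Substituting $\bigotimes_k\ket{x_k}\bra{x_k}\otimes\ket{0}\bra{0}$ and projecting the Alice outputs onto $\bigotimes_k\ket{a_k}\bra{a_k}$ immediately recovers eq.~\eqref{eq:Sequiv}, so the assemblage is indeed channel-defined by $\Lambda^{c}$.

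It remains to verify that $\Lambda^{c}$ is causal. Semicausality in the direction $\{B\}\not\to\{1,\ldots,N\}$ is immediate because Bob's input register is discarded at the outset. For any subset $S\subseteq\{1,\ldots,N\}$, tracing out the $S$-labelled Alice outputs of $\Lambda^{c}$ gives a map whose only residual $\vec x_S$-dependence enters through $\sum_{\vec a_S}\sigma_{\vec a|\vec x}$, which is independent of $\vec x_S$ by the non-signalling hypothesis; so the output on $\bar S$ depends only on the inputs to $\bar S$, and $\Lambda^{c}$ is causal. I do not expect any substantive obstacle: the only mildly delicate step is the bookkeeping across all bipartitions, but once the canonical construction is in place both implications reduce to the observation that causality of the channel and no-signalling of the assemblage are dual formulations of the same property.
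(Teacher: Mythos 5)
Your proposal is correct and follows essentially the same route as the paper's proof: the forward direction reduces summing over outcomes to tracing out output wires of a causal channel, and the converse uses the canonical dephase-and-prepare channel (identical to the paper's Kraus-form construction of $\Gamma$), whose causality follows from the no-signalling constraints on the assemblage. The only cosmetic difference is that you argue causality for all bipartitions at once rather than by the paper's party-by-party induction.
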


Given this definition, if one is given a non-signalling assemblage then it is straightforward to find a causal channel that channel-defines the assemblage if the input states and output measurements are fixed. In fact it is an SDP that is a slight modification of the SDP that decides whether a channel is causal as outlined {in Appendix \ref{ap:channels}.} Given elements of the assemblage, since they are channel-defined, this just results in linear constraints made on the channel.

A consequence of the above proposition and the unitary representation of causal channels is the following theorem. 

\begin{thm}\label{nonsigass} \textbf{Unitary representation of non-signalling assemblages}\\
Let $\{\sigma_{a_{1}...a_{N}|x_{1}...x_{N}}\}$ be a non-signalling assemblage. {Then,
the assemblage is channel-defined by a channel $\Lambda_{1...N,B}^{\mathsf{C}}:\mathcal{L}(\mathcal{H}_{m}^{\otimes N}\otimes\mathcal{H}_{B_{in}})\rightarrow\mathcal{L}(\mathcal{H}_{d}^{\otimes N}\otimes\mathcal{H}_{B_{out}})$ if and only if 
there exist
\begin{itemize}
\item auxiliary systems $E$ and $E'$ with input and output Hilbert spaces, $\mathcal{H}^{E}_{in}$  and $\mathcal{H}^{E}_{out}$ for $E$, with $\mathcal{H}^{E'}_{in}=\mathcal{H}_{B}$ and  $\mathcal{H}^{E'}_{out}=\mathcal{H}_{B_{out}}$ for $E'$, that is the output Hilbert space of $E'$ and $B$ coincide;
\item quantum state $|R\rangle\in\mathcal{H}^{E}_{in}\otimes\mathcal{H}^{E'}_{in}$;
\item unitary operator $V:\mathcal{H}_{d}^{\otimes N}\otimes\mathcal{H}^{E}_{in}\rightarrow\mathcal{H}_{d}^{\otimes N}\otimes\mathcal{H}^{E}_{out}$,
\end{itemize}
which produce a unitary representation of the channel $\Lambda_{1...N,B}^{\mathsf{C}}$ via 
\begin{equation*}
\Lambda_{1...N,B}^{\mathsf{C}}(\cdot)=\Tr_{E_{out}}\{V\otimes\id_{E'}(\Tr_{B_{in}}(\cdot)\otimes|R\rangle\langle  R|_{E,E'})V^{\dagger}\otimes\id_{E'}\}.
\end{equation*}
}

Futhermore the unitary $V$ can be decomposed into a sequence of unitaries $U_{k,E}:\mathcal{H}_m\otimes\mathcal{H}^{E}_{1}\rightarrow\mathcal{H}_d\otimes\mathcal{H}^{E}_{2}$ for appropriately chosen Hilbert spaces $\mathcal{H}^{E}_{1}$ and $\mathcal{H}^{E}_{2}$, where for any given permutation $\pi$ of the set $\{1,...,N\}$, we have that
\begin{equation*}
V=U^{\pi}_{\pi(1),E}U^{\pi}_{\pi(2),E}...U^{\pi}_{\pi(N),E}
\end{equation*}
where $U^{\pi}_{k,E}$ is not necessarily the same as $U^{\pi'}_{k,E}$ for two different permutations $\pi$ and $\pi'$.
\end{thm}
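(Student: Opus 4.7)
The plan is to route both directions through Proposition \ref{prop11}, combined with the unitary dilation of causal multipartite channels due to Beckman-Gottesman-Nielsen-Preskill (BGNP), as reviewed in Appendix \ref{ap:channels}. Proposition \ref{prop11} already identifies non-signalling assemblages with those channel-defined by causal channels $\Lambda^{\mathsf{C}}_{1\ldots N,B}$, so it is enough to show that such a channel admits the representation displayed in the theorem --- with ancilla state $|R\rangle$, unitary $V$, and permutation-dependent sequential decomposition --- if and only if it is causal.

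For the forward implication, I would apply the BGNP dilation to the causal channel $\Lambda^{\mathsf{C}}$ supplied by Proposition \ref{prop11}, obtaining a global ancilla $\tilde{E}$ in a pure state $|\psi\rangle$ together with local unitaries $\{W_1,\ldots,W_N,W_B\}$ whose ordered product realises the channel. BGNP further provides, for every permutation of the $N+1$ parties, such a representation (possibly with different constituent unitaries). I would then pick an ordering in which $W_B$ is placed first, and exploit the two structural features peculiar to the steering setting: Bob's input is fixed to $|0\rangle_{B_{in}}$, and Bob performs no operation on his output. Because $W_B$ now acts only on fixed data, I define $|R\rangle \in \cH^{E}_{in}\otimes\cH^{E'}_{in}$ to be a purification of $W_B\bigl(|0\rangle\langle 0|_{B_{in}}\otimes |\psi\rangle\langle\psi|_{\tilde{E}}\bigr) W_B^{\dagger}$, identifying $\cH^{E'}_{in}$ with $\cH_B$ (Bob's output register) and $\cH^{E}_{in}$ with a suitable extension of $\cH_{\tilde{E}}$. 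The product of the remaining unitaries is the $V$ of the theorem; it acts on the Alices' inputs and $\cH^{E}_{in}$ but leaves $\cH^{E'}_{in}$ untouched, which is precisely the $\id_{E'}$ factor in the displayed formula. The permutation-dependent decomposition of $V$ into sequential $U^{\pi}_{k,E}$ is inherited directly from BGNP applied to permutations of the Alice indices with Bob fixed in the leading position.

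For the converse, any channel of the stated unitary form is manifestly causal: Bob's marginal output equals $\Tr_E |R\rangle\langle R|_{E,E'}$ for every input, so Bob is insensitive to the Alices; the Alices' joint output does not depend on Bob's input because $\Tr_{B_{in}}$ discards it; and causality among the Alices follows from the existence of a sequential decomposition in every permutation, which allows any chosen Alice to be placed last and traced out without dependence on her input. Proposition \ref{prop11} then returns a non-signalling assemblage. The step I expect to be the main obstacle is the absorption of $W_B$ in the forward direction: justifying that Bob's dilation unitary can be merged into the ancilla preparation without altering the channel's action on the Alices. This is exactly where BGNP causality from Bob to every Alice does the essential work, since it is what licences repositioning $W_B$ to the front of the sequence; once there, its only argument is the fixed state $|0\rangle_{B_{in}}$, so it becomes a state preparation that folds cleanly into $|R\rangle$. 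The residual bookkeeping --- matching Hilbert-space dimensions and purifying any intermediate mixed state via an extra factor on $\cH^{E}_{in}$ --- is routine.
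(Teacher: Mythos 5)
Your proof is correct and is essentially the paper's own argument: the paper likewise combines Proposition \ref{prop11} with the BGNP sequential unitary dilation, orders the dilation so that Bob's unitary acts first on the fixed input $|0\rangle_{B_{in}}$ and the pure ancilla (so it folds into the preparation of $|R\rangle$), takes the product of the remaining untrusted parties' unitaries as $V$, and obtains the permutation decompositions from multipartite semicausality among those parties. The only cosmetic differences are that the paper first merges all $N$ Alices into a single party to invoke the bipartite decomposition $U=V_{AE}V_{EB}$ --- where the ordering with Bob's unitary first is licensed by $A\not\rightarrow B$ (the Alices not signalling Bob), not by causality ``from Bob to every Alice'' as you phrase it, though this is immaterial since the channel is causal in both directions --- and that no purification step is actually needed, since the dilation ancilla is already pure.
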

A pictorial representation of this theorem for $N=2$ is given in Fig.~\ref{f:matty}. 

Given this characterisation of the set of non-signalling assemblages, we now turn to the set of quantum assemblages.

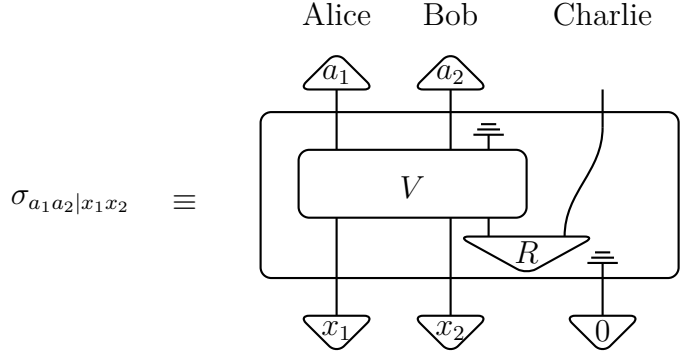
\begin{figure}
\begin{center}
\begin{tikzpicture}
\node at (-7.5,0) {$\sigma_{a_{1}a_{2}|x_{1}x_{2}}$};
\node at (-6,0) {$\equiv$};

\draw[thick, ,rounded corners] (-5,-1) rectangle (0.5,1.2) ;
\draw[thick] (-4,-1.5) -- (-4,-0.2);
\draw[thick] (-2.5,-1.5) -- (-2.5,-0.2);
\draw[thick] (-4,0.7) -- (-4,1.5);
\draw[thick] (-2.5,0.7) -- (-2.5,1.5);
\draw[thick] (-0.5,1) -- (-0.5,1.5);
\draw[thick] (-0.5,-1.5) -- (-0.5,-1);

\draw[thick] (-0.5,-1) -- (-0.5,-0.8);
\draw[thick, xshift=-3.5cm, yshift=-1.65cm] (2.9,0.99) -- (3.1,0.99);
\draw[thick, xshift=-3.5cm, yshift=-1.65cm] (2.85,0.92) -- (3.15,0.92);
\draw[thick, xshift=-3.5cm, yshift=-1.65cm] (2.8,0.85) -- (3.2,0.85);

\draw[thick, rounded corners] (-4,-2) -- (-3.5,-1.5) -- (-4.5, -1.5) -- cycle;
\node at (-4,-1.7) {$x_{1}$};
\draw[thick, rounded corners] (-2.5,-2) -- (-2,-1.5) -- (-3, -1.5) -- cycle;
\node at (-2.5,-1.7) {$x_{2}$};
\draw[thick, rounded corners] (-0.5,-2) -- (0,-1.5) -- (-1, -1.5) -- cycle;
\node at (-0.5,-1.7) {$0$};

\draw[thick] (-2,0.7) -- (-2,0.9);
\draw[thick, xshift=-0.5cm, yshift=-0.25cm] (-1.6,1.29) -- (-1.4,1.29); 
\draw[thick, xshift=-0.5cm, yshift=-0.25cm] (-1.65,1.22) -- (-1.35,1.22); 
\draw[thick, xshift=-0.5cm, yshift=-0.25cm] (-1.7,1.15) -- (-1.3,1.15); 

\draw[thick, rounded corners] (-4,2) -- (-3.5,1.5) -- (-4.5, 1.5) -- cycle;
\node at (-4,1.7) {$a_{1}$};
\draw[thick, rounded corners] (-2.5,2) -- (-2,1.5) -- (-3, 1.5) -- cycle;
\node at (-2.5,1.7) {$a_{2}$};

\node at (-4,2.5) {Alice};
\node at (-2.5,2.5) {Bob};
\node at (-0.5,2.5) {Charlie};
\node at (-0.5,1.7) {};

\draw[thick, rounded corners, xshift=-4.5cm] (3,-0.95) -- (3.9,-0.45) -- (2.1, -0.45) -- cycle;
\node at (-1.5,-0.66) {$R$};

\draw[thick, rounded corners] (-4.5,-0.2) rectangle (-1.5,0.7) ;
\node at (-3,0.2) {$V$};
\draw[thick] (-2,-0.45) -- (-2,-0.2);
\draw[thick] (-1,-0.45) to [out=90,in=-90] (-0.5,1);

\end{tikzpicture}
\end{center}
\caption{Unitary representation of non-signalling assemblages for $N=2$.}
\label{f:matty}
\end{figure}

\begin{prop}\label{assQprop} 
An assemblage $\{\sigma_{a_{1}...a_{N}|x_{1}...x_{N}}\}$ is quantum if and only if there exists a localizable channel $\Lambda_{1...N,B}^{\mathsf{Q}}:\mathcal{L}(\mathcal{H}_{m}^{\otimes N}\otimes\mathcal{H}_{B_{in}})\rightarrow\mathcal{L}(\mathcal{H}_{d}^{\otimes N}\otimes\mathcal{H}_{B_{out}})$ such that the assemblage is channel-defined by $\Lambda_{1...N,B}^{\mathsf{Q}}$.
\end{prop}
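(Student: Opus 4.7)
My plan is to prove both directions by explicit construction, closely paralleling the proof of Proposition \ref{probQprop} but keeping Bob's output as a quantum system rather than measuring it classically.

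For the forward direction, suppose the assemblage is quantum, so there exists a shared state $\rho_{A_1 \ldots A_N B}$ and POVMs $\{M^{(j)}_{a_j|x_j}\}_{a_j}$ on each Alice's subsystem such that
\[
\sigma_{a_1 \ldots a_N | x_1 \ldots x_N} = \Tr_{A_1 \ldots A_N}\!\left\{\bigotimes_{j=1}^N M^{(j)}_{a_j|x_j} \otimes \id_B \cdot \rho_{A_1 \ldots A_N B} \right\}.
\]
By Naimark's dilation I may assume the POVMs are projective on suitably enlarged Hilbert spaces, and by purification I may assume the shared state is a pure state $\ket{\Psi}_{A_1 \ldots A_N B R}$ with some global purifying system $R$. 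I would then build a localizable channel whose global ancilla $E$ carries $\ket{\Psi}$, partitioned so that party $j$ holds the $A_j$ subsystem and Bob holds the $B$ subsystem (with $R$ discarded at the end). For each Alice $j$, I define a local unitary $U_j$ on $\mathcal{H}_m^j \otimes \mathcal{H}_d^j \otimes \mathcal{H}_{A_j}$ that is controlled on her input register $\ket{x_j}$ and extends the isometry $\ket{\phi}_{A_j} \mapsto \sum_{a_j} \ket{a_j} \otimes M^{(j)}_{a_j|x_j}\ket{\phi}$ to a full unitary. Bob's local action maps his input $\ket{0}$ together with his share of the ancilla to his output register via the identity map on $B$. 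Since each $U_j$ acts on a distinct tensor factor (party $j$'s own input, output, and the $A_j$ subsystem of the ancilla), the product $\prod_j U_{jE}$ is manifestly independent of the order of composition, so the channel is localizable in the strict sense of Def.~\ref{def:locuni}.

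For the reverse direction, suppose the assemblage is channel-defined by a localizable channel $\Lambda^{\mathsf{Q}}_{1 \ldots N,B}$. By the definition of localizability there is a shared ancilla state $\ket{\psi}_E$ and a family of local unitaries $U_{kE}$ whose product is permutation-invariant on $\ket{\psi}_E$ and the parties' local spaces. Substituting $\bigotimes_k \ket{x_k}\bra{x_k} \otimes \ket{0}\bra{0}$ into eq.~\eqref{eq:Sequiv}, I would group, for each Alice $j$, the ingredients on her side (her input $\ket{x_j}$, her output projector $\ket{a_j}\bra{a_j}$, and her local unitary $U_{jE}$) into an effective POVM element acting on her share of the ancilla, and gather Bob's input $\ket{0}$ together with his local isometry into an effective channel taking his share of the ancilla to $\mathcal{H}_{B_{out}}$. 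The resulting expression is precisely of the form ``partial trace of local POVMs tensored with identity on $B$ against a shared pure state'', which is a quantum assemblage by definition.

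The main obstacle lies in the reverse direction, where one must carefully exploit the commutation structure of the localizable channel to argue that the Alices' contributions genuinely factor into a product of local POVMs acting on disjoint subsystems of a common ancilla, rather than interfering through the unitary ordering. The permutation-invariance requirement in Def.~\ref{def:locuni}, together with the fact that each $U_{kE}$ acts only on party $k$'s share of the ancilla, is exactly what guarantees this factorization; the forward direction's Naimark dilation and purification are standard technical manoeuvres by comparison.
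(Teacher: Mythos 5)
Your proposal is correct and follows essentially the same route as the paper: the paper proves this proposition by pointing back to the proof of Proposition \ref{assLHSprop} with the separable ancilla replaced by an entangled state, and that proof uses exactly your construction --- a controlled unitary $U_{j}=\sum_{x_{j},a_{j}}M_{a_{j}|x_{j}}\otimes |a_{j}\rangle\langle 0|\otimes|x_{j}\rangle\langle x_{j}|$ built from the (WLOG projective) measurements acting on each party's share of the ancilla, with the reverse direction obtained by absorbing each party's preparation, unitary and output-basis measurement into an effective local measurement on the shared state. The only cosmetic slip is that your reverse direction appeals to the commuting-unitary form of Def.~\ref{def:locuni} while describing the tensor-product form of Def.~\ref{def:loc}; since the paper asserts these are equivalent in finite dimensions, this does not affect correctness.
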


\begin{figure}
\begin{center}
\begin{tikzpicture}
\node at (-1.5,0) {$\sigma^{Q}_{a|x}$};
\node at (0,0) {$\equiv$};
\draw[thick, rounded corners] (1,-1) rectangle (5,1);
\draw[thick, rounded corners] (3,-0.95) -- (3.9,-0.45) -- (2.1, -0.45) -- cycle;
\node at (3,-0.66) {$R$};

\draw[thick, rounded corners] (1.5,-0.1) rectangle (2.9,0.4);
\draw[thick, rounded corners] (3.1,-0.1) rectangle (4.5,0.4);
\node at (2.3,0.1) {$U_{AR}$};
\node at (3.7,0.1) {$U_{RB}$};

\draw[thick] (2,-1.5) -- (2,-0.1);
\draw[thick] (2,0.4) -- (2,1.5);
\draw[thick] (4,-1.5) -- (4,-0.1);
\draw[thick] (4,0.4) -- (4,1.5);

\draw[thick] (2.5,-0.45) -- (2.5,-0.1);
\draw[thick] (3.5,-0.45) -- (3.5,-0.1);

\draw[thick] (2.5,0.4) -- (2.5,0.6);
\draw[thick, xshift=-0.5cm, yshift=-0.25cm] (2.9,0.99) -- (3.1,0.99);
\draw[thick, xshift=-0.5cm, yshift=-0.25cm] (2.85,0.92) -- (3.15,0.92);
\draw[thick, xshift=-0.5cm, yshift=-0.25cm] (2.8,0.85) -- (3.2,0.85);

\draw[thick] (3.5,0.4) -- (3.5,0.6);
\draw[thick, xshift=0.5cm, yshift=-0.25cm] (2.9,0.99) -- (3.1,0.99);
\draw[thick, xshift=0.5cm, yshift=-0.25cm] (2.85,0.92) -- (3.15,0.92);
\draw[thick, xshift=0.5cm, yshift=-0.25cm] (2.8,0.85) -- (3.2,0.85);

\node at (2,2.5) {Alice};
\node at (4,2.5) {Bob};
\draw[thick, rounded corners] (4,-2) -- (3.5,-1.5) -- (4.5, -1.5) -- cycle;
\node at (4,-1.7) {$0$};
\draw[thick, rounded corners] (2,-2) -- (1.5,-1.5) -- (2.5, -1.5) -- cycle;
\node at (2,-1.7) {$x$};
\draw[thick, rounded corners] (2,2) -- (1.5,1.5) -- (2.5, 1.5) -- cycle;
\node at (2,1.7) {$a$};
\node at (4,1.7) {$\sigma_{a|x}$};
\end{tikzpicture}
\end{center}
\caption{A quantum assemblage in a bipartite steering scenario resulting from a localizable map.}
\label{f:Slocaliz}
\end{figure}
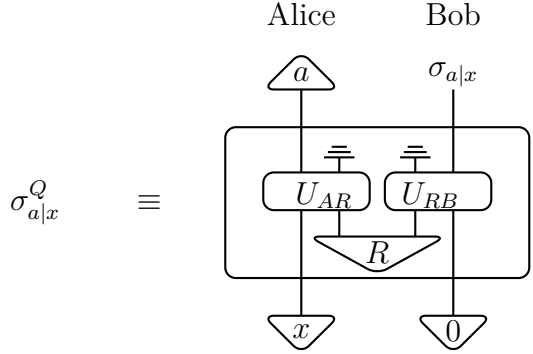

In Fig. \ref{f:Slocaliz} we give a pictorial representation of a channel-defined quantum assemblage. At this point we should point out the following corollary of this proposition along with the previous theorem, which was first proven by Gisin and Hughston, Jozsa, and Wootters (GHJW). We note that our proof is structurally very different from the previous proofs, and is a simple consequence of the fact that, for $N=1$, the unitary $V$ in Thm. \ref{nonsigass} acts only on the input Hilbert space of the untrusted party and the ancillary register. {The full proof of this corollary can be found in Appendix \ref{appc}.}

\begin{cor} \label{cor:GHJW}
For $N=1$, all non-signalling assemblages are also quantum assemblages.
\end{cor}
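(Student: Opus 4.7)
The plan is to read off the quantum realisation of the assemblage directly from the unitary representation of the causal channel supplied by Theorem \ref{nonsigass}, specialised to $N=1$. Since there is only one untrusted party, the composition of unitaries $V = U^{\pi}_{\pi(1),E}$ collapses to a single unitary $V:\mathcal{H}_m \otimes \mathcal{H}^{E}_{in} \rightarrow \mathcal{H}_d \otimes \mathcal{H}^{E}_{out}$, so the ordering issue that is central in the multipartite case disappears automatically. This is precisely the simplification highlighted immediately before the corollary.

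First I would start with a non-signalling assemblage $\{\sigma_{a|x}\}$. By Proposition \ref{prop11} it is channel-defined by some causal channel $\Lambda^{\mathsf{C}}_{1,B}$, and Theorem \ref{nonsigass} provides auxiliary systems $E, E'$, a pure state $|R\rangle \in \mathcal{H}^{E}_{in}\otimes \mathcal{H}_B$ (since $\mathcal{H}^{E'}_{in} = \mathcal{H}_B$), and the unitary $V$ above, such that
\begin{equation*}
\sigma_{a|x} = \Tr_{1,E_{out}}\bigl\{(|a\rangle\langle a|\otimes \id_{E_{out}} \otimes \id_{E'})\,V\otimes\id_{E'}\, (|x\rangle\langle x|\otimes |R\rangle\langle R|_{E,E'})\,V^{\dagger}\otimes\id_{E'}\bigr\}.
\end{equation*}
The state $|R\rangle$ now plays the role of the bipartite state shared by Alice and Bob in a standard quantum realisation: Bob keeps his share in $\mathcal{H}_B$ untouched (because $V$ acts only on $\mathcal{H}_m \otimes \mathcal{H}^{E}_{in}$), while Alice holds $\mathcal{H}^{E}_{in}$.

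Next I would explicitly construct Alice's POVM. Define Kraus operators $K_a^x : \mathcal{H}^{E}_{in} \rightarrow \mathcal{H}^{E}_{out}$ by $K_a^x |\phi\rangle := (\langle a|\otimes \id_{E_{out}})\, V\, (|x\rangle\otimes|\phi\rangle)$, and set $M_{a|x} := (K_a^x)^{\dagger} K_a^x$. The unitarity of $V$, applied to states of the form $|x\rangle\otimes|\phi\rangle$, gives $\sum_a (K_a^x)^{\dagger} K_a^x = \id_{E_{in}}$, so $\{M_{a|x}\}_a$ is a valid POVM on Alice's share for each input $x$. Substituting these Kraus operators into the expression above and performing the partial trace over $E_{out}$ yields
\begin{equation*}
\sigma_{a|x} = \Tr_{E_{in}}\bigl\{(M_{a|x}\otimes \id_B)\, |R\rangle\langle R|\bigr\},
\end{equation*}
which is precisely the definition of a quantum assemblage produced by the shared state $|R\rangle$ and the measurements $\{M_{a|x}\}$.

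I do not anticipate a real obstacle: the argument is essentially a repackaging of the unitary representation of causal channels, and the $N=1$ case sidesteps the permutation-invariance condition that is the only delicate ingredient of Definition \ref{almostloc} and Theorem \ref{nonsigass}. The main point worth being careful about is tracking the roles of the four Hilbert spaces $\mathcal{H}^{E}_{in}, \mathcal{H}^{E}_{out}, \mathcal{H}_B, \mathcal{H}_{B_{out}}$ so that the reduction produces a bona fide quantum realisation on the correct spaces, and confirming the POVM normalisation from unitarity of $V$ restricted to the relevant input subspace.
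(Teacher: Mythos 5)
Your proof is correct and follows essentially the same route as the paper's: both specialise the unitary representation of Theorem \ref{nonsigass} to $N=1$, observe that $V$ then acts only on Alice's input and the ancilla while Bob's share of $|R\rangle$ is left untouched, and read off a quantum realisation from $|R\rangle$ and $V$ (the paper phrases this final step as the channel being localizable across the Alice--Bob cut and invokes Proposition \ref{assQprop}, whereas you write the resulting measurement out explicitly). The only cosmetic gap is that the paper's definition of a quantum assemblage uses projective measurements while you produce a POVM; since $M_{a|x}$ is the compression of the projector $V^{\dagger}(\ket{a}\bra{a}\otimes\id)V$ onto $\ket{x}\otimes\mathcal{H}^{E}_{in}$, absorbing the input register into Alice's share (or a standard Naimark dilation) makes it projective.
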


It is important to note that this is only true for the case of $N=1$, i.e. a single untrusted party. In subsection \ref{postquantumex}, we use causal channels to give examples of post-quantum steering, i.e. non-signalling assemblages that are not quantum. We know that post-quantum correlations witness a non-localizable channel, then any assemblage that gives post-quantum correlations must have an associated non-localizable channel. However, there exist non-quantum assemblages that will never give rise to non-quantum correlations \cite{pqsp}: there are assemblages that cannot be channel-defined by a localizable channel, but for any measurement made on the Bob's system the corresponding Bell correlations are channel-defined by a local channel. This highlights that post-quantum steering is distinct from post-quantum non-locality, and indeed from non-locality itself.

For $N\geq 2$, as pointed out in \cite{pqsp}, characterising the set of quantum assemblages is difficult, and at least as hard as characterising the set of quantum correlations. However, the almost quantum assemblages are a superset of the quantum assemblages, and for the former there is a characterisation in terms of a semi-definite program. In the next result we give a \textit{physical interpretation} for the almost quantum assemblages.

\begin{thm} \label{stalmost}
An assemblage $\{\sigma_{a_{1}...a_{N}|x_{1}...x_{N}}\}$ is almost quantum if and only if there exists an almost localizable channel $\Lambda_{1...N,B}^{\mathsf{\tilde{Q}}}:\mathcal{L}(\mathcal{H}_{m}^{\otimes N}\otimes\mathcal{H}_{B_{in}})\rightarrow\mathcal{L}(\mathcal{H}_{d}^{\otimes N}\otimes\mathcal{H}_{B_{out}})$ such that the assemblage is channel-defined by $\Lambda_{1...N,B}^{\mathsf{\tilde{Q}}}$.
\end{thm}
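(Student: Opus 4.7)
The plan is to mirror the proof strategy of Prop.~\ref{thmalmost} with appropriate modifications to handle Bob's untraced system. The theorem is an ``if and only if,'' so I would treat the two directions separately, with the forward (construction) direction doing most of the work and the converse being a direct extraction.

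For the forward direction, I would start from the standard realisation of an almost quantum assemblage. Such an assemblage admits projectors $\{\Pi^{(i)}_{a_i|x_i}\}$ on a Hilbert space $\mathcal{H}_{d'}$ together with a pure state $|\psi\rangle \in \mathcal{H}_{d'} \otimes \mathcal{H}_B$, such that projectors associated with different parties commute on $|\psi\rangle$, and
\begin{equation*}
\sigma_{a_1 \ldots a_N|x_1 \ldots x_N} = \Tr_{d'}\Bigl\{\prod_{i=1}^{N}\Pi^{(i)}_{a_i|x_i}\,|\psi\rangle\langle\psi|\,\prod_{i=1}^{N}\Pi^{(i)}_{a_i|x_i}\Bigr\}\,.
\end{equation*}
I would then construct an almost localizable channel exactly as in the proof of Prop.~\ref{thmalmost}: the global ancilla $E$ carries $|\psi\rangle$, the local ancillas $E_j$ are qudits initialised in $|0\rangle$, and each untrusted party's unitary is $U^{(j)} = \sum_x |x\rangle\langle x| \otimes O^{(j)}_x$ with $O^{(j)}_x$ defined as in eq.~\eqref{eq:theOaq}, followed by a swap moving the ancilla qudit to the output port. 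The new ingredient here is Bob's side of the channel: Bob's input $|0\rangle \in \mathcal{H}_{B_{in}}$ is traced out and his output port is connected to the $\mathcal{H}_B$-factor of the global ancilla state $|\psi\rangle$ (so his ``unitary'' is effectively a swap with a discarded input). Almost localizability then follows because the commutation of the $\Pi^{(j)}_{a_j|x_j}$ on $|\psi\rangle$ lifts to permutation invariance of the products of the $U^{(j)}$ acting on the ancilla state (Bob's trivial operation commutes with everything), and causality on Bob is automatic because Bob's marginal is a fixed reduced state of $|\psi\rangle$ independent of the untrusted inputs. Plugging $|x_j\rangle$ into each untrusted input port and projecting outputs on $|a_j\rangle$ yields exactly $\sigma_{a_1 \ldots a_N|x_1 \ldots x_N}$.

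For the converse direction, I would take an almost localizable channel $\Lambda^{\tilde{\mathsf{Q}}}_{1 \ldots N,B}$ that channel-defines the assemblage. Using the unitary representation in Def.~\ref{almostloc}, the preparation of $|x_j\rangle$, application of $U_{jE}$, and projection on $|a_j\rangle$ at each untrusted party can be collapsed (via the standard Naimark-type argument) into a projective measurement $\Pi^{(j)}_{a_j|x_j}$ acting on the global ancilla state $|\psi\rangle_E$ tensored with the $|x_j\rangle$-preparation register. Bob's marginal of the resulting state, computed by tracing out the untrusted systems, coincides with $\sigma_{a_1 \ldots a_N|x_1 \ldots x_N}$ after the projectors fire. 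The permutation invariance condition in Def.~\ref{almostloc} precisely encodes that the effective projectors $\Pi^{(j)}_{a_j|x_j}$ commute on the state obtained by tensoring $|\psi\rangle_E$ with the fixed input preparations, which is the defining condition for an almost quantum realisation of the assemblage.

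The main obstacle I anticipate is bookkeeping around Bob's registers: the Hilbert space on which the ``almost quantum state'' lives in the converse direction must be identified with a combination of the global ancilla $\mathcal{H}_E$, the fixed input preparations on the untrusted side, and the space carrying Bob's output. Care is needed to ensure that this identification makes the commutation condition of Def.~\ref{almostloc} translate cleanly into the commutation condition defining almost quantum assemblages, especially since Bob's system is not acted on by any measurement operator and only enters via a partial trace. Once this identification is set up correctly, both directions follow the same pattern as in the Bell-scenario proof of Prop.~\ref{thmalmost}.
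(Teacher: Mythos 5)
Your proposal is correct and follows essentially the same route as the paper: reduce to the characterisation of almost quantum assemblages by a state together with projectors that commute on that state, and then reuse the channel construction from Prop.~\ref{thmalmost}, with Bob's output port wired to the $\mathcal{H}_{B}$ factor of the global ancilla. The only caveat is that the ``standard realisation'' you take as your starting point is precisely Lemma~\ref{lemmaAQ} --- the paper defines almost quantum assemblages via the semidefinite program of Def.~\ref{def:sdpaq}, so that equivalence is where the bulk of the work lives and must be cited or proved rather than assumed.
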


The full proof is in Appendix \ref{ap:AQCSDP}, but is essentially a consequence of the following lemma, which is also proven in Appendix \ref{ap:AQCSDP}. Given this lemma, one can essentially use the proof of Theorem \ref{thmalmost} to obtain the result in Theorem \ref{stalmost}. 

\begin{lemma}\label{lemmaAQ}
An assemblage $\{\sigma_{a_{1}...a_{N}|x_{1}...x_{N}}\}$ is almost quantum if and only if there exists a Hilbert space $\mathcal{H}\cong\mathcal{K}\otimes\mathcal{H}_{B}$, quantum state $|\psi\rangle\in\mathcal{H}$, and projective measurements $\{\Pi_{a_{j}|x_{j}}\in\mathcal{L}(\mathcal{K})\}$ for each $j$th party where $\sum_{a_{j}}\Pi_{a_{j}|x_{j}}=\id$ and for all permutations $\pi$ of $\{1,...,N\}$, $\prod_{j=1}^{N}\Pi_{a_{\pi(j)}|x_{\pi(j)}}|\psi\rangle=\prod_{j=1}^{N}\Pi_{a_{j}|x_{j}}|\psi\rangle$, such that
\begin{equation*}
\sigma_{a_{1}...a_{N}|x_{1}...x_{N}}=\Tr_{\mathcal{K}}\{\prod_{j=1}^{N}\Pi_{a_{j}|x_{j}}\otimes\id_{{B}}|\psi\rangle\langle\psi|\}.
\end{equation*}
\end{lemma}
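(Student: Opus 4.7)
The plan is to prove this lemma by establishing both directions through the moment-matrix formulation of almost quantum assemblages reviewed from Ref.~\cite{pqsp}, where almost quantum assemblages are characterised via a semidefinite program involving an operator-valued moment matrix whose entries live in $\mathcal{L}(\mathcal{H}_{B})$ and which must satisfy a permutation-symmetry constraint analogous to the almost quantum level $\tilde{\mathsf{Q}}$ of the NPA hierarchy.

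For the easy direction (physical representation implies almost quantum), I would start from the data $|\psi\rangle\in\mathcal{K}\otimes\mathcal{H}_{B}$, the projective measurements $\{\Pi_{a_{j}|x_{j}}\}$ on $\mathcal{K}$, and the commutation-on-state condition. From this, I would construct the moment matrix whose entries are operators on $\mathcal{H}_{B}$ of the form $\Tr_{\mathcal{K}}\{S^{\dagger}T\otimes\id_{B}\,|\psi\rangle\langle\psi|\}$, with $S,T$ products of the $\Pi_{a_{j}|x_{j}}$. Positivity of this operator-valued matrix is automatic since it is the Gram matrix of the vectors $S|\psi\rangle$ viewed as elements of $\mathcal{K}\otimes\mathcal{H}_{B}$ with values in $\mathcal{L}(\mathcal{H}_{B})$. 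The permutation invariance of products of different parties' operators that is required by the almost quantum SDP follows immediately from the hypothesis that $\prod_{j}\Pi_{a_{\pi(j)}|x_{\pi(j)}}|\psi\rangle=\prod_{j}\Pi_{a_{j}|x_{j}}|\psi\rangle$. Finally, the assemblage recovered from the moment matrix coincides with the stated partial-trace expression by construction.

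For the harder direction (almost quantum implies physical representation), I would invoke a GNS-style construction applied to the moment-matrix certificate. Given a positive operator-valued moment matrix satisfying the almost quantum constraints, I would form the free vector space over formal strings in the symbols $\Pi_{a_{j}|x_{j}}$, equip it with the $\mathcal{L}(\mathcal{H}_{B})$-valued inner product read off from the moment matrix, quotient by its null ideal, and complete. The resulting space naturally takes the form $\mathcal{K}\otimes\mathcal{H}_{B}$, where $\mathcal{K}$ is built from the Alice-side strings while $\mathcal{H}_{B}$ remains the trusted party's Hilbert space. The cyclic vector $|\psi\rangle$ corresponds to the empty string, the $\Pi_{a_{j}|x_{j}}$ act as left multiplication on $\mathcal{K}$, and the assemblage elements are recovered via $\Tr_{\mathcal{K}}$ as desired. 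The commutation-on-state property then follows directly from the permutation-symmetry constraint in the SDP definition, evaluated on the cyclic vector.

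The main obstacle I anticipate is ensuring that Bob's Hilbert space remains properly decoupled throughout the GNS construction, so that the extracted operators genuinely act as $\Pi_{a_{j}|x_{j}}\otimes\id_{B}$ rather than mixing the two factors. Cleanly establishing this requires carefully exploiting the fact that the SDP for almost quantum assemblages in \cite{pqsp} only involves operators indexed by the untrusted parties' inputs and outputs, while the $\mathcal{H}_{B}$-dependence is packaged entirely into the operator-valued entries. A secondary technical point is verifying that the operators extracted from the moment matrix are genuinely projective, i.e.\ idempotent, Hermitian, and with $\sum_{a_{j}}\Pi_{a_{j}|x_{j}}=\id$; this is standard in NPA-type proofs and follows from the normalisation and orthogonality relations imposed on the moment matrix together with the quotient by the null ideal.
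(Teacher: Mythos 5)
Your forward direction is essentially the paper's argument: build the operator-valued moment matrix from products $S^{\dagger}T$ of the $\Pi_{a_{j}|x_{j}}$ sandwiched in $|\psi\rangle\langle\psi|$ and traced over $\mathcal{K}$, obtain positivity from the Gram structure of the vectors $(T\otimes\id_{B})|\psi\rangle$, and read off conditions (ii)--(v) of Def.~\ref{def:sdpaq} from projectivity and the commutation-on-state hypothesis. That half is sound.

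The converse is where there is a genuine gap. A GNS-style construction in which the $\Pi_{a_{j}|x_{j}}$ ``act as left multiplication'' on the span of formal strings is not available at the almost quantum level, because the word set $W=\{(\vec{a}_{S}|\vec{x}_{S}):S\subseteq\{1,\dots,N\}\}$ is \emph{not closed under left multiplication by the generators}: multiplying the symbol $\Pi_{a_{k}|x_{k}}$ onto a word that already contains party $k$ with a different input $x_{k}'$ produces a string of length two in party $k$, whose moments are simply not recorded in $\Gamma$. Consequently left multiplication does not descend to a well-defined operator on the quotient space, and even idempotence of the would-be operator (i.e.\ $\langle S|\Pi^{2}|T\rangle=\langle S|\Pi|T\rangle$) involves moments outside the certificate, so it is not ``standard'' in the way you claim. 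The paper avoids this by not using multiplication operators at all: it takes the Gram vectors $\{\ket{w}\}_{w\in W}$ of $\Gamma$, defines $E^{(k)}_{a_{k}|x_{k}}$ as the \emph{orthogonal projection} onto $\mathrm{span}\{\ket{(a_{k}\vec{a}'|x_{k}\vec{x}')}:(\vec{a}'|\vec{x}')\in S_{\bar{k}}\}$ (with the last outcome completed by $\id-\sum_{a<d}E^{(k)}_{a|x_{k}}$), uses condition (ii) to get orthogonality of these projections and condition (v) to show $E^{(k)}_{a_{k}|x_{k}}\ket{v}=\ket{(a_{k}|x_{k})\,v}$ for $v\in S_{\bar{k}}$, from which both the reproduction of the assemblage on the cyclic vector $\ket{\emptyset}$ and the commutation-on-state property follow. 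You would need to replace your left-multiplication step by this projection construction (or an equivalent device) for the converse to go through. Your concern about decoupling Bob's factor is legitimate but is the smaller issue; the paper resolves it by the standard identification of $M_{|W|}(M_{D})$ with $M_{|W|}\otimes M_{D}$, under which the projections just constructed act as $E\otimes\id_{D}$.
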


\subsection{Connections between channels and assemblages}\label{constchanass}

In Section \ref{construct}, we indicated the general way to obtain correlations given a channel, and then we gave a canonical way of constructing a channel from correlations. In this section, we will do exactly the same for the case of steering. 

In analogy with the case of Bell non-locality, we will first describe a general way to generate an assemblage from a channel. As in the case of Bell non-locality, the $N$ untrusted parties can prepare a state $\rho_{x_{j}}\in\mathcal{D}(\mathcal{H}_{in}^{j}\otimes\mathcal{H}_{aux}^{j})$ indexed by their input $x_{j}$ for $j\in\{1,...,N\}$, put one of its subsystems (living in $\mathcal{H}_{in}^{j}$) into the channel, and jointly measure the output of the channel and other subsystems associated with initial state $\rho_{x_{j}}$.  The measurements then are the operators $M_{a_{j}}\in\mathcal{L}(\mathcal{H}_{out}^{j}\otimes\mathcal{H}_{aux}^{j})$, which have outcomes $a_{j}$. The novelty in steering is the trusted party, and there is a potential ambiguity in how to generate an assemblage from a channel with $N$ input port and $N$ output ports. We could restrict to channels that trace out the input of the trusted party (or, equivalently, there is no input port), or the trusted party just always inputs the same quantum state into the channel. The second approach is more general when one considers the possibility that the trusted party has an auxiliary sytem with Hilbert space $\mathcal{H}_{B_{aux}}$, and prepares the state $\sigma_{B}\in\mathcal{D}(\mathcal{H}_{B_{in}}\otimes\mathcal{H}_{B_{aux}})$; there could be correlations between the input system and auxiliary system that would be destroyed by tracing out the input system. This more general approach results in the assemblage being a set of operators that act on the Hilbert space $\mathcal{H}_{B_{out}}\otimes\mathcal{H}_{B_{in}}$, and is in the spirit of \textit{channel steering} \cite{pianichan}, which we touch upon later. 

To summarise this discussion, given a causal channel $\Lambda_{1...N,B}$, each $j$th untrusted party will prepare the state $\rho_{x_{j}}\in\mathcal{D}(\mathcal{H}_{in}^{j}\otimes\mathcal{H}_{aux}^{j})$, and obtain measurement outcomes corresponding to the operators $M_{a_{j}}\in\mathcal{L}(\mathcal{H}_{out}^{j}\otimes\mathcal{H}_{aux}^{j})$. The trusted party with Hilbert space $\mathcal{H}_{B}$, will prepare the state $\sigma_{B}\in\mathcal{D}(\mathcal{H}_{B_{in}}\otimes\mathcal{H}_{B_{aux}})$, and thus generate assemblage elements $\sigma_{a_{1}...a_{N}|x_{1}...x_{N}}\in\mathcal{D}(\mathcal{H}_{B_{out}}\otimes\mathcal{H}_{B_{aux}})$, which can be obtained as
\begin{equation}\label{genass}
\sigma_{a_{1}...a_{N}|x_{1}...x_{N}}=\Tr_{1...N}\left\{\bigotimes_{j=1}^{N}M_{a_{j}}\otimes\id_{B_{out}}\left(\Lambda_{1...N,B}\otimes\id_{aux}(\bigotimes_{j=1}^{N}\rho_{x_{j}}\otimes\sigma_{B})\right)\right\},
\end{equation}
where $\id_{aux}$ is the identity operator acting on all Hilbert spaces $\mathcal{H}_{aux}^{j}$.

Let us now move on to the case of constructing a generic channel from an assemblage. That is, given an assemblage with elements $\sigma_{a_{1}...a_{N}|x_{1}...x_{N}}\in\mathcal{D}(\mathcal{H}_{B})$, we specify a canonical channel $\Sigma^{c}_{1...N}:\mathcal{L}(\mathcal{H}_{m}^{\otimes N}\otimes\mathcal{H}_{B_{in}})\rightarrow\mathcal{L}(\mathcal{H}_{d}^{\otimes N}\otimes\mathcal{H}_{B_{out}})$, with $\mathcal{H}_{B_{out}}=\mathcal{H}_{B}$, that will reproduce that assemblage, given appropriate choices of preparations and measurements. This canonical channel is defined as
\begin{equation*}
\Sigma^{c}_{1...N,B}(\cdot)=\sum_{x_{1},...,x_{N}}\sum_{a_{1},...,a_{N}}|a_{1}...a_{N}\rangle\langle x_{1}...x_{N}|\Tr_{B_{in}}(\cdot)|x_{1}...x_{N}\rangle\langle a_{1}...a_{N}|\otimes\sigma_{a_{1}...a_{N}|x_{1}...x_{N}},
\end{equation*}
and can be seen as a channel which completely decoheres the input and output systems with respect to a basis, traces out the trusted party's input, and then produces assemblage elements in the trusted party's output of the channel. Notice that the assemblage elements $\sigma_{a_{1}...a_{N}|x_{1}...x_{N}}$ are channel defined by $\Sigma^{c}_{1...N,B}$, as long as appropriate elements of an orthonormal basis are chosen. 

{The channel $\Sigma^{c}_{1...N,B}$ can moreover be used to} generate correlations, and not just assemblages. This is done by the method outlined in Section \ref{construct}, where the correlations are obtained as 
\begin{equation*}
p(a'_{1},...,a'_{N},a'_{B}|x'_{1},...,x'_{N},x'_{B})=\Tr\left\{\bigotimes_{j=1}^{N}M_{a'_{j}}\otimes M_{a'_{B}}\left(\Sigma^{c}_{1...N,B}\otimes\id_{aux}(\bigotimes_{j=1}^{N}\rho_{x'_{j}}\otimes\rho_{x'_{B}})\right)\right\},
\end{equation*}
from the local measurements $M_{a'_{j}}$ and states $\rho_{x'_{j}}$, where $x'_{B}$ and $a'_{B}$ represent the trusted party's inputs and outputs respectively. We can now ask when this channel gives non-local correlations, or conversely, when is a channel  $\Sigma^{c}_{1...N,B}$ local-limited. The following result addresses this, and is proven in Section \ref{appseccon} of the Appendix.
\begin{prop}\label{canonass}
Given $\Sigma^{c}_{1...N,B}(\cdot)$ from assemblage elements $\sigma_{a_{1}...a_{N}|x_{1}...x_{N}}$, this channel is local-limited if for all measurements $P_{a_{B}|x_{B}}\in\mathcal{L}(\mathcal{H}_{B})$ indexed by the choice $x_{B}$ and outcomes $a_{B}$, the correlations {$p(a_{1},...,a_{N},a_{B}|x_{1},...,x_{N},x_{B}):=\Tr\{P_{a_{B}|x_{B}}\sigma_{a_{1}...a_{N}|x_{1}...x_{N}}\}$} are local.
\end{prop}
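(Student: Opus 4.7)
The plan is to show that, because $\Sigma^{c}_{1\ldots N,B}$ fully decoheres the $N$ untrusted-party input wires in the bases $\{|x_j\rangle\}$ and traces out Bob's input, any correlations generated via eq.~\eqref{gencorr} factor into (i) local pre- and post-processing at each party and (ii) a single ``quantum" step which is just a measurement on the assemblage element. If the latter always produces local correlations, then so do the overall correlations. Concretely, I would write each input state as $\rho_{x'_j}=\sum_{x_j,x_j'}|x_j\rangle\langle x_j'|_{in}\otimes\gamma^{j}_{x_j,x_j'}(x'_j)$, and observe that after $\Sigma^{c}_{1\ldots N,B}$ acts, only the diagonal pieces $\gamma^{j}_{x_j,x_j}(x'_j)$ survive (together with $\Tr_{B_{in}}\rho_{x'_B}$). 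Setting $q_j(x_j|x'_j)=\Tr\gamma^{j}_{x_j,x_j}(x'_j)$ and $\rho^{aux|x_j}_{x'_j}=\gamma^{j}_{x_j,x_j}(x'_j)/q_j(x_j|x'_j)$, one obtains a classical-quantum decomposition of the post-channel state with classical register $(x_1,\ldots,x_N,a_1,\ldots,a_N)$ and Bob's quantum part equal to $\sigma_{a_1\ldots a_N|x_1\ldots x_N}\otimes\Tr_{B_{in}}\rho_{x'_B}$.

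Next I would carry out the measurement step. The trace over the product measurement $(\otimes_j M_{a'_j})\otimes M_{a'_B}$ factorises across the $N+1$ sites. For each untrusted party this yields a conditional distribution
\[
r_j(a'_j|a_j,x_j,x'_j)=\Tr\!\left[M_{a'_j}\bigl(|a_j\rangle\langle a_j|\otimes\rho^{aux|x_j}_{x'_j}\bigr)\right],
\]
which is a valid probability because $\sum_{a'_j}M_{a'_j}=\id$. For Bob, the induced effective POVM on $\mathcal{H}_B$ is
\[
P_{a'_B|x'_B}=\Tr_{B_{aux}}\!\left[M_{a'_B}\bigl(\id_B\otimes\Tr_{B_{in}}\rho_{x'_B}\bigr)\right],
\]
and the Bob factor reduces to $\Tr\{P_{a'_B|x'_B}\,\sigma_{a_1\ldots a_N|x_1\ldots x_N}\}$. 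This is precisely the quantity assumed to yield local correlations, so by hypothesis there exists a hidden variable $\lambda$ with distribution $\pi(\lambda)$ such that this quantity equals $\sum_{\lambda}\pi(\lambda)\bigl[\prod_j p_j(a_j|x_j,\lambda)\bigr]p_B(a'_B|x'_B,\lambda)$.

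The last step is to combine and rearrange. Plugging the LHV form back in and swapping the sums, the expression factorises across parties because both the quantum and the classical pieces depend on each $j$ only through $(x_j,a_j,a'_j,x'_j)$. Defining
\[
\tilde p_j(a'_j|x'_j,\lambda)=\sum_{x_j,a_j}q_j(x_j|x'_j)\,r_j(a'_j|a_j,x_j,x'_j)\,p_j(a_j|x_j,\lambda),
\]
which is easily checked to be normalised in $a'_j$, one obtains
\[
p(a'_1,\ldots,a'_N,a'_B|x'_1,\ldots,x'_N,x'_B)=\sum_{\lambda}\pi(\lambda)\Bigl[\prod_{j=1}^{N}\tilde p_j(a'_j|x'_j,\lambda)\Bigr]p_B(a'_B|x'_B,\lambda),
\]
which is a valid local model with hidden variable $\lambda$. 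Hence $\Sigma^{c}_{1\ldots N,B}$ is local-limited.

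The routine bookkeeping of the auxiliary systems is the only real subtlety: one has to be careful that the ``effective" POVM elements $r_j$ and $P_{a'_B|x'_B}$ genuinely define normalised conditional distributions (respectively POVMs), so that the LHV hypothesis applies and the resulting $\tilde p_j$ sum to one. Once that is handled, the argument is essentially an exchange of summation orders powered by the decohering, entanglement-breaking structure of the canonical channel. No additional properties of the assemblage are needed beyond the assumed locality of the correlations $\Tr\{P_{a_B|x_B}\sigma_{a_1\ldots a_N|x_1\ldots x_N}\}$.
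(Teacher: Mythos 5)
Your proof is correct and follows essentially the same route as the paper's: decohere the untrusted inputs in the canonical channel's preferred bases, absorb Bob's preparation and measurement into an effective POVM $P_{a'_B|x'_B}$ on $\mathcal{H}_B$, and reduce the generated correlations to a local stochastic post-processing of $\Tr\{P_{a'_B|x'_B}\,\sigma_{a_1\ldots a_N|x_1\ldots x_N}\}$. The only difference is that you write out the final LHV bookkeeping (the construction of $\tilde p_j$) explicitly, whereas the paper simply invokes the fact that local post-processing preserves locality, as established in its proof of Proposition \ref{canon}.
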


A direct consequence of this result is that the canonical channel that one would construct for the post-quantum assemblage given in Ref.~\cite{pqsp} is a local-limited, yet non-localizable channel. Furthermore, this channel is actually not even almost localizable \cite{pqsp}. We summarise all of these observations in Fig.~\ref{figsets}. 

{One can define moreover the set of channels restricted to producing only quantum correlations, and call them the \textit{quantum-limited} channels, where these correlations can be non-local, therefore defining a larger set than the set of local-limited channels. We can then take, for instance, the post-quantum assemblages from Ref.~\cite{newpaper} that can result in non-local but quantum correlations, and from their canonical channels give quantum-limited channels that are not almost localizable.}

\begin{figure}\label{figsets}
\includegraphics[width=0.8\textwidth]{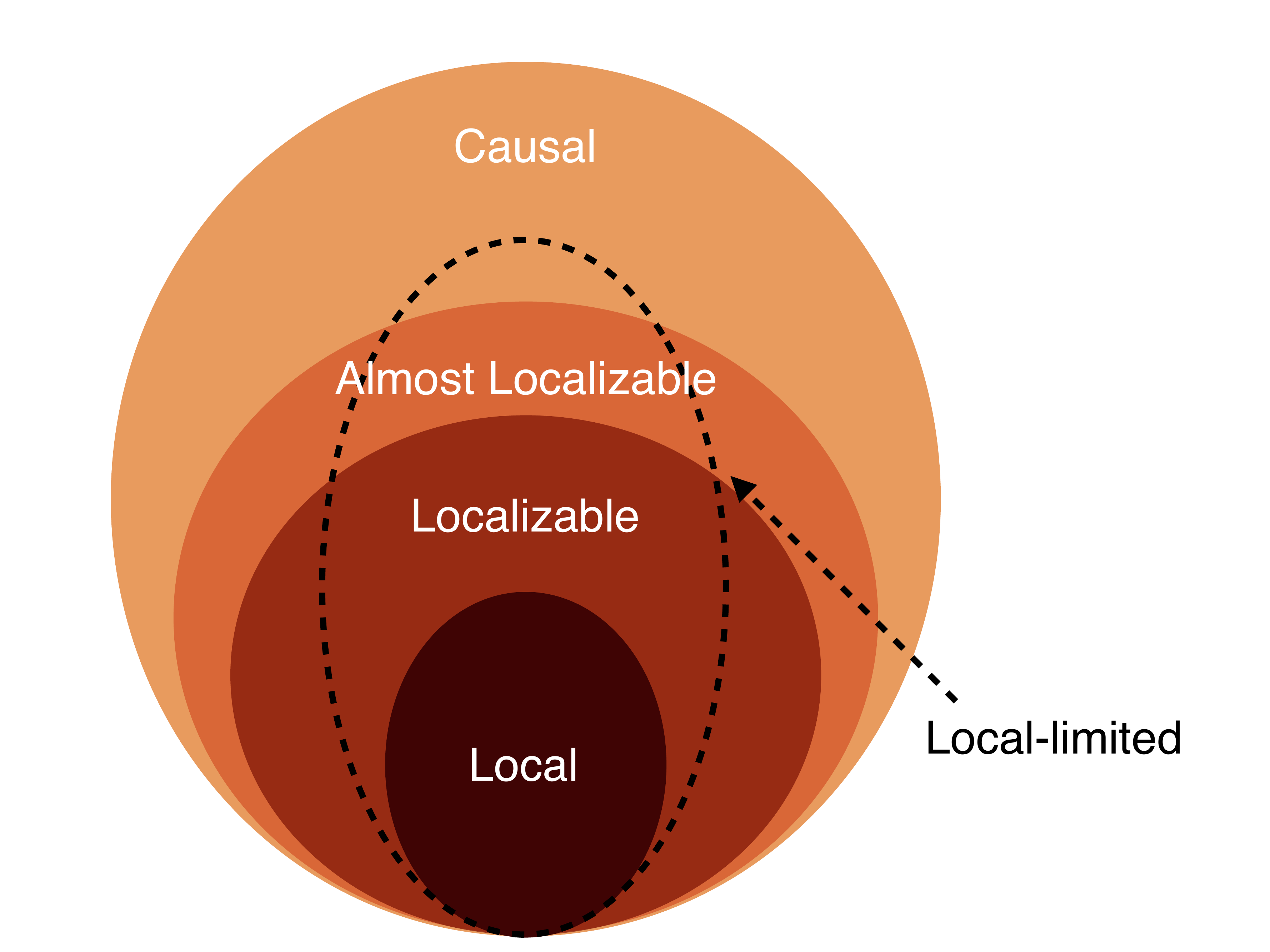}
\caption{A schematic of the sets of causal channels. We have set out the hierarchy in the paper, where the set of local-limited channels (those that do not result in Bell non-locality) intersects all families of channels; this results from knowing that there are non-almost localizable channels that are local-limited, and by taking convex combinations.} 
\end{figure}

\subsection{Examples of post-quantum steering}\label{postquantumex}

In this section we have outlined a constructive way to understand post-quantum steering: assemblages that cannot be channel-defined by localizable channels.  We give a couple of examples of post-quantum steering that are a simple consequence of Theorem \ref{nonsigass}. 

The first example of post-quantum steering is depicted in Fig.~\ref{f:SPRmap}, and is in a tripartite scenario where Alice and Bob steer Charlie, whose Hilbert space has dimension $d_{C}=2$. There, an ancilla is initialised on state $\rho_R = \tfrac{\ket{00}\bra{00} + \ket{11}\bra{11} }{2}_{AB} \otimes \tfrac{\ket{0}\bra{0}+\ket{1}\bra{1}}{2}_{C}$, where $AB$ and $C$ denote Alice and Bob's and Charlie's share of the ancilla system. Then, the part of the ancilla shared by Alice and Bob is used as the ancilla in the channel that generates PR box correlations, while the qubit on state $\tfrac{\ket{0}\bra{0}+\ket{1}\bra{1}}{2}$ is output by Charlie. This map is causal since it has exactly the same form as described in Theorem \ref{nonsigass} (after one locally dilates all processes to be unitary). Once that Alice and Bob input qubits in the computational basis and measure their output systems, the following assemblage elements are then prepared in Charlie's lab: 
$$
\sigma_{ab|xy} = p_{PR}(ab|xy) \, \tfrac{\id}{2}\,.
$$
This assemblage is a non-signalling one which has no quantum realisation \cite{pqsp}.  {However, note that we can have post-quantum steering without any entanglement (across any of the bipartitions) in the shared ancilla state $\rho_{R}$ in the causal channel. In our next example, the ancilla in the channel does consist of entanglement, and the channel generates pure state entanglement between three parties.} 
\begin{figure}
\begin{center}
\begin{tikzpicture}
\node at (-7.5,0) {$\sigma^{PR}_{ab|xy}$};
\node at (-6,0) {$\equiv$};

\draw[thick, ,rounded corners] (-5,-1) rectangle (0.5,1) ;
\draw[thick] (-4,-1.5) -- (-4,-0.9);
\draw[thick] (-2,-1.5) -- (-2,-0.9);
\draw[thick] (-4,1) -- (-4,1.5);
\draw[thick] (-2,0.75) -- (-2,1.5);
\draw[thick] (-0.5,1) -- (-0.5,1.5);
\draw[thick] (-0.5,-1.5) -- (-0.5,-1);

\draw[thick] (-0.5,-1) -- (-0.5,-0.8);
\draw[thick, xshift=-3.5cm, yshift=-1.65cm] (2.9,0.99) -- (3.1,0.99);
\draw[thick, xshift=-3.5cm, yshift=-1.65cm] (2.85,0.92) -- (3.15,0.92);
\draw[thick, xshift=-3.5cm, yshift=-1.65cm] (2.8,0.85) -- (3.2,0.85);
\draw[thick] (-2.4,-0.45) to [out=35,in=-90] (-0.5,1);

\draw[thick, rounded corners] (-4,-2) -- (-3.5,-1.5) -- (-4.5, -1.5) -- cycle;
\node at (-4,-1.7) {$x$};
\draw[thick, rounded corners] (-2,-2) -- (-1.5,-1.5) -- (-2.5, -1.5) -- cycle;
\node at (-2,-1.7) {$y$};
\draw[thick, rounded corners] (-0.5,-2) -- (0,-1.5) -- (-1, -1.5) -- cycle;
\node at (-0.5,-1.7) {$0$};

\draw[thick, rounded corners] (-4,2) -- (-3.5,1.5) -- (-4.5, 1.5) -- cycle;
\node at (-4,1.7) {$a$};
\draw[thick, rounded corners] (-2,2) -- (-1.5,1.5) -- (-2.5, 1.5) -- cycle;
\node at (-2,1.7) {$b$};

\node at (-4,2.5) {Alice};
\node at (-2,2.5) {Bob};
\node at (-0.5,2.5) {Charlie};
\node at (-0.5,1.7) {$\sigma_{ab|xy}$};

\draw[thick, rounded corners, xshift=-6cm] (3,-0.95) -- (3.9,-0.45) -- (2.1, -0.45) -- cycle;
\node at (-3,-0.66) {$R$};

\draw[thick] (-3,-0.45) -- (-3,-0.25);
\draw[thick, xshift=-6cm, yshift=-1.1cm] (2.9,0.99) -- (3.1,0.99);
\draw[thick, xshift=-6cm, yshift=-1.1cm] (2.85,0.92) -- (3.15,0.92);
\draw[thick, xshift=-6cm, yshift=-1.1cm] (2.8,0.85) -- (3.2,0.85);

\draw[thick] (-3.4,-0.45) to [out=90,in=-90] (-4,1);
\draw[thick, rounded corners, yshift=-2.4cm] (-4,2) -- (-3.5,1.5) -- (-4.5, 1.5) -- cycle;
\node at (-4,-0.7) {$x$};
\draw[thick, rounded corners, yshift=-2.4cm] (-2,2) -- (-1.5,1.5) -- (-2.5, 1.5) -- cycle;
\node at (-2,-0.7) {$y$};

\draw[thick, rounded corners] (-2.25, 0.25) rectangle (-1.75, 0.75 );
\node at (-2, 0.5) {$X$};

\draw[thick] (-2.6,-0.45) to [out=90,in=-90] (-2,0.25);

\draw[thick, dashed, color=gray!50!white] (-2,-0.45) to [out=90,in=90] (-4,-0.45);

\node at (-3, 0.5) {$\wedge$};
\draw[thick, dashed, color=gray!50!white] (-3,0.5) circle [radius=0.25];
\draw[thick, dashed, color=gray!50!white] (-3, 0.1) -- (-3, 0.24);
\draw[thick, dashed, color=gray!50!white, ->] (-2.7, 0.5) -- (-2.25, 0.5);

\end{tikzpicture}
\end{center}
\caption{A causal channel that generates a non-signalling assemblage given particular input states and measurements. A tripartite ancilla state $\rho_R = \tfrac{\ket{00}\bra{00} + \ket{11}\bra{11} }{2} \otimes \tfrac{\ket{0}\bra{0}+\ket{1}\bra{1}}{2}$ is generated by preparing the pure state $\ket{R}$ and tracing out part of it. Charlie's output is his part of the ancilla system, which is in the state $\tfrac{\ket{0}\bra{0}+\ket{1}\bra{1}}{2}$. Alice and Bob then implement, on their ancillary systems, the channel that generates PR box correlations for given particular measurements. Therefore, the assemblage prepared in Charlie's lab is $\sigma_{ab|xy} = p_{PR}(ab|xy) \, \tfrac{\id}{2}$. }
\label{f:SPRmap}
\end{figure}
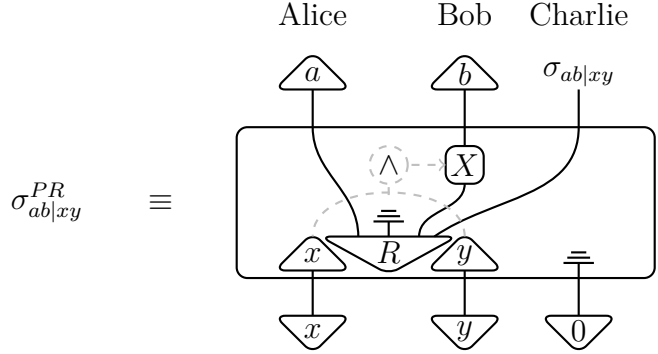

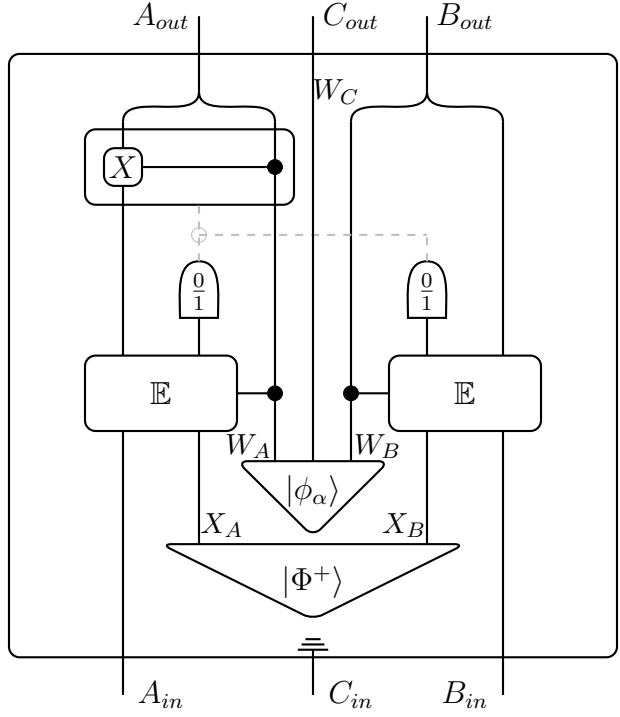
\begin{figure}
\begin{center}
\begin{tikzpicture}

\draw[thick, rounded corners] (-2,0) -- (0,-1) -- (2,0) -- cycle;
\node at (0,-0.5) {$\ket{\Phi^{+}}$};
\draw[thick] (-1.5,0) -- (-1.5,1.5);
\draw[thick] (1.5,0) -- (1.5,1.5);

\draw[thick, rounded corners] (-1,1.1) -- (0,0.1) -- (1,1.1) -- cycle;
\node at (0,0.7) {$\ket{\phi_\alpha}$};
\draw[thick] (-0.5,1.1) -- (-0.5,5.6);
\draw[thick] (0.5,1.1) -- (0.5,5.6);

\draw[thick] (-2.5,-2) -- (-2.5,1.5);
\draw[thick] (2.5,-2) -- (2.5,1.5);

\draw[thick, rounded corners] (-3,1.5) rectangle (-1,2.5);
\draw[thick, rounded corners] (3,1.5) rectangle (1,2.5);
\node at (-2,2) {$\mathbb{E}$};
\node at (2,2) {$\mathbb{E}$};
\draw[thick] (-1,2) -- (-0.5,2);
\draw[thick] (1,2) -- (0.5,2);
\node[draw, scale=0.5, circle, fill] at (-0.5,2) {};
\node[draw, scale=0.5, circle, fill] at (0.5,2) {};

\draw[thick] (-1.25,3) to [out=90, in=0] (-1.5,3.75) to [out=180, in=90] (-1.75,3)--cycle;
\draw[thick] (1.25,3) to [out=90, in=180] (1.5,3.75) to [out=0, in=90] (1.75,3)--cycle;
\draw[thick] (-1.5,2.5)--(-1.5,3);
\draw[thick] (1.5,2.5)--(1.5,3);
\node at (-1.5,3.35) {$\frac{0}{1}$};
\node at (1.5,3.35) {$\frac{0}{1}$};

\draw[thick, rounded corners] (-3,4.5) rectangle (-0.25,5.5);
\draw[thick] (-2.5,2.5) -- (-2.5,4.75);
\draw[thick, rounded corners] (-2.75,4.75) rectangle (-2.25,5.25);
\node at (-2.5,5) {$X$};
\draw[thick] (-2.25,5) -- (-0.5,5);
\node[draw, scale=0.5, circle, fill] at (-0.5,5) {};
\draw[thick] (-2.5,5.25) -- (-2.5,5.6);

\draw[thick] (-2.5,5.6) to [out=90, in=-90] (-1.5,6); 
\draw[thick] (-0.5,5.6) to [out=90, in=-90] (-1.5,6);

\draw[thick] (2.5,2.5) -- (2.5,5.6);
\draw[thick] (2.5,5.6) to [out=90, in=-90] (1.5,6); 
\draw[thick] (0.5,5.6) to [out=90, in=-90] (1.5,6);

\draw[thick] (-1.5,6) -- (-1.5,7);
\draw[thick] (1.5,6) -- (1.5,7);
\draw[thick, rounded corners] (-4,-1.5) rectangle (4,6.5);

\node at (-2,-2) {$A_{in}$};
\node at (2,-2) {$B_{in}$};
\node at (-2,7) {$A_{out}$};
\node at (2,7) {$B_{out}$};
\node at (-1.2,0.25) {\small{$X_A$}}; 
\node at (1.2,0.25) {\small{$X_B$}}; 
\node at (-0.85,1.3) {\small{$W_A$}}; 
\node at (0.85,1.3) {\small{$W_B$}}; 

\draw[thick, dashed, color=gray!50!white] (-1.5,4.1) -- (1.5,4.1); 
\draw[thick, dashed, color=gray!50!white] (1.5,3.75) -- (1.5,4.1);
\draw[thick, dashed, color=gray!50!white] (-1.5,3.75) -- (-1.5,4.5);
\node[draw, scale=0.5, circle,color=gray!50!white] at (-1.5,4.1) {};

\draw[thick] (0,1.1) -- (0,7);
\node at (0.5,7) {$C_{out}$};
\node at (0.3,6) {\small{$W_C$}};

\node at (0.5,-2) {$C_{in}$};
\draw[thick] (0,-2) -- (0,-1.4);
\draw[thick, xshift=-3cm, yshift=-2.25cm] (2.9,0.99) -- (3.1,0.99);
\draw[thick, xshift=-3cm, yshift=-2.25cm] (2.85,0.92) -- (3.15,0.92);
\draw[thick, xshift=-3cm, yshift=-2.25cm] (2.8,0.85) -- (3.2,0.85);
\end{tikzpicture}
\end{center}
\caption{A tripartite causal channel that is not localizable. The ancilla is initialised in the state $\ket{\psi} = \ket{\Phi^+}_{X_AX_B} \otimes \ket{\phi_\alpha}_{W_AW_BW_C}$, where $\ket{\Phi^{+}} = 1/\sqrt{2}(\ket{00} + \ket{11})$ and $\ket{\phi_\alpha} = \sqrt{\alpha} \ket{000} +\sqrt{1-\alpha} \ket{111}$
. Alice (Bob) performs a controlled swap $\mathbb{E}$ on the qubits $A$ and $X_A$ ($B$ and $X_B$), where $W_A$ ($W_B$) is the control qubit. Then, qubits $X_A$ and $X_B$ are measured in the computational basis and the logical $\texttt{AND}$ of the results computed. Whenever this is 1, Alice performs a controlled-NOT gate on qubit $A$, with $W_A$ as the control qubit. The output systems are two quqarts: $AW_A$ for Alice and $BW_B$ for Bob, and a qubit $W_C$ for Charlie. }
\label{f:nonloctri}
\end{figure}

The second example of post-quantum steering also comprises a causal channel that is not localizable, and relies on the results of \cite{Pao}. The steering scenario consists of Alice and Bob, who by performing two dichotomic measurements, steer Charlie, whose Hilbert space has dimension $d_{C}=2$. The channel used by the three parties is depicted in Fig.\ref{f:nonloctri}. Each party's input system is given by a qubit labelled by $A$, $B$ and $C$ respectively. Then, the channel makes use of a five qubit ancilla ($X_AW_AW_CW_BX_B$) initialised on the state: 
$$
\ket{\psi} = \tfrac{\ket{00}_{X_AX_B} + \ket{11}_{X_AX_B}}{\sqrt{2}} \otimes \left(\sqrt{\alpha} \ket{000}_{W_AW_BW_C} +\sqrt{1-\alpha} \ket{111}_{W_AW_BW_C}  \right). 
$$
First, Alice (Bob) makes a controlled swap on the input and $X_A$ ($X_B$) qubits, with $W_A$ ($W_B$) being the control qubit. Then, both the $X_A$ and $X_B$ qubits are measured in the computational basis, and their logical $\texttt{AND}$ computed. Finally, if this measurement result is 1, a controlled-NOT is performed by Alice on qubit $A$, with $W_A$ as the control qubit. The output systems are then a ququart $AW_A$ for Alice, another one  $BW_B$ for Bob, and a qubit $W_C$ for Charlie. Since the marginal channel for Alice and Bob is causal \cite{Pao}, this tripartite extension is also causal.
 {To construct an assemblage, $\{\sigma_{ab|xy}\}$ for $x$, $y$, $a$, $b\in\{0,1\}$, from this channel, we prepare states $|x\rangle$ and $|y\rangle$ in the computational basis for Alice and Bob respectively, and then measure in the computational basis, where Alice's outcome is $a$ and Bob's is $b$. To check that this resulting assemblage is post-quantum, one can check that if we trace out Charlie's system, the correlations between Alice and Bob violate the CHSH inequality beyond Tsirelson's bound.}
Since we trace out Charlie's output system, the ancilla's state for Alice and Bob is given by 
\begin{align*}
\rho_{ancilla} = & \frac{\left(\ket{00}_{X_AX_B} + \ket{11}_{X_AX_B}\right) \left(\bra{00}_{X_AX_B} + \bra{11}_{X_AX_B}\right)}{2} \otimes \\
& \left(\alpha  \ket{00}\bra{00}_{W_AW_B} +(1-\alpha) \ket{11}\bra{11}_{W_AW_B}  \right) .
\end{align*}
For a choice of parameter $\alpha = \tfrac{1}{6}$, the correlations can be shown to give a value of $3$ for the CHSH inequality, which is larger than Tsirelson's bound $2\sqrt{2}$. Therefore, the map is definitely not localizable for that choice of $\alpha$. This channel can hence be used for Alice and Bob to channel-define a post-quantum assemblages on Charlie's subsystem.  {Not only this, but since almost quantum correlations cannot violate Tsirelson's bound either \cite{aqp}, then this assemblage is not even almost quantum, and thus the channel is not almost localizable.}

Finally, we discuss how certifying the post-quantumness of the Bell correlations that are channel-defined by a causal map is not a necessary condition for such a channel to be non-localizable. For this, consider the post-quantum assemblage given in the main result of \cite{pqsp}. We can construct a canonical channel that is not localizable and that channel-defines this post-quantum assemblage. Now, this particular assemblage has the property that the Bell correlations it produces are quantum, or more precisely, local \cite{pqsp}. Hence, we can construct a provably non-localizable channel that can only channel-define local correlations in Bell scenarios. 

\section{Teleportation and Buscemi non-locality} \label{se:gennonloc}

Inspired by the connection between forms of non-locality and quantum channels, in this section we initiate the study of \textit{post-quantum non-classical teleportation}, and \textit{post-quantum Buscemi non-locality}. Non-classical teleportation \cite{teleportation} and Buscemi non-locality \cite{Buscemi} (or semi-quantum non-locality\footnote{We do not use this terminology so as not to confuse between semi-quantum and post-quantum.}) have been introduced very recently within the quantum information community as generalisations of steering and Bell non-locality respectively. We will review each of these notions, and then relate their study to our study of channels, and this will naturally give a framework in which to study their post-quantum generalisations.

\subsection{Buscemi non-locality}

The pioneering work by Buscemi consisted in defining a semi-quantum non-local game and arguing that any entangled state is more useful than a separable one for winning at it \cite{Buscemi}. It should be noted that the kind of game Buscemi describes is subtly distinct to the one hinted by Leung, Toner and Watrous \cite{leung}. 
In this section, we will study the kind of non-locality that is witnessed in these games, and we begin by presenting the general setup. 

Consider $N$ parties, each of which has a quantum system with Hilbert space $\mathcal{K}_{j}$ and can prepare it in one out of $m$ quantum states. For each $j \leq N$, the states in which party $j$ may prepare their system are $\rho_{x_{j}}\in\mathcal{D}(\mathcal{K}_{j})$, with $x_{j}\in\{1,...,m\}$ being the classical label of the particular preparation. The parties then locally plug the system into some device (it can be a black box in analogy with Bell non-locality), and then receive a classical output from the device. Let $a_{j}\in\{1,...,d\}$ denote the classical output for the $j$th party, where $d$ is the total number of possible outputs the device can locally produce. 

Effectively, this whole process just described is a measurement on the preparations made by the $N$ parties. By means of a set of tomographically-complete preparations at each site, the parties can hence generate a description of this measurement. For convenience, we now introduce a new piece of terminology to describe this measurement.

\begin{defn}
In a Buscemi non-locality experiment, for a set of classical outputs $\{(a_{1},...,a_{N})\}$, a \textbf{distributed measurement} is $\{M_{a_{1},...,a_{N}}\in\mathcal{L}(\bigotimes_{j=1}^{N}\mathcal{K}_{j})\}_{a_{1},...,a_{N}}$ where $M_{a_{1},...,a_{N}}\geq 0$ and $\sum_{a_{1},...,a_{N}}M_{a_{1},...,a_{N}}=\id_{1,...,N}$.
\end{defn}

Given this distributed measurement, it is straightforward to generate conditional probabilities from its elements and certain state preparations $\{\rho_{x_{j}}\}$ as
\begin{equation*}
p(a_{1},...,a_{N}|x_{1},...,x_{N})=\Tr\{M_{a_{1},...,a_{N}}\rho_{x_{1}}\otimes...\otimes\rho_{x_{N}}\}.
\end{equation*}
For the purposes of Buscemi's original work, we need to define the set of distributed measurements that result from the set of local operations and shared randomness, which we call the \textit{local distributed measurements}.

\begin{defn}
A distributed measurement is \textbf{local} if there exist $N$ auxiliary systems $R_{j}$ for $j\in\{1,...,N\}$ with Hilbert spaces $\bigotimes_{j=1}^{N}\mathcal{H}_{R_{j}}$ such that 
\begin{equation*}
M_{a_{1},...,a_{N}}=\Tr_{R_{1},...,R_{N}}\{\rho_{R_{1},...,R_{N}}\bigotimes_{j=1}^{N}\Pi_{a_{j}}\},
\end{equation*}
where $\{\Pi_{a_{j}}\in\mathcal{L}(\mathcal{K}_{j}\otimes\mathcal{H}_{R_{j}})\}_{a_{j}}$ is a complete projective measurement, and $\rho_{R_{1},...,R_{N}}\in\mathcal{D}(\bigotimes\mathcal{H}_{R_{j}})$ is a separable state, i.e. for $|\phi_{\lambda}^{j}\rangle\in\mathcal{H}_{R_{j}}$
\begin{equation*}
\rho_{R_{1},...,R_{N}}=\sum_{\lambda}p_{\lambda}|\phi_{\lambda}^{1}\rangle\langle\phi_{\lambda}^{1}|_{R_{1}}\otimes|\phi_{\lambda}^{2}\rangle\langle\phi_{\lambda}^{2}|_{R_{2}}\otimes ... \otimes |\phi_{\lambda}^{N}\rangle\langle\phi_{\lambda}^{N}|_{R_{N}},
\end{equation*}
with $p_{\lambda}\geq 0$ and $\sum_{\lambda}p_{\lambda}=1$.
\end{defn}

Without loss of generality the local measurements can be taken to be projective, since the dimension of the Hilbert spaces $R_{j}$ is finite, but not constrained. Clearly, the state $\rho_{R_{1},...,R_{N}}$ could, in principle, be entangled, and thus we now define the set of \textit{quantum distributed measurements}.

\begin{defn}
A distributed measurement is \textbf{quantum} if there exist $N$ auxiliary systems $R_{j}$ for $j\in\{1,...,N\}$ with Hilbert spaces $\bigotimes_{j=1}^{N}\mathcal{H}_{R_{j}}$ such that 
\begin{equation*}
M_{a_{1},...,a_{N}}=\Tr_{R_{1},...,R_{N}}\{\rho_{R_{1},...,R_{N}}\bigotimes_{j=1}^{N}\Pi_{a_{j}}\},
\end{equation*}
where $\{\Pi_{a_{j}}\in\mathcal{L}(\mathcal{K}_{j}\otimes\mathcal{H}_{R_{j}})\}_{a_{j}}$ is a complete projective measurement, and $\rho_{R_{1},...,R_{N}}\in\mathcal{D}(\bigotimes\mathcal{H}_{R_{j}})$ is any quantum state, entangled or otherwise.
\end{defn}

The main result of Buscemi in \cite{Buscemi} can then be restated as: for every non-separable state $\rho_{R_{1},...,R_{N}}$, there exists a set of projective measurements $\{\Pi_{a_{j}}\in\mathcal{L}(\mathcal{K}_{j}\otimes\mathcal{H}_{R_{j}})\}_{a_{j}}$ such that the distributed measurement is not local. A corollary of this is that the set of local distributed measurements is strictly contained in the set of quantum distributed measurements.

In complete analogy with the study of Bell non-locality and steering, we can ask what are the most general distributed measurements that do not permit superluminal signalling. The following definition formalises the answer to this.

\begin{defn}
Given a bipartition $S_1 \cup S_2 = \{1, ..., N\}$ of $N$ parties where $S_{1}=\{i_1,..., i_s\}$ and $S_{2}=\{i_{s+1},... ,i_N\}$, a distributed measurement $\{M_{a_{1}...a_{N}}\}$ does not permit signalling across this bipartition if there exist sets of complete measurements $\{M_{a_{i_{1}}, ..., a_{i_s}}\in\mathcal{L}(\mathcal{K}_{i_1}\otimes...\otimes\mathcal{K}_{i_s})\}$ and $\{M_{a_{i_{s+1}}, ..., a_{i_N}}\in\mathcal{L}(\mathcal{K}_{i_{s+1}}\otimes...\otimes\mathcal{K}_{i_N})\}$ such that
\begin{eqnarray}
\sum_{a_{i_1}, ..., a_{i_s}}M_{a_{i_1}, ..., a_{i_s}}&=&\id_{i_1,...,i_s}\\
\sum_{a_{i_{s+1}}, ..., a_{i_N}}M_{a_{i_{s+1}}, ..., a_{i_N}}&=&\id_{i_{s+1},...,i_N}\\
\sum_{a_{i_{s+1}}, ..., a_{i_N}}M_{a_1, ..., a_N}&=&M_{a_{i_1}, ..., a_{i_s}}\otimes\id_{i_{s+1},...,i_N}\\
\sum_{a_{i_1}, ..., a_{i_s}}M_{a_1, ..., a_N}&=&\id_{i_1,...,i_s}\otimes M_{a_{i_{s+1}}, ..., a_{i_N}}.
\end{eqnarray}
A distributed measurement $\{M_{a_{1}...a_{N}}\}$ belongs to the set of \textbf{non-signalling distributed measurements} if and only if it does not permit signalling across any bipartition of the $N$ parties.
\end{defn}

If a distributed measurement is non-signalling but not quantum then we refer to this as \textit{post-quantum Buscemi non-locality}. We are not the first to describe the set of non-signalling distributed measurements, \v{S}upi\'{c}, Skrzypczyk, and Cavalcanti \cite{Skrzypczyk} defined this set in the bipartite setting, although the terminology ``distributed measurement" is of our creation. We believe we are, however, the first to point out the possibility of post-quantum Buscemi non-locality. Indeed, in the next section we point this out in a clear fashion.

\subsection{Buscemi non-locality via quantum channels}

In this section we take our channels-based perspective and apply it to the study of Buscemi non-locality. This indeed proceeds similarly to the study of steering and Bell non-locality.
The Buscemi non-locality scenario consists of $N$ parties, where party $j$th (for each $j\leq N$) acts on the Hilbert space $\mathcal{K}_{j}$, and outputs data $a_{j}\in\{1,...,d\}$. 
To study such a Buscemi scenario, consider $N$ parties to have input Hilbert spaces $\cH^{j}_{in}=\mathcal{K}_{j}$, and output Hilbert spaces $\cH^{j}_{out} = \cH_d$ for all $j$, where $\cH_d$ has dimension $d$. Denote by $\{\ket{a}\}_{a=1:d}$ an orthonormal basis of $\cH_d$. We now consider channels $\Lambda_{1...N}:\mathcal{L}(\bigotimes_{j}\cH^{j}_{in})\rightarrow\mathcal{L}(\mathcal{H}_{d}^{\otimes N})$ and relate channels of this form to distributed measurements.

\begin{defn}
A distributed measurement $\{M_{a_{1},...,a_{N}}\in\mathcal{L}(\bigotimes_{j=1}^{N}\mathcal{K}_{j})\}$ is \textbf{channel-defined} if there exists a channel $\Lambda_{1 ... N}:\mathcal{L}(\bigotimes_{j}\cH^{j}_{in}=\bigotimes_{j}\mathcal{K}_{j})\rightarrow\mathcal{L}(\bigotimes_{j=1}^{n}\cH^{j}_{out}=\mathcal{H}_{d}^{\otimes N})$, and some choice of orthonormal bases $\{\ket{a_{j}}\in\mathcal{H}_{d}\}_{a=1:d}$ for each party, such that
\begin{equation*}
M_{a_{1},...,a_{N}} = \Lambda^{\dagger}_{1 ... N} \left(\bigotimes_{k=1}^N \ket{a_k}\bra{a_k}\right)\,,
\end{equation*} 
where $\Lambda^{\dagger}_{1 ... N}$ is the dual of  $\Lambda_{1 ... N}$.
\end{defn}

\begin{figure}
\begin{center}
\begin{tikzpicture}
\node at (-7.5,0) {$M_{a_1, \ldots, a_N}$};
\node at (-6,0) {$\equiv$};

\shade[draw, thick, ,rounded corners, inner color=white,outer color=gray!50!white] (-5,-1) rectangle (-1,1) ;
\node at (-3,0) {$\Lambda$};
\draw[thick] (-4,-1.5) -- (-4,-1);
\draw[thick] (-2,-1.5) -- (-2,-1);
\draw[thick] (-4,1) -- (-4,1.5);
\draw[thick] (-2,1) -- (-2,1.5);

\draw[thick, rounded corners] (-4,2) -- (-3.5,1.5) -- (-4.5, 1.5) -- cycle;
\node at (-4,1.65) {$a_1$};
\draw[thick, rounded corners] (-2,2) -- (-1.5,1.5) -- (-2.5, 1.5) -- cycle;
\node at (-2,1.65) {$a_N$};
\node at (-3,1.7) {$\cdots$};
\node at (-3,-1.4) {$\cdots$};
\end{tikzpicture}
\end{center}
\caption{A distributed measurement $M_{a_1,\ldots,a_N}$ viewed as a causal channel $\Lambda$.}
\label{fig:distributed measurement}
\end{figure}
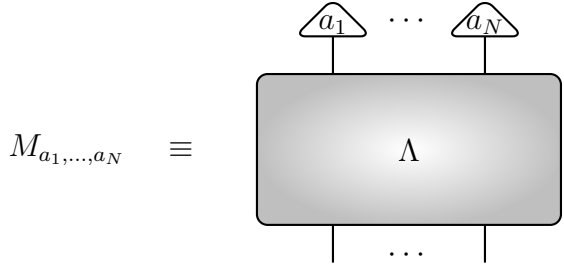

Fig.~\ref{fig:distributed measurement} presents a pictorial representation of  distributed measurements as quantum channels. Given this definition, as before, we can now give alternative definitions of local, quantum, and non-signalling distributed measurements. 
\begin{prop}\label{buslocprop} 
A distributed measurement is local if and only if there exists a local channel $\Lambda_{1...N}^{\mathsf{L}}:\mathcal{L}(\bigotimes_{j}\cH^{j}_{in})\rightarrow\mathcal{L}(\mathcal{H}_{d}^{\otimes N})$ such that the distributed measurement is channel-defined by $\Lambda_{1...N}^{\mathsf{L}}$.
\end{prop}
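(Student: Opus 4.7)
The plan is to mirror the structure of the earlier propositions (Prop. \ref{probLHSprop} and Prop. \ref{assLHSprop}) that connected local channels to classical Bell correlations and to local hidden state assemblages. Both implications are constructive, working with the explicit decomposition of a local channel as local operations glued together by a separable (i.e.\ classically correlated) shared ancilla.

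For the ``only if'' direction, I start from a local distributed measurement
\begin{equation*}
M_{a_{1},\ldots,a_{N}}=\Tr_{R_{1}\ldots R_{N}}\Bigl\{\rho_{R_{1}\ldots R_{N}}\bigotimes_{j=1}^{N}\Pi_{a_{j}}\Bigr\},
\end{equation*}
with $\rho_{R_{1}\ldots R_{N}}=\sum_{\lambda}p_{\lambda}\bigotimes_{j}|\phi_{\lambda}^{j}\rangle\langle\phi_{\lambda}^{j}|$. I then build a channel $\Lambda_{1\ldots N}^{\mathsf{L}}$ whose ancilla is prepared in exactly this separable state and whose local operation at party $j$ is the measure-and-prepare map $\rho\mapsto\sum_{a_{j}}\Tr_{R_{j}}\{\Pi_{a_{j}}(\rho\otimes|\phi_{\lambda}^{j}\rangle\langle\phi_{\lambda}^{j}|)\}\,|a_{j}\rangle\langle a_{j}|$, with $\lambda$ delivered as the classical shared randomness. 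Since each local piece uses only party $j$'s input together with their share of a separable ancilla, the resulting channel lies in the set of local channels defined in Appendix \ref{ap:channels}. A direct computation shows $M_{a_{1}\ldots a_{N}}=\Lambda_{1\ldots N}^{\mathsf{L}\,\dagger}\bigl(\bigotimes_{k}|a_{k}\rangle\langle a_{k}|\bigr)$, i.e.\ the measurement is channel-defined by $\Lambda^{\mathsf{L}}_{1\ldots N}$.

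For the ``if'' direction, I take any local channel $\Lambda_{1\ldots N}^{\mathsf{L}}$ in its canonical Stinespring form with a separable ancilla $\sigma_{R_{1}\ldots R_{N}}=\sum_{\lambda}p_{\lambda}\bigotimes_{j}|\phi_{\lambda}^{j}\rangle\langle\phi_{\lambda}^{j}|$ and party-local isometries $V_{j}$, and compute the dual. Using $\Tr\{M\,\Lambda(\rho)\}=\Tr\{\Lambda^{\dagger}(M)\,\rho\}$, together with the product structure, yields
\begin{equation*}
\Lambda_{1\ldots N}^{\mathsf{L}\,\dagger}\!\left(\bigotimes_{k=1}^{N}|a_{k}\rangle\langle a_{k}|\right)=\sum_{\lambda}p_{\lambda}\bigotimes_{j=1}^{N}\Pi^{(j,\lambda)}_{a_{j}},
\end{equation*}
where $\Pi^{(j,\lambda)}_{a_{j}}:=V_{j}^{\dagger}(|a_{j}\rangle\langle a_{j}|\otimes\id_{R_{j}})V_{j}$ acting on $\cH_{in}^{j}\otimes|\phi_{\lambda}^{j}\rangle\langle\phi_{\lambda}^{j}|$. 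Absorbing $\lambda$ into the ancilla registers $R_{j}$ (with orthogonal flag states) recovers projective measurements $\{\Pi_{a_{j}}\}$ on $\mathcal{K}_{j}\otimes\mathcal{H}_{R_{j}}$ and a separable state $\rho_{R_{1}\ldots R_{N}}$ of exactly the form required by the definition of a local distributed measurement, so $M_{a_{1}\ldots a_{N}}$ is local.

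The only mildly delicate point is bookkeeping in the dual: one must verify that the POVM elements obtained really are projective (after Naimark dilation with the classical flag register carrying $\lambda$) and that $\sum_{a_{j}}\Pi_{a_{j}}=\id$ follows from trace preservation of $\Lambda_{1\ldots N}^{\mathsf{L}}$. Everything else is a direct unpacking of definitions, so I do not expect any genuine obstacle beyond making sure the auxiliary Hilbert spaces $\mathcal{H}_{R_{j}}$ are chosen large enough to simultaneously accommodate the Stinespring dilation of each local operation and the classical label $\lambda$.
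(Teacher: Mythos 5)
Your proof is correct and follows essentially the same route as the paper, which proves this via Proposition \ref{tellocprop} (taking Bob's system trivial): the ``only if'' direction builds a local channel whose separable ancilla is $\rho_{R_1\ldots R_N}$ and whose local operations record the outcome of $\Pi_{a_j}$ in an output register measured in the basis $\{\ket{a_j}\}$, and the ``if'' direction observes that composing a local channel with local basis measurements yields a local distributed measurement. Your version merely spells out the dual computation and the Naimark/projectivity bookkeeping that the paper leaves implicit (note only that your local map should carry a full trace over $\mathcal{K}_j\otimes\mathcal{H}_{R_j}$ rather than $\Tr_{R_j}$, and that the $\lambda$-flag registers are unnecessary since the local operations do not depend on $\lambda$).
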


\begin{prop}\label{busqprop} 
A distributed measurement is quantum if and only if there exists a localizable channel $\Lambda_{1...N}^{\mathsf{Q}}:\mathcal{L}(\bigotimes_{j}\cH^{j}_{in})\rightarrow\mathcal{L}(\mathcal{H}_{d}^{\otimes N})$ such that the distributed measurement is channel-defined by $\Lambda_{1...N}^{\mathsf{Q}}$.
\end{prop}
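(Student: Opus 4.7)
The plan is to mirror the strategy of Props.~\ref{probQprop} and \ref{assQprop}: a distributed measurement and the channel that channel-defines it are related by the dual map evaluated on the product projector $\bigotimes_k|a_k\rangle\langle a_k|$, so each direction of the equivalence reduces to translating between a Stinespring-type dilation of the channel and a projective realisation of the measurement.

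For the forward direction, I would begin with a quantum distributed measurement $\{M_{a_1,\ldots,a_N}\}$ with projective realisation $\{\Pi_{a_j}\in\mathcal{L}(\mathcal{K}_j\otimes\mathcal{H}_{R_j})\}$ and ancilla $\rho_{R_1\ldots R_N}$. By purifying $\rho_{R_1\ldots R_N}$ and absorbing the purifying system into $\mathcal{H}_{R_1}$, I may assume the ancilla is a pure state $|\psi\rangle$. I then introduce, for each party, an output register on $\mathcal{H}_d$ initialised in $|0\rangle$, and dilate $\Pi_{a_j}$ into a local unitary $U_j$ on $\mathcal{K}_j\otimes\mathcal{H}_{R_j}\otimes\mathcal{H}_d$ that coherently copies the outcome onto the output register (for instance, $U_j=\sum_{a_j}\Pi_{a_j}\otimes X_d^{a_j}$ for a cyclic shift $X_d$ on $\mathcal{H}_d$). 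The resulting channel
\[
\Lambda_{1\ldots N}^{\mathsf{Q}}(\cdot)=\Tr_{\mathcal{K}_1\ldots\mathcal{K}_N,R_1\ldots R_N}\!\Bigl\{\bigotimes_j U_j\bigl(\,\cdot\,\otimes|\psi\rangle\langle\psi|\otimes|0\rangle\langle 0|^{\otimes N}\bigr)\bigotimes_j U_j^{\dagger}\Bigr\}
\]
is localisable, because the $U_j$ act on pairwise disjoint subsystems and so commute trivially. A direct computation of the dual $\Lambda_{1\ldots N}^{\mathsf{Q}\,\dagger}\bigl(\bigotimes_k|a_k\rangle\langle a_k|\bigr)$ then returns $\langle\psi|\bigotimes_j\Pi_{a_j}|\psi\rangle=M_{a_1,\ldots,a_N}$, as required.

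For the converse, I would suppose $\Lambda_{1\ldots N}^{\mathsf{Q}}$ is a localisable channel that channel-defines $\{M_{a_1,\ldots,a_N}\}$ and invoke its localisable representation: a shared pure ancilla $|\psi\rangle$ on $\bigotimes_j\mathcal{H}_{R_j}$ together with unitaries $U_{jE}$ each supported only on party $j$'s input, output and ancilla share. Substituting this dilation into the dual and evaluating on $\bigotimes_k|a_k\rangle\langle a_k|$ yields
\[
M_{a_1,\ldots,a_N}=\Tr_{R_1\ldots R_N}\!\Bigl\{|\psi\rangle\langle\psi|\,\bigotimes_j\Pi_{a_j}\Bigr\},\qquad \Pi_{a_j}:=U_{jE}^{\dagger}\bigl(|a_j\rangle\langle a_j|\otimes\id_{R_j}\bigr)U_{jE},
\]
and the $\Pi_{a_j}$ are manifestly projective measurements on $\mathcal{K}_j\otimes\mathcal{H}_{R_j}$ summing to the identity, exhibiting $\{M_{a_1,\ldots,a_N}\}$ as a quantum distributed measurement.

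The main place where care is needed is the bookkeeping of tensor factors when computing the dual of a localisable channel and confirming that the locality of each $U_{jE}$ carries over to the product structure $\bigotimes_j\Pi_{a_j}$ required in the definition of quantum distributed measurements. Apart from this, both directions are essentially reshufflings of a Stinespring dilation under the localisable form, and I do not expect any serious obstacle.
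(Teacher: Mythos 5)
Your proposal is correct and follows essentially the same route as the paper: the paper proves this via the teleportage analogue (Prop.~\ref{telqprop}, reducing to Prop.~\ref{tellocprop} with an entangled ancilla), whose forward direction is exactly your ``copy the outcome into a local $\mathcal{H}_d$ register via a controlled local unitary'' construction, and whose converse is the observation that a localizable channel followed by local basis measurements is again a quantum distributed measurement. You merely make explicit the dilation $U_j=\sum_{a_j}\Pi_{a_j}\otimes X_d^{a_j}$ and the pull-back $\Pi_{a_j}=U_{jE}^{\dagger}(\ket{a_j}\bra{a_j}\otimes\id)U_{jE}$ that the paper leaves implicit, and both computations check out.
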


\begin{prop} \label{busNSprop}
A distributed measurement is non-signalling if and only if there exists a causal channel $\Lambda_{1...N}^{\mathsf{C}}:\mathcal{L}(\bigotimes_{j}\cH^{j}_{in})\rightarrow\mathcal{L}(\mathcal{H}_{d}^{\otimes N})$ such that the distributed measurement is channel-defined by $\Lambda_{1...N}^{\mathsf{C}}$.
\end{prop}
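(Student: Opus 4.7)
The plan is to prove both directions by exploiting the duality between channels and their adjoints, in direct parallel with the proof strategy already used for Proposition \ref{probNSprop} in the Bell setting. The key observation is that the channel-defining relation $M_{a_1,\ldots,a_N} = \Lambda^\dagger(\otimes_k |a_k\rangle\langle a_k|)$ translates marginalisation of outputs (on the distributed-measurement side) into tracing out of output subsystems (on the channel side), and these are precisely the operations that detect no-signalling and semicausality respectively.

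For the ($\Leftarrow$) direction, I would start from a causal channel $\Lambda_{1\ldots N}^{\mathsf{C}}$ and the distributed measurement it channel-defines. For any bipartition $S_1 \cup S_2 = \{1,\ldots,N\}$, summing $M_{a_1,\ldots,a_N}$ over the outcomes $a_j$ with $j \in S_1$ corresponds, via the defining equation, to applying $\Lambda^\dagger$ to $\id_{S_1} \otimes \otimes_{k \in S_2} |a_k\rangle\langle a_k|$. Using the adjoint characterisation of semicausality (if $S_1 \not\rightarrow S_2$ then $\Tr_{S_2}\Lambda(\rho_{S_1}\otimes\rho_{S_2})$ depends only on $\rho_{S_1}$, which by duality translates to $\Lambda^\dagger(\id_{S_2}\otimes Y_{S_1}) = \id_{S_2}\otimes \Upsilon^\dagger(Y_{S_1})$ for some CP map $\Upsilon$), this marginal decomposes as $\id_{S_1}\otimes M_{a_{S_2}}$ with $M_{a_{S_2}} := \Upsilon^\dagger(\otimes_{k\in S_2}|a_k\rangle\langle a_k|)$. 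Doing this for both directions of the bipartition (which we can since $\Lambda$ is causal) yields exactly the four non-signalling identities, so $\{M_{a_1,\ldots,a_N}\}$ is non-signalling.

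For the ($\Rightarrow$) direction, given a non-signalling distributed measurement $\{M_{a_1,\ldots,a_N}\}$, I would construct the canonical measure-and-prepare channel
\begin{equation*}
\Lambda^{\mathsf{C}}_{1\ldots N}(\rho) \;:=\; \sum_{a_1,\ldots,a_N} \Tr\!\left\{M_{a_1,\ldots,a_N}\,\rho\right\}\,\bigotimes_{k=1}^{N} |a_k\rangle\langle a_k|.
\end{equation*}
A direct computation confirms that $(\Lambda^{\mathsf{C}}_{1\ldots N})^\dagger(\otimes_k |a_k\rangle\langle a_k|) = M_{a_1,\ldots,a_N}$, so the distributed measurement is channel-defined by $\Lambda^{\mathsf{C}}_{1\ldots N}$. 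To prove causality, I would fix a bipartition $S_1 \cup S_2$ and compute $\Tr_{S_1}\Lambda^{\mathsf{C}}_{1\ldots N}(\rho_{S_1}\otimes\rho_{S_2})$; this collapses to $\sum_{a_{S_2}} \Tr\!\left\{\left(\sum_{a_{S_1}}M_{a_1,\ldots,a_N}\right)(\rho_{S_1}\otimes\rho_{S_2})\right\}\otimes_{k\in S_2}|a_k\rangle\langle a_k|$, and the non-signalling identity $\sum_{a_{S_1}}M_{a_1,\ldots,a_N} = \id_{S_1}\otimes M_{a_{S_2}}$ combined with $\Tr(\rho_{S_1})=1$ removes all dependence on $\rho_{S_1}$. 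Applying the symmetric argument to $\Tr_{S_2}$ gives causality.

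The main obstacle I expect is not conceptual difficulty but rather the bookkeeping of the duality in the multipartite setting: one must be careful that the non-signalling relations across every bipartition translate cleanly into the corresponding semicausality conditions, and vice versa. In particular, the ``for every bipartition'' quantifier in the definition of causal channels must be matched with the ``for every bipartition'' quantifier in the definition of non-signalling distributed measurements, but since the canonical construction above and its dual analysis work uniformly over any choice of $S_1 \cup S_2$, this reduces to a single bipartite argument applied repeatedly. Given the parallel with Proposition \ref{probNSprop}, the proof could in fact be presented compactly by invoking that the adjoint of the canonical measure-and-prepare channel associated to a non-signalling distributed measurement inherits the relevant marginalisation identities.
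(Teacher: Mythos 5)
Your proposal is correct and takes essentially the same route as the paper: the paper proves the teleportage analogue (Theorem-style Proposition \ref{telNSprop}) by constructing exactly your canonical measure-and-prepare channel $\Gamma(\cdot)=\sum_{\mathbf{a}}\ket{\mathbf{a}}\bra{\mathbf{a}}\otimes T_{\mathbf{a}}(\cdot)$ and observing that summing over outcomes corresponds to tracing out output ports, then notes that Proposition \ref{busNSprop} follows by taking Bob's Hilbert space to be trivial. Your explicit adjoint/duality bookkeeping merely spells out the steps the paper compresses into ``follows from the definition of non-signalling.''
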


Given these definitions of distributed measurements, it is straightforward to see that if each party were to prepare pure states from an orthonormal basis, then we recover the Bell non-locality setting. This then implies that local, quantum, and post-quantum non-locality implies a local, quantum, and post-quantum distributed measurement. A simple consequence of this is that the set of non-signalling distributed measurements is strictly larger than the set of quantum distributed measurements. For example, we can take the channel that produces the PR box correlations, and generate a post-quantum distributed measurement. However, do there exist post-quantum distributed measurements that will never produce post-quantum correlations? We leave this question as open, but in this direction we now define the set of almost quantum distributed measurements as the analogue of almost quantum correlations and assemblages.

\begin{defn}
A distributed measurement is \textbf{almost quantum} if there exists an almost localizable channel $\Lambda_{1...N}^{\tilde{\mathsf{Q}}}:\mathcal{L}(\bigotimes_{j}\cH^{j}_{in})\rightarrow\mathcal{L}(\mathcal{H}_{d}^{\otimes N})$ such that the distributed measurement is channel-defined by $\Lambda_{1...N}^{\tilde{\mathsf{Q}}}$.
\end{defn}

Given that the set of almost quantum correlations is larger than the set of quantum correlations, it follows that the set of almost quantum distributed measurements is larger than the set of quantum distributed measurements. In future work we will investigate whether this set has a useful characterisation in terms of semi-definite programming. Given such a characterisation, we should be able to address the question of whether post-quantum Buscemi non-locality implies post-quantum Bell non-locality.

\subsection{Non-classical teleportation}

The final scenario we consider is a generalisation of the steering scenario as first outlined by Cavalcanti, Skrzypczyk and \v{S}upi\'{c} \cite{teleportation}. In this scenario the original motivation was to consider two parties, and have one party ``teleport'' quantum information to the other, even if their resources are noisy. In particular, Alice is given one out of $m$ possible quantum states $\rho_{j}$ for $j\in\{1,...,m\}$, and produces some classical data (using a measurement on this input state and some other shared resource with Bob), and Bob has a quantum system upon which he can perform state tomography. {Importantly, the set of states $\{\rho_{j}\}$ is known to all parties, unlike in conventional teleportation where there is a single unknown state that is to be teleported.} {Once Bob knows the choice of state $\rho_{j}$ and the classical data (that resulted from Alice's measurement), Bob can deduce their (subnormalised) state conditioned upon this information. This is analogous to the assemblage in a steering scenario, which is a collection of (subnormalised) states conditioned on the classical information generated by the untrusted party.} {In the case where Alice and Bob share a maximally entangled state, Alice can make an entangled measurement on her input state $\rho_{j}$ and her half of this maximally entangled state. Conditioned on the outcome of the measurement, the state in Bob's laboratory will be $\rho_{j}$ with some unitary applied that depends on the outcome.} In general, given Bob's conditional (subnormalised) quantum state, they wish to establish if ``non-classical teleportation" took place.

We now extend this scenario to mimic closer the case of multipartite steering experiments. Consider $N$ parties, each of which has a quantum system with Hilbert space $\mathcal{K}_{j}$ and can prepare it in one out of $m$ quantum states. For each $j \leq N$, the states in which party $j$ may prepare their system are $\rho_{x_{j}}\in\mathcal{D}(\mathcal{K}_{j})$, with $x_{j}\in\{1,...,m\}$ being the classical label of the particular preparation.
In addition, consider another party, Bob, who has a quantum system with Hilbert space $\mathcal{K}_{B}$ and can perform quantum state tomography on his part of his system.
The first $N$ parties generate classical data locally from their system, and we denote by $a_{j}\in\{1,...,d\}$ the classical output obtained by party $j$, for each $j \leq N$.

{Since the first $N$ parties could prepare their input system in an arbitrary state before plugging it into their unknown device, they could each choose states from a tomographically complete set of states, as with the Buscemi non-locality scenario. That is, enough states that span the space $\mathcal{D}(\mathcal{K}_{j})$. The difference now in this scenario from the Buscemi non-locality scenario is that we have Bob's quantum system with Hilbert space $\mathcal{K}_{B}$, and upon which he can perform any quantum operation he likes. Therefore if we consider the whole process in terms of known quantum systems, we have the input Hilbert spaces $\mathcal{K}_{j}$ and an ``output" Hilbert space $\mathcal{K}_{B}$ in Bob's laboratory. Therefore the process of producing classical data and a (subnormalised) quantum state in Bob's laboratory can be described in terms of an object, which we call a \textit{teleportage}. This object can be characterised as a map from space of operators over $\bigotimes_{j=1}^{N}\mathcal{K}_{j}$ to the space of operators on $\mathcal{K}_{B}$, and it is characterised by the fact that a tomographically complete set of input states can be generated, and a tomographically complete measurement can be made on Bob's system.}

%

\begin{defn}
For a set of classical outputs $\{(a_{1},...,a_{N})\}$ and the Hilbert space $\mathcal{K}_{B}$ of Bob's system, a \textbf{teleportage} is an instrument $\{T_{a_{1},...,a_{N}}\in\mathcal{L}(\bigotimes_{j=1}^{N}\mathcal{K}_{j})\rightarrow\mathcal{L}(\mathcal{K}_{B})\}_{a_{1},...,a_{N}}$ such that, for all quantum states $\rho\in\mathcal{D}(\bigotimes_{j=1}^{N}\mathcal{K}_{j})$, $\Tr\{\sum_{a_{1},...,a_{N}}T_{a_{1},...,a_{N}}(\rho)\}=\Tr\{\rho\}$.
\end{defn}

We note that one can obtain an assemblage $\{\sigma_{a_{1},...,a_{N}|x_{1},...,x_{N}}\in\mathcal{D}(\mathcal{K}_{B}) \}$ for any set of input quantum states $\{\rho_{x_{j}}\}$, in the following way:
\begin{equation*}
\sigma_{a_{1},...,a_{N}|x_{1},...,x_{N}}=T_{a_{1},...,a_{N}}\left(\bigotimes_{j=1}^{N}\rho_{x_{j}}\right).
\end{equation*}
This is actually slightly distinct from the assemblages in the standard steering scenario, since the states $\rho_{x_{j}}$ have some quantum information, and so the classical labels $x_{j}$ do not capture everything. 

As with the study of Buscemi non-locality and steering, we can define the physically meaningful sets of teleportages.

\begin{defn}
A teleportage is \textbf{local} if there exist $N$ auxiliary systems $R_{j}$ for $j\in\{1,...,N\}$ with Hilbert spaces $\bigotimes_{j=1}^{N}\mathcal{H}_{R_{j}}$ such that 
\begin{equation*}
T_{a_{1},...,a_{N}}=\Tr_{1,...,N,R_{1},...,R_{N}}\{\rho_{R_{1},...,R_{N},B}\bigotimes_{j=1}^{N}\Pi_{a_{j}}\otimes\id_{B}\},
\end{equation*}
where $\{\Pi_{a_{j}}\in\mathcal{L}(\mathcal{K}_{j}\otimes\mathcal{H}_{R_{j}})\}_{a_{j}}$ is a complete projective measurement, and $\rho_{R_{1},...,R_{N},B}\in\mathcal{D}(\bigotimes\mathcal{H}_{R_{j}}\otimes\mathcal{K}_{B})$ is a separable state, i.e. for $|\phi_{\lambda}^{j}\rangle\in\mathcal{H}_{R_{j}}$ and $|\phi_{\lambda}^{B}\rangle\in\mathcal{K}_{B}$
\begin{equation*}
\rho_{R_{1},...,R_{N},B}=\sum_{\lambda}p_{\lambda}|\phi_{\lambda}^{1}\rangle\langle\phi_{\lambda}^{1}|_{R_{1}}\otimes|\phi_{\lambda}^{2}\rangle\langle\phi_{\lambda}^{2}|_{R_{2}}\otimes ... \otimes |\phi_{\lambda}^{N}\rangle\langle\phi_{\lambda}^{N}|_{R_{N}}\otimes|\phi_{\lambda}^{B}\rangle\langle\phi_{\lambda}^{B}|,
\end{equation*}
with $p_{\lambda}\geq 0$ and $\sum_{\lambda}p_{\lambda}=1$.
\end{defn}

\begin{defn}
A teleportage is \textbf{quantum} if there exist $N$ auxiliary systems $R_{j}$ for $j\in\{1,...,N\}$ with Hilbert spaces $\bigotimes_{j=1}^{N}\mathcal{H}_{R_{j}}$ such that 
\begin{equation*}
T_{a_{1},...,a_{N}}=\Tr_{1,...,N,R_{1},...,R_{N}}\{\rho_{R_{1},...,R_{N},B}\bigotimes_{j=1}^{N}\Pi_{a_{j}}\otimes\id_{B}\},
\end{equation*}
where $\{\Pi_{a_{j}}\in\mathcal{L}(\mathcal{K}_{j}\otimes\mathcal{H}_{R_{j}})\}_{a_{j}}$ is a complete projective measurement, and $\rho_{R_{1},...,R_{N},B}\in\mathcal{D}(\bigotimes\mathcal{H}_{R_{j}}\otimes\mathcal{K}_{B})$ is any quantum state, entangled or otherwise.
\end{defn}

\begin{defn}
Given a bipartition $S_1 \cup S_2 = \{1, ..., N\}$ of $N$ parties where $S_{1}=\{i_1,..., i_s\}$ and $S_{2}=\{i_{s+1},... ,i_N\}$, a teleportage $\{T_{a_{1},...,a_{N}}:\mathcal{L}(\bigotimes_{j=1}^{N}\mathcal{K}_{j})\rightarrow\mathcal{L}(\mathcal{K}_{B})\}$ does not permit signalling across this bipartition if there exist further teleportages $\{T_{a_{i_{1}}, ..., a_{i_s}}:\mathcal{L}(\bigotimes_{j=1}^{s}\mathcal{K}_{j})\rightarrow\mathcal{L}(\mathcal{K}_{B})\}$, $\{T_{a_{i_{s+1}}, ..., a_{i_N}}:\mathcal{L}(\bigotimes_{j=s+1}^{N}\mathcal{K}_{j})\rightarrow\mathcal{L}(\mathcal{K}_{B})\}$, and quantum state $\rho_{B}\in\mathcal{D}(\mathcal{H}_{B})$ such that
\begin{eqnarray*}
\sum_{a_{i_1}, ..., a_{i_s}}T_{a_{i_1}, ..., a_{i_s}}(\cdot)&=&\rho_{B}\\
\sum_{a_{i_{s+1}}, ..., a_{i_N}}T_{a_{i_{s+1}}, ..., a_{i_N}}(\cdot)&=&\rho_{B}\\
\sum_{a_{i_{s+1}}, ..., a_{i_N}}T_{a_1, ..., a_N}(\cdot)&=&T_{a_{i_1}, ..., a_{i_s}}(\cdot)\\
\sum_{a_{i_1}, ..., a_{i_s}}T_{a_1, ..., a_N}(\cdot)&=&T_{a_{i_{s+1}}, ..., a_{i_N}}(\cdot).
\end{eqnarray*}
A teleportage $\{T_{a_{1}...a_{N}}\}$ belongs to the set of \textbf{non-signalling teleportages} if and only if it does not permit signalling across any bipartition of the $N$ parties.
\end{defn}

We say a teleportage demonstrates \textit{post-quantum non-classical teleportation} if it is a non-signalling teleportage that is not a quantum teleportage. As far as we know, we are the first to define the set of non-signalling teleportages, in addition to introducing the nomenclature.

\subsection{Non-classical teleportation via quantum channels}\label{nonclassicaltel}

Finally, we look at non-classical teleportation through the lens of channels. Recall that we have a black box device with $N$ inputs for quantum systems, and $N$ classical outputs $a_{j}\in\{1,...,d\}$ for $j\in\{1,...,N\}$. For the $N$ ports of the black box device we associate input and output Hilbert spaces with each party such that $\mathcal{H}_{in}^{j}=\mathcal{K}_{j}$ and $\mathcal{H}_{out}^{j}=\mathcal{H}_{out}^{j'}=\mathcal{H}_{d}$ for all $j$, $j'\in\{1,...,N\}$, where $\mathcal{H}_{d}$ is a Hilbert space of dimension $d$. For Bob, we have an input and output Hilbert space denoted $\mathcal{H}_{in}^{B}$ and $\mathcal{H}_{out}^{B}$ respectively, where we have that $\mathcal{H}_{in}^{B}=\mathcal{H}_{out}^{B}=\mathcal{K}_{B}$. Therefore the channels of interest will be $\Lambda_{1...N}:\mathcal{L}(\bigotimes_{j}\cH^{j}_{in}\otimes\mathcal{K}_{B})\rightarrow\mathcal{L}(\mathcal{H}_{d}^{\otimes N}\otimes\mathcal{K}_{B})$, and as before we can define teleportages in terms of these channels.

\begin{defn}
A teleportage $\{T_{a_{1},...,a_{N}}:T_{a_{1},...,a_{N}}\in\mathcal{L}(\bigotimes_{j=1}^{N}\mathcal{K}_{j})\rightarrow\mathcal{L}(\mathcal{K}_{B})\}_{a_{1},...,a_{N}}$ is \textbf{channel-defined} if there exists a channel $\Lambda_{1 ... N}:\mathcal{L}(\bigotimes_{j}\cH^{j}_{in}=\bigotimes_{j}\mathcal{K}_{j}\otimes\mathcal{K}_{B})\rightarrow\mathcal{L}(\bigotimes_{j=1}^{n}\mathcal{H}_{d}^{\otimes N}\otimes\mathcal{K}_{B})$, and some choice of orthonormal bases $\{\ket{a_{j}}\in\mathcal{H}_{d}\}_{a=1:d}$ for each party, such that
\begin{equation*}
T_{a_{1},...,a_{N}}(\cdot) = \Tr_{out_{1},...,out_{N},B}\{\otimes_{k=1}^N \ket{a_k}\bra{a_k} \, \Lambda_{1 ... N} \left(\cdot\otimes|0\rangle\langle 0|_{B}\right) \}\, ,
\end{equation*} 
where $|0\rangle_{B}\in\mathcal{K}_{B}$.
\end{defn}

\begin{figure}
\begin{center}
\begin{tikzpicture}
\node at (-7.5,0) {$T_{a_1,\ldots,a_N}$};
\node at (-6,0) {$\equiv$};

\shade[draw, thick, ,rounded corners, inner color=white,outer color=gray!50!white] (-5,-1) rectangle (0.5,1) ;
\draw[thick] (-4,-1.5) -- (-4,-1);
\draw[thick] (-2,-1.5) -- (-2,-1);
\draw[thick] (-4,1) -- (-4,1.5);
\draw[thick] (-2,1) -- (-2,1.5);
\draw[thick] (-0.5,1) -- (-0.5,1.5);
\draw[thick] (-0.5,-1.5) -- (-0.5,-1);

\node at (-3,1.7) {$\cdots$};
\node at (-3,-1.4) {$\cdots$};

\draw[thick, rounded corners] (-0.5,-2) -- (0,-1.5) -- (-1, -1.5) -- cycle;
\node at (-0.5,-1.7) {$0$};

\draw[thick, rounded corners] (-4,2) -- (-3.5,1.5) -- (-4.5, 1.5) -- cycle;
\node at (-4,1.65) {$a_1$};
\draw[thick, rounded corners] (-2,2) -- (-1.5,1.5) -- (-2.5, 1.5) -- cycle;
\node at (-2,1.65) {$a_N$};
\node at (-2.5,0) {$\Lambda$};
\end{tikzpicture}
\end{center}
\caption{A Teleportage $T_{a_1,\ldots,a_N}$ viewed as a causal channel $\Lambda$.}
\label{fig:teleportage}
\end{figure}
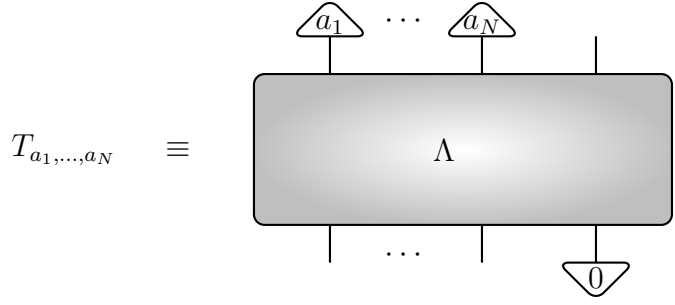

Fig.~\ref{fig:teleportage} depicts a teleportage as a quantum channel. Given this definition, as before we obtain the following results:

\begin{prop}\label{tellocprop} 
A teleportage is local if and only if there exists a local channel $\Lambda_{1...N}^{\mathsf{L}}:\mathcal{L}(\bigotimes_{j}\cH^{j}_{in}=\bigotimes_{j}\mathcal{K}_{j}\otimes\mathcal{K}_{B})\rightarrow\mathcal{L}(\bigotimes_{j=1}^{n}\mathcal{H}_{d}^{\otimes N}\otimes\mathcal{K}_{B})$ such that the teleportage is channel-defined by $\Lambda_{1...N}^{\mathsf{L}}$.
\end{prop}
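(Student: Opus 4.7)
The plan is to establish the equivalence by unpacking both the definition of a local teleportage and the definition of a local channel (from Appendix \ref{ap:channels}), and then translating one into the other essentially by rearranging the same data. Recall that a local channel on $N$ parties admits a decomposition of the form $\Lambda^{\mathsf{L}} = \sum_\lambda p_\lambda \, \Lambda_1^\lambda \otimes \cdots \otimes \Lambda_N^\lambda$, where each $\Lambda_j^\lambda$ is a CPTP map acting only on the $j$-th party's input and output systems, and the $\{p_\lambda\}$ plays the role of shared classical randomness. In our setup we moreover have a trusted system at Bob, whose input Hilbert space $\mathcal{K}_B$ is fixed to the state $|0\rangle\langle 0|_B$, and whose output system also lives in $\mathcal{K}_B$.

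For the forward direction ($\Rightarrow$), suppose $\{T_{a_1,\ldots,a_N}\}$ is a local teleportage realised by a separable state $\rho_{R_1,\ldots,R_N,B} = \sum_\lambda p_\lambda |\phi^1_\lambda\rangle\langle\phi^1_\lambda| \otimes \cdots \otimes |\phi^N_\lambda\rangle\langle\phi^N_\lambda| \otimes |\phi^B_\lambda\rangle\langle\phi^B_\lambda|$ and local projective measurements $\{\Pi_{a_j}\}$. I construct a local channel $\Lambda^{\mathsf{L}}_{1\ldots N}$ by defining, for each value $\lambda$ of the shared randomness, local maps $\Lambda_j^\lambda$ for each untrusted party that (i) append the ancilla $|\phi^j_\lambda\rangle$ to the input, (ii) perform the projective measurement $\{\Pi_{a_j}\}$, and (iii) output the classical state $|a_j\rangle\langle a_j|$ in the chosen orthonormal basis of $\mathcal{H}_d$. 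For Bob, I define $\Lambda_B^\lambda$ to trace out his input and prepare $|\phi^B_\lambda\rangle\langle\phi^B_\lambda|$ in his output register. The total channel $\Lambda^{\mathsf{L}}_{1\ldots N} = \sum_\lambda p_\lambda \bigotimes_j \Lambda_j^\lambda \otimes \Lambda_B^\lambda$ is manifestly local, and by substituting it into the channel-defined teleportage formula and carrying out the partial traces one recovers exactly $T_{a_1,\ldots,a_N}$.

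For the reverse direction ($\Leftarrow$), suppose the teleportage is channel-defined by a local channel $\Lambda^{\mathsf{L}}_{1\ldots N} = \sum_\lambda p_\lambda \bigotimes_j \Lambda_j^\lambda \otimes \Lambda_B^\lambda$. Applying the channel-defined formula, projecting onto $|a_k\rangle\langle a_k|$ on each untrusted output and feeding $|0\rangle\langle 0|_B$ into Bob's input produces $T_{a_1,\ldots,a_N}(\cdot) = \sum_\lambda p_\lambda \bigotimes_j \bigl[\mathrm{tr}\{|a_j\rangle\langle a_j|\,\Lambda_j^\lambda(\cdot)\}\bigr] \otimes \Lambda_B^\lambda(|0\rangle\langle 0|_B)$. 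Each factor $\mathrm{tr}\{|a_j\rangle\langle a_j|\,\Lambda_j^\lambda(\cdot)\}$ is, by the Stinespring dilation of $\Lambda_j^\lambda$, of the form $\mathrm{tr}\{\Pi_{a_j}^\lambda(\cdot \otimes |\phi^j_\lambda\rangle\langle\phi^j_\lambda|)\}$ for some projective measurement on an enlarged space, and $\Lambda_B^\lambda(|0\rangle\langle 0|_B)$ is a fixed state $|\phi^B_\lambda\rangle\langle\phi^B_\lambda|$ (possibly after dilation). Absorbing $\lambda$ into the Hilbert space of each party's ancilla and using the fact that projectors conditioned on classical shared randomness can be made into fixed projective measurements on an enlarged separable ancilla yields exactly the form in the definition of a local teleportage.

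The main technical point, which is the only place where some care is needed, is to ensure in the reverse direction that the projectors $\Pi_{a_j}^\lambda$ produced by different Stinespring dilations can be combined into a single projective measurement on a fixed separable ancilla state of the form required by the local teleportage definition. This is handled by the standard trick of promoting the classical label $\lambda$ into an extra register in each party's ancilla that is prepared in $|\lambda\rangle\langle\lambda|$ with weight $p_\lambda$, turning the convex mixture of local measurements into a single projective measurement on a separable state, exactly as in the proofs of Propositions \ref{probLHSprop} and \ref{assLHSprop}.
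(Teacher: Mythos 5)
Your proposal is correct and follows essentially the same route as the paper's proof: the forward direction constructs the local channel by appending the per-party ancillas, measuring, and copying the outcome into the classical output register, while the reverse direction dilates the local channel plus computational-basis measurement into a projective measurement on an enlarged separable ancilla, absorbing the shared-randomness label $\lambda$ into that ancilla. The paper's version is just a terser sketch of the same two constructions, so no further comparison is needed.
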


\begin{prop}\label{telqprop} 
A teleportage is quantum if and only if there exists a localizable channel $\Lambda_{1...N}^{\mathsf{Q}}:\mathcal{L}(\bigotimes_{j}\cH^{j}_{in}=\bigotimes_{j}\mathcal{K}_{j}\otimes\mathcal{K}_{B})\rightarrow\mathcal{L}(\bigotimes_{j=1}^{n}\mathcal{H}_{d}^{\otimes N}\otimes\mathcal{K}_{B})$ such that the teleportage is channel-defined by $\Lambda_{1...N}^{\mathsf{Q}}$.
\end{prop}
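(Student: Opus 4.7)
The plan is to follow the template of Propositions \ref{probQprop} and \ref{assQprop} for Bell correlations and assemblages: unpack both definitions and observe that they match after a standard Stinespring/Naimark rearrangement. Teleportages generalise assemblages by allowing the untrusted parties to input quantum states rather than classical labels, and that is the only structural change; the argument transfers essentially verbatim. In particular, the proof of Prop.~\ref{assQprop} already handles the shared-ancilla/local-isometry packaging, and its steps need only be upgraded to accept a quantum input on $\mathcal{K}_{j}$ at each untrusted site.

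For the forward direction (quantum teleportage $\Rightarrow$ channel-defined by a localizable channel), I would start from the witnessing data of a quantum teleportage: a shared state $\rho_{R_{1}\ldots R_{N}B}$ and projective measurements $\{\Pi_{a_{j}}\}$ on $\mathcal{K}_{j}\otimes\mathcal{H}_{R_{j}}$. Distribute the ancilla so that party $j$ holds $R_{j}$ and Bob holds $B$. For each untrusted $j$, take the local isometry
\begin{equation*}
V_{j}:\mathcal{K}_{j}\otimes\mathcal{H}_{R_{j}}\to\mathcal{H}_{d}\otimes\mathcal{K}_{j}\otimes\mathcal{H}_{R_{j}},\quad V_{j}|\phi\rangle=\sum_{a_{j}}|a_{j}\rangle\otimes\Pi_{a_{j}}|\phi\rangle,
\end{equation*}
and let party $j$'s local operation consist of $V_{j}$ followed by discarding the last two factors. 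Bob's local operation discards his fixed input $|0\rangle_{B}$ and simply outputs his share $B$ of the ancilla. The channel so constructed is localizable by definition, and the identity $V_{j}^{\dagger}(|a_{j}\rangle\langle a_{j}|\otimes\id)V_{j}=\Pi_{a_{j}}$ immediately shows that its channel-defined teleportage coincides with $T_{a_{1}\ldots a_{N}}$. For the backward direction, I would start from a dilation of the localizable $\Lambda^{\mathsf{Q}}_{1\ldots N,B}$ into a shared pure ancilla $|\psi\rangle$ with local isometries $V_{j}$ and $V_{B}$, absorb $V_{B}$ and Bob's fixed input $|0\rangle_{B}$ into the ancilla to define $\rho_{R_{1}\ldots R_{N}B}$, and dilate each POVM element $M_{a_{j}}:=V_{j}^{\dagger}(|a_{j}\rangle\langle a_{j}|\otimes\id)V_{j}$ on $\mathcal{K}_{j}\otimes\mathcal{H}_{R_{j}}$ to a projective measurement via Naimark's theorem, extending $\rho_{R_{1}\ldots R_{N}B}$ by pure ancillae as needed.

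The only real obstacle is bookkeeping around Bob's asymmetric role: his input is a fixed $|0\rangle_{B}$ while his output lies in $\mathcal{K}_{B}$, so care is needed to check that, after absorbing $V_{B}$, the $B$ register of the effective shared state lives in the right Hilbert space and that Naimark-extending the untrusted parties' POVMs does not spoil the product structure of the ancilla. This is routine once conventions are fixed, and no idea beyond those already used in Prop.~\ref{assQprop} is required.
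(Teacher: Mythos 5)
Your proposal is correct and follows essentially the same route as the paper, which simply remarks that the proof of Prop.~\ref{telqprop} is that of Prop.~\ref{tellocprop} (the ``copy the measurement outcome into an output register via a local isometry, then measure that register'' construction) with the separable ancilla replaced by an arbitrary entangled one. You merely make explicit two steps the paper leaves implicit — the verification that $V_{j}^{\dagger}(|a_{j}\rangle\langle a_{j}|\otimes\id)V_{j}=\Pi_{a_{j}}$ and the Naimark dilation needed in the backward direction to turn the induced POVMs into the projective measurements required by the definition of a quantum teleportage.
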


\begin{prop}\label{telNSprop}
A teleportage is non-signalling if and only if there exists a causal channel $\Lambda_{1...N}^{\mathsf{C}}:\mathcal{L}(\bigotimes_{j}\cH^{j}_{in}=\bigotimes_{j}\mathcal{K}_{j}\otimes\mathcal{K}_{B})\rightarrow\mathcal{L}(\bigotimes_{j=1}^{n}\mathcal{H}_{d}^{\otimes N}\otimes\mathcal{K}_{B})$ such that the teleportage is channel-defined by $\Lambda_{1...N}^{\mathsf{C}}$.
\end{prop}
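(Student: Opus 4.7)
My plan is to adapt the template used for the non-signalling assemblages of Prop.~\ref{prop11} and for the non-signalling distributed measurements of Prop.~\ref{busNSprop}, with the extra bookkeeping that Bob now has both an input and an output port on the channel; the canonical construction will simply discard Bob's input and embed the teleportage into Bob's output.

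For the direction ``causal channel $\Rightarrow$ non-signalling teleportage'': given $\Lambda^{\mathsf{C}}_{1\ldots N}$ and the channel-defined teleportage $T_{a_1,\ldots,a_N}$, I would sum over the outcomes $\{a_j\}_{j \in S_2}$ for an arbitrary bipartition $S_1 \cup S_2 = \{1,\ldots,N\}$. The resulting sum of projectors $\bigotimes_{j \in S_2}|a_j\rangle\langle a_j|$ is the identity on those output spaces, which is equivalent to tracing those output ports of the channel. Causality then implies this marginal of $\Lambda^{\mathsf{C}}_{1\ldots N}$ depends only on the inputs from $S_1$ and Bob, so substituting $|0\rangle\langle 0|_B$ for Bob's input and the remaining projectors produces a well-defined marginal teleportage $T_{a_{i_1},\ldots,a_{i_s}}$ on $S_1$. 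A second use of causality (tracing all untrusted outputs and Bob's output as well) yields a fixed state $\rho_B$ independent of any input, which is exactly the remaining equation in the signalling definition.

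For the direction ``non-signalling teleportage $\Rightarrow$ causal channel'', the canonical candidate is
\begin{equation*}
\Lambda(\eta) \;=\; \sum_{a_1,\ldots,a_N}\bigotimes_{k=1}^{N}|a_k\rangle\langle a_k| \otimes T_{a_1,\ldots,a_N}(\Tr_B \eta)\,,
\end{equation*}
which discards Bob's input, embeds the classical outcomes diagonally into $\mathcal{H}_d^{\otimes N}$, and places the corresponding conditional state on Bob's output. Complete positivity is inherited from each $T_{a_1,\ldots,a_N}$ composed with the partial trace, and trace preservation follows from the normalisation of the teleportage. Plugging $\eta = \rho \otimes |0\rangle\langle 0|_B$ into the channel-defining formula (with the computational-basis projectors $\{|a_k\rangle\langle a_k|\}$) returns $T_{a_1,\ldots,a_N}(\rho)$ verbatim. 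Causality is then verified bipartition by bipartition: summing the output projectors for parties in $S_2$ collapses, by the non-signalling equations on the teleportage, to a map depending only on the inputs of $S_1$ and Bob, while tracing Bob's output and using $\sum_{a_1,\ldots,a_N} T_{a_1,\ldots,a_N} = \rho_B$ shows that Bob's marginal is constant; no signalling from Bob to anyone else holds automatically because the channel discards Bob's input.

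The main obstacle is not conceptual but purely combinatorial: causality of $\Lambda$ must be verified for \emph{every} bipartition of the $(N{+}1)$-party system, including the mixed partitions in which Bob is grouped with some subset of the untrusted parties, and each such check must be matched with the appropriate non-signalling equation on the teleportage. In particular, the full-sum identity $\sum_{a_1,\ldots,a_N}T_{a_1,\ldots,a_N}(\rho) = \rho_B$, which is needed to control Bob's marginal, is not displayed explicitly in the definition but is obtained by chaining the signalling equations across any non-trivial bipartition, and this point has to be made carefully for the proof to close.
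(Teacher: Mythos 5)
Your proposal is correct and follows essentially the same route as the paper: the forward direction identifies summing over outcomes with tracing output ports of the causal channel, and the converse uses exactly the canonical channel $\Gamma(\cdot)=\sum_{\mathbf{a}}|\mathbf{a}\rangle\langle\mathbf{a}|\otimes T_{\mathbf{a}}(\Tr_{B}\{\cdot\})$ constructed in Appendix E. Your remarks on checking causality bipartition by bipartition (including the mixed partitions involving Bob) and on extracting $\sum_{\mathbf{a}}T_{\mathbf{a}}=\rho_{B}$ from the non-signalling equations simply make explicit steps the paper leaves to the reader.
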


It should be clear that post-quantum steering implies post-quantum non-classical teleportation, since if an assemblage is post-quantum then it is channel-defined by a non-localizable channel, this non-localizable channel will then channel-define a teleportage that is post-quantum. 

For the study of steering we had an alternative characterisation of non-signalling assemblages in terms of a unitary representation. This result can be generalised to the set of non-signalling teleportages as follows. 

\begin{thm}\label{nonsigtel} \textbf{Unitary representation of non-signalling teleportages}\\
Let $\{T_{a_{1}...a_{N}}\}$ be a non-signalling teleportage. Then, the teleportage is channel-defined by a channel $\Lambda_{1...N,B}^{\mathsf{C}}:\mathcal{L}(\bigotimes_{j}\cH^{j}_{in}\otimes\mathcal{K}_{B})\rightarrow\mathcal{L}(\mathcal{H}_{d}^{\otimes N}\otimes\mathcal{K}_{B})$ if and only if there exist

\begin{itemize}
\item auxiliary systems $E$ and $E'$ with input and output Hilbert spaces, $\mathcal{H}^{E}_{in}$  and $\mathcal{H}^{E}_{out}$ for $E$, with $\mathcal{H}^{E'}_{in}=\mathcal{K}_{B}$ and  $\mathcal{H}^{E'}_{out}=\mathcal{K}_{B}$ for $E'$, that is the output Hilbert space of $E'$ and $B$ coincide;
\item quantum state $|R\rangle\in\mathcal{H}^{E}_{in}\otimes\mathcal{H}^{E'}_{in}$;
\item unitary operator $V:\bigotimes_{j}\cH^{j}_{in}\otimes\mathcal{H}^{E}_{in}\rightarrow\mathcal{H}_{d}^{\otimes N}\otimes\mathcal{H}^{E}_{out}$,
\end{itemize}
which produce a unitary representation of the channel $\Lambda_{1...N,B}^{\mathsf{C}}:\mathcal{L}(\bigotimes_{j}\cH^{j}_{in}\otimes\mathcal{K}_{B})\rightarrow\mathcal{L}(\mathcal{H}_{d}^{\otimes N}\otimes\mathcal{K}_{B})$ via
\begin{equation*}
\Lambda_{1...N,B}^{\mathsf{C}}(\cdot)=\Tr_{E_{out}}\{V(\Tr_{B_{in}}(\cdot)\otimes|R\rangle\langle  R|_{E,E'})V^{\dagger}\}.
\end{equation*}
Futhermore the unitary $V$ can be decomposed into a sequence of unitaries $U_{k,E}:\mathcal{H}_m\otimes\mathcal{H}^{E}_{1}\rightarrow\mathcal{H}_d\otimes\mathcal{H}^{E}_{2}$ for appropriately chosen Hilbert spaces $\mathcal{H}^{E}_{1}$ and $\mathcal{H}^{E}_{2}$, where for any given permutation $\pi$ of the set $\{1,...,N\}$, we have that
\begin{equation*}
V=U^{\pi}_{\pi(1),E}U^{\pi}_{\pi(2),E}...U^{\pi}_{\pi(N),E}
\end{equation*}
where $U^{\pi}_{k,E}$ is not necessarily the same as $U^{\pi'}_{k,E}$ for two different permutations $\pi$ and $\pi'$.
\end{thm}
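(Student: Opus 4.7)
The plan is to closely mirror the proof of Theorem \ref{nonsigass}, which gave the analogous unitary representation for non-signalling assemblages. The key observation is that a teleportage differs from an assemblage only in that the untrusted parties' inputs to the channel are arbitrary quantum states in $\mathcal{K}_j$ rather than computational basis states in $\mathcal{H}_m$. Since the proof of Theorem \ref{nonsigass} does not genuinely use the classical nature of the assemblage inputs, the same argument lifts to teleportages with $\mathcal{H}_m$ systematically replaced by $\mathcal{K}_j$.

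For the easy (sufficiency) direction, I would directly verify that a channel admitting the stated unitary representation is causal: the decomposition $V = U^{\pi}_{\pi(1),E} \cdots U^{\pi}_{\pi(N),E}$ for every permutation $\pi$ immediately gives semicausality among the $N$ untrusted parties (one can always arrange any party's unitary to act last, so the marginal after tracing out its output is independent of its input), while the fact that $E'$ is never touched by $V$ and Bob's input is traced out gives semicausality both to and from Bob. By Proposition \ref{telNSprop}, the teleportage channel-defined by such a causal channel is then non-signalling.

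For the hard (necessity) direction, the plan has three steps. First, invoke Proposition \ref{telNSprop} to obtain a causal channel $\Lambda^{\mathsf{C}}_{1\ldots N,B}$ that channel-defines the given non-signalling teleportage. Second, apply the unitary representation of causal channels from Beckman et al.\ \cite{Beck} to dilate this channel to a global ancilla $\tilde{E}$ shared among all $N+1$ parties, a pure ancilla state, and local unitaries $U_{k,\tilde{E}}$ whose product reproduces the channel in any order. Third, exploit the fact that in the channel-defined teleportage Bob's input is pinned to the fixed state $\ket{0}_B$ to absorb his contribution into the ancilla: Stinespring dilation lets us repackage the ancilla as a bipartite pure state $\ket{R} \in \mathcal{H}_{in}^{E} \otimes \mathcal{K}_B$ with the $\mathcal{K}_B$ factor (identified with $E'$) travelling directly to Bob's output. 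The remaining unitary $V$ then acts only on the untrusted parties' inputs and $E$, and inherits the permutation-invariant decomposition from the underlying causal-channel representation.

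The main obstacle is the third step: cleanly separating Bob's ancilla share into the $E'$ register so that $V$ acts solely on the untrusted parties' systems and $E$. One must verify that, without loss of generality, Bob's local unitary can be taken to be trivial after suitably redefining the pure ancilla state, which is precisely what Stinespring's theorem combined with the fixed input $\ket{0}_B$ enables. Beyond this, the permutation invariance of $V$'s decomposition requires checking that the construction respects the symmetry guaranteed by the causal-channel representation for every reordering of the $N$ untrusted parties; this follows by applying the causal-channel decomposition separately for each permutation $\pi$ and relabeling the resulting local unitaries as $U^\pi_{k,E}$, exactly as in the corresponding step of Theorem \ref{nonsigass}.
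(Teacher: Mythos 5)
Your proposal is correct and follows essentially the same route as the paper's proof: obtain a causal channel via Proposition \ref{telNSprop}, take the unitary dilation of causal channels in the ordering where Bob acts first, and absorb Bob's unitary together with his fixed input $\ket{0}_B$ into a redefined pure ancilla state $|R\rangle$ whose $\mathcal{K}_B$ factor is handed directly to Bob's output, leaving a unitary $V$ on the untrusted inputs and $E$ that carries the permutation-dependent decomposition. The only cosmetic difference is that the paper first groups the $N$ untrusted parties into a single ``Alice'' to run the bipartite semicausal argument and then re-expands $V$ via the multipartite representation (Theorem \ref{theoremunirep}), whereas you invoke the $(N+1)$-partite dilation directly.
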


Given this last result about non-signalling teleportages, we can actually generalise the GHJW theorem from the case of steering to the study of non-classical teleportation. 

\begin{cor}\label{genGHJW}
For $N=1$, all non-signalling teleportages are also quantum teleportages.
\end{cor}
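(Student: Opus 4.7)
The plan is to apply Theorem~\ref{nonsigtel} with $N=1$ and observe that, in this case, the resulting unitary representation is already manifestly localizable, so Proposition~\ref{telqprop} will immediately upgrade the teleportage from non-signalling to quantum. The proof is essentially the same as for Corollary~\ref{cor:GHJW}, transposed from assemblages to teleportages.

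Concretely, I would begin by invoking Proposition~\ref{telNSprop} to obtain, from the non-signalling teleportage $\{T_a\}$, a causal channel $\Lambda^{\mathsf{C}}_{1,B}:\mathcal{L}(\mathcal{K}_1\otimes\mathcal{K}_B)\to\mathcal{L}(\mathcal{H}_d\otimes\mathcal{K}_B)$ that channel-defines it. Applying Theorem~\ref{nonsigtel} at $N=1$ extracts an ancilla state $|R\rangle\in\mathcal{H}^{E}_{in}\otimes\mathcal{K}_B$ (with Alice holding the $E$-part and Bob the $E'$-part, where $\mathcal{H}^{E'}_{in}=\mathcal{H}^{E'}_{out}=\mathcal{K}_B$) together with a single unitary $V:\mathcal{K}_1\otimes\mathcal{H}^{E}_{in}\to\mathcal{H}_d\otimes\mathcal{H}^{E}_{out}$ such that
\[
\Lambda^{\mathsf{C}}_{1,B}(\cdot)=\Tr_{E_{out}}\bigl\{V\bigl(\Tr_{B_{in}}(\cdot)\otimes |R\rangle\langle R|_{E,E'}\bigr)V^{\dagger}\bigr\}.
\]

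The crucial observation is that for $N=1$ the unitary $V$ acts only on Alice's input and her share of the ancilla, and does not touch Bob's Hilbert space at all; Bob is coupled to Alice only through the pre-shared state $|R\rangle$. Reading this decomposition as a localizable channel is immediate: Alice and Bob share the entangled state $|R\rangle$, Alice's local operation is $V$ followed by tracing out $E_{out}$, while Bob's local operation is simply the identity that forwards his share of $|R\rangle$ to his output (after first discarding the fixed input $|0\rangle$ dictated by the teleportage definition). Hence $\Lambda^{\mathsf{C}}_{1,B}$ is localizable, and by Proposition~\ref{telqprop} the teleportage $\{T_a\}$ is quantum.

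I do not expect any serious obstacle: all the heavy lifting is already done by Theorem~\ref{nonsigtel}. Unlike the $N\geq 2$ case, where one would have to deal with the sequential decomposition $V=U^{\pi}_{\pi(1),E}\cdots U^{\pi}_{\pi(N),E}$ and the associated permutation-invariance constraints, for $N=1$ there is only one factor and no ordering issue to reconcile, so the localizable structure is visible on inspection.
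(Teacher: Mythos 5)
Your proof is correct and follows essentially the same route as the paper: the authors also derive this corollary directly from Theorem~\ref{nonsigtel}, whose proof already exhibits the causal channel as localizable across the Alice/Bob bipartition (with $U_{AR}:=V_{AE}$ and Bob's local operation a swap of his ancilla share onto his output), so that for $N=1$ the single untrusted party makes the channel genuinely localizable and Proposition~\ref{telqprop} concludes. No gaps.
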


That is, for the original context in which non-classical teleportation was studied, the bipartite setting, the No-Signalling principle is already enough to characterise exactly \textit{everything} that can be done quantum mechanically in the experiment. 

Finally, in analogy with everything that has gone before, we can define the set of almost quantum teleportages as follows.

\begin{defn} 
A teleportage is \textbf{almost quantum} if there exists an almost localizable channel $\Lambda_{1...N}^{\tilde{\mathsf{Q}}}:\mathcal{L}(\bigotimes_{j}\cH^{j}_{in}=\bigotimes_{j}\mathcal{K}_{j}\otimes\mathcal{K}_{B})\rightarrow\mathcal{L}(\bigotimes_{j=1}^{n}\mathcal{H}_{d}^{\otimes N}\otimes\mathcal{K}_{B})$ such that the teleportage is channel-defined by $\Lambda_{1...N}^{\tilde{\mathsf{Q}}}$.
\end{defn}

\subsection{Connections between all forms of post-quantum non-locality}

The relationship between entanglement, steering, and non-locality is now well-studied within the scope of quantum states. Since non-locality implies steering the non-trivial question is which entangled states demonstrate steering, but not non-locality. It has been shown that for all possible measurements on a quantum state, entanglement, steering, and non-locality are all inequivalent \cite{quintino}. In post-quantum non-locality, obviously we cannot automatically associate a process with measurements on a quantum state. Furthermore, due to the GHJW theorem, post-quantum steering cannot be demonstrated when there are only two parties, although post-quantum non-locality can be demonstrated with only two parties. Therefore, the relationship between post-quantum non-locality and post-quantum steering is somewhat subtle. The resolution is, of course, to consider a steering scenario with two (or more) uncharacterised parties and then generate correlations by making a measurement on Bob's system. If these correlations demonstrate post-quantum non-locality then this implies post-quantum steering, since the whole process cannot be associated with local measurements on a quantum system. However, post-quantum steering does not imply post-quantum non-locality, as demonstrated in Ref.~\cite{pqsp}.

The relationship between post-quantum non-locality and post-quantum Buscemi non-locality was discussed at length in the previous section. In particular, if we take a distributed measurement and for a combination of local preparations of states, we obtain post-quantum correlations, then this implies post-quantum Buscemi non-locality. As mentioned above, we leave it open whether there are post-quantum distributed measurements that do not result in post-quantum non-locality for all possible preparations. 

The next point to consider is the relationship between post-quantum Buscemi non-locality and post-quantum non-classical teleportation. As in the relationship between non-locality and steering, if we take a teleportage and make a measurement on Bob's system, we obtain a distributed measurement. If the distributed measurement is post-quantum, then clearly the teleportage was itself post-quantum. Likewise, one can obtain an assemblage from a teleportage by preparing certain quantum systems for each of the uncharacterised parties. If the assemblage demonstrates post-quantum steering then the teleportage was post-quantum. 
We see then that all these different forms of post-quantum non-locality are somehow related to each other as summarised in Fig.~\ref{fig:relations}.

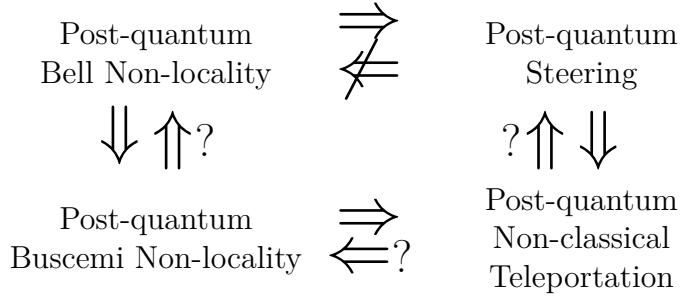
\begin{figure}
\begin{center}
\begin{tikzpicture}[scale=0.7]

\node[align=center] at (4,2) {Post-quantum \\ Steering};
\node[align=center] at (-4,2) {Post-quantum \\ Bell Non-locality};
\node[align=center] at (4,-1.5) {Post-quantum \\ Non-classical \\ Teleportation};
\node[align=center] at (-4,-1.5) {Post-quantum \\ Buscemi Non-locality};

\node[align=center] at (0,2) {\Huge{$\Rightarrow$} \\ \Huge{$\not\Leftarrow$}};
\node[align=center] at (0,-1.5) {\Huge{$\Rightarrow$} \\ \Huge{${\Leftarrow}$}\Large{?} };
\node[align=center] at (4.5,0.4) {\Huge{$\Downarrow$} };
\node[align=center] at (3.5,0.4) {\Huge{$\Uparrow$} };
\node[align=center] at (2.85,0.4) {\Large{?} };
\node[align=center] at (-4.5,0.5) {\Huge{$\Downarrow$} };
\node[align=center] at (-3.5,0.4) {\Huge{$\Uparrow$} };
\node[align=center] at (-2.9,0.4) {\Large{?} };

\end{tikzpicture}
\end{center}
\caption{Implication relations among the different forms of post-quantum non-locality. {Where there is a question mark next to an implication, this means that it is open whether there is an implication. One can also infer from the diagram that Post-quantum Bell Non-locality infers Post-quantum Non-classical Teleportation, but the reverse implication definitely does not hold.}}
\label{fig:relations}
\end{figure}

What is the relationship between post-quantum steering and post-quantum Buscemi non-locality? At first sight it seems difficult to relate the two, since in one scenario measurements are made, but preparations are made in the other. However, given our picture of non-locality from the perspective of quantum channels we can find a resolution. One way of generating an assemblage from a distributed measurement would be the following (see Fig.~\ref{fig:theconnection}): encode the classical inputs $\{|x_{j}\rangle\}$ as elements of an orthonormal basis $\mathcal{H}_{m}$ for $j\in\{1,...,N\}$, and take a localizable channel $\Lambda:\mathcal{L}(\mathcal{H}_{m}^{\otimes N}\otimes\mathcal{H}_{B})\rightarrow\bigotimes_{j=1}^{N}\mathcal{K}_{j}\otimes\mathcal{H}_{B}$ where $\mathcal{K}_{j}$ is the Hilbert space associated with the $j$th party's input to a distributed measurement, and $\mathcal{H}_{B}$ is an auxiliary Hilbert space associated with Bob's system; apply the channel $\Lambda$ to the input states $\{|x_{1}x_{2},...,x_{N}\rangle\}$, and then apply the distributed measurement to the systems now living in the Hilbert space $\bigotimes_{j=1}^{N}\mathcal{K}_{j}$, which results in an assemblage $\{\sigma_{a_{1}...a_{N}|x_{1}...x_{N}}\}$, i.e. operators acting on Bob's system. Since this extra element is a localizable channel, it will not introduce any post-quantum elements in its own right. Therefore, if we take a distributed measurement and turn it into an assemblage in this fashion, if the assemblage is post-quantum then the original distributed measurement itself was post-quantum.

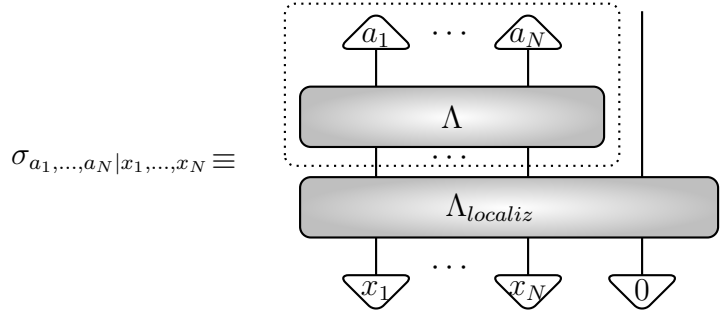
\begin{figure}
\begin{center}
\begin{tikzpicture}
\node at (-7.5,0) {$\sigma_{a_1,\ldots,a_N | x_1, \ldots, x_N}$};
\node at (-6,0) {$\equiv$};

\shade[draw, thick, ,rounded corners, inner color=white,outer color=gray!50!white] (-5,-1) rectangle (0.5,-0.2) ;
\node at (-2.5,-0.6) {$\Lambda_{localiz}$};
\shade[draw, thick, ,rounded corners, inner color=white,outer color=gray!50!white] (-5,0.2) rectangle (-1,1) ;
\node at (-3,0.6) {$\Lambda$};

\draw[thick] (-4,-1.5) -- (-4,-1);
\draw[thick] (-4,-0.2) -- (-4,0.2);
\draw[thick] (-2,-1.5) -- (-2,-1);
\draw[thick] (-4,1) -- (-4,1.5);
\draw[thick] (-2,1) -- (-2,1.5);
\draw[thick] (-2,-0.2) -- (-2,0.2);
\draw[thick] (-0.5,-0.2) -- (-0.5,2);
\draw[thick] (-0.5,-1.5) -- (-0.5,-1);

\node at (-3,1.7) {$\cdots$};
\node at (-3,-1.4) {$\cdots$};
\node at (-3,0.05) {$\cdots$};

\draw[thick, rounded corners] (-0.5,-2) -- (0,-1.5) -- (-1, -1.5) -- cycle;
\node at (-0.5,-1.7) {$0$};

\draw[thick, rounded corners] (-4,2) -- (-3.5,1.5) -- (-4.5, 1.5) -- cycle;
\node at (-4,1.65) {$a_1$};
\draw[thick, rounded corners] (-2,2) -- (-1.5,1.5) -- (-2.5, 1.5) -- cycle;
\node at (-2,1.65) {$a_N$};
\draw[thick, rounded corners] (-4,-2) -- (-3.5,-1.5) -- (-4.5, -1.5) -- cycle;
\node at (-4,-1.7) {$x_1$};
\draw[thick, rounded corners] (-2,-2) -- (-1.5,-1.5) -- (-2.5, -1.5) -- cycle;
\node at (-2,-1.7) {$x_N$};

\draw[thick, dotted ,rounded corners] (-5.2,-0.05) rectangle (-0.8, 2.1);

\end{tikzpicture}
\end{center}
\caption{A steering experiment constructed from a Buscemi nonlocality one. The distributed measurement is depicted within the dotted box.}
\label{fig:theconnection}
\end{figure}

Given Fig.~\ref{fig:relations}, we immediately see that post-quantum non-classical teleportation cannot imply post-quantum non-locality, since post-quantum steering does not imply post-quantum non-locality. That is, if post-quantum non-classical teleportation and post-quantum non-locality were equivalent then, post-quantum steering would imply post-quantum non-locality, which is not true. Furthermore, this also implies that either post-quantum Buscemi non-locality does not imply post-quantum non-locality, or post-quantum non-classical teleportation does not imply post-quantum Buscemi non-locality, or both. {In Fig.~\ref{fig:relations} we indicate these main open questions between all forms of post-quantum non-locality with a question mark next to the implication. To prove, for example, that Post-quantum Buscemi Non-locality does not imply Post-quantum Bell Non-locality, one would need to find a distributed measurement that cannot be realised via a localisable channel, yet this channel does not give post-quantum correlations, e.g. it could be local-limited. We conjecture that all four notions of post-quantum non-locality are inequivalent.}


\section{Discussion}

In this work we have shown that the study of post-quantum non-locality and steering can be seen as two facets of the study of quantum channels that do not permit superluminal signalling. We further showed that other scenarios can be readily approached within this scope, and hence initiated the study of post-quantum Buscemi non-locality and post-quantum non-classical teleportation. This general perspective allows us generate new examples of post-quantum steering, and allow us to generate novel kinds of non-signalling, but non-localizable channels. Furthermore, we have expanded the definition of almost quantum correlations to the domain of quantum channels (with no reference made to measurements), allowing us to recover almost quantum correlations and almost quantum assemblages in an appropriate domain.

Another channel-based perspective on the study of steering has led to so-called \textit{channel steering} \cite{pianichan}, as briefly mentioned in Section \ref{constchanass}. Channel steering is a generalisation of standard bipartite steering (involving Alice and Bob), where now there is a third party, Charlie, that inputs a quantum system into a channel, and Alice and Bob have systems that are the outputs of this channel. In a sense, this channel is then a broadcast channel. Alice can perform a measurement on her system to demonstrate to Bob that she can steer his output of the channel. Channel steering is distinct from the forms of non-locality considered here, but we can extend our channels to include this third party, and then study causal, but non-localizable, channels for post-quantum channel steering. We leave this for future work.

The characterisation of quantum non-locality is not only of foundational interest, but it is also of use in quantum information theory. In particular, characterising the set of quantum correlations is useful for device-independent quantum information, since it allows for a way to practically constrain what, say, a malicious agent can do in the preparation of devices. The study of Buscemi non-locality is of relevance to measurement-device-independent quantum information, hence this paradigm may profit from the characterisation of what is quantum mechanically allowed in the setup, with direct consequences regarding randomness certification and entanglement quantification \cite{Skrzypczyk,Branciard}.

One of the main open problems of this work is to further probe the relationships between the different kinds of post-quantum non-locality and steering. For example, as discussed, we know that post-quantum steering does not always imply post-quantum non-locality, but does post-quantum Buscemi non-locality imply post-quantum Bell non-locality? In Fig.~\ref{fig:relations} we summarised all the known relationships between all forms of post-quantum non-locality. We conjecture that all of these different notions of post-quantum non-locality are inequivalent, just as post-quantum non-locality is inequivalent to post-quantum steering.

On the way to proving our conjecture, it may be relevant to first study possible characterisations of all forms of almost quantum non-locality in terms of a semi-definite program. Such a connection was crucial in \cite{pqsp} when showing that post-quantum steering does not imply post-quantum non-locality.  Indeed, in this work we gave an interpretation in terms of quantum channels to the original SDP characterisation of almost quantum asseblages, thus giving a physical underpinning of this set. This interpretation allowed us to generalise the notion of almost quantum nonclassicality, hence now it would be interesting to relate back these general notions to SDPs when possible.

Since we have shown that post-quantum non-locality and steering are two aspects of a more general study of quantum channels, we hope this work motivates a \textit{resource theory of post-quantumness}. This resource theory could be approached from the point-of-view quantum channels, where the non-localizability of a channel is a resource. This relates directly to the study of zero-error communication with quantum channels \cite{duan}. Given this connection, we expect to find applications of post-quantum steering, just as we find that post-quantum non-locality can be used to trivialise communication complexity. Furthermore, we might be able to find applications of post-quantum Buscemi non-locality and post-quantum non-classical teleportation. Going further, there are other possibilities for non-locality scenarios. In particular, one can consider scenarios where all parties' outputs are quantum systems.

Our work could fit neatly within the study of quantum combs \cite{combs}, quantum strategies \cite{watrous}, quantum causal models \cite{Allen} and process matrices \cite{ognyan}. Indeed, since in certain scenarios in our work it is assumed that one party has access to a quantum system but the global system may be incompatible with quantum mechanics, it has a similar motivation to the study of indefinite causal order \cite{ognyan}. It would be interesting to see how our non-signalling processes interact with processes that could include signalling, and whether this interaction could be used to understand the structure of post-quantum non-locality. 

Last but not least, the resource theory of non-locality has been studied by only thinking of systems as black boxes. That is, one does not need to consider Hilbert spaces, or other features of quantum mechanics, but only consider the correlations associated with particular devices. The approach in this paper has been couched in the language of quantum theory. Can we consider generalising our framework further to consider trusted (and characterised) devices that may not be quantum, but are objects that can be described within a broad family of, say, generalised probabilistic theories \cite{gpt}? Indeed, steering has already been studied within the broad framework of these theories \cite{jencova,toytheory}. The study of non-signalling channels in general theories is left for future work and could then shed insight onto what is so special about quantum theory.

\section*{Acknowledgements}
We thank Marco Piani, Paul Skrzypczyk and Ivan \v{S}upi\'{c} for fruitful discussions and comments. MJH acknowledges funding from the EPSRC Networked Quantum Information Technologies (NQIT) Hub and from the EPSRC through grant Building Large Quantum States out of Light (EP/K034480/1). This research was supported by Perimeter Institute for Theoretical Physics. Research at Perimeter Institute is supported by the Government of Canada through the Department of Innovation, Science and Economic Development Canada and by the Province of Ontario through the Ministry of Research, Innovation and Science.

\appendix

\section{Relevant concepts from quantum channels}\label{ap:channels}

In this section we review families of multipartite quantum channels which are pertinent when discussing non-locality and steering. 
The general scenario we consider is that outlined in section \ref{se:channels}. 

Beckman, Gottesman, Nielsen, and Preskill \cite{Beck} considered quantum channels in such a set-up of space-like separated laboratories, especially those channels that are compatible with relativistic causality. That is, if two parties are space-like separated, the channel mapping their input states to their output states do not permit communication between them, called the \textit{causal channels}. There are multiple equivalent mathematical definitions capturing this concept \cite{SW}, and we shall present the definition of semicausal and causal channels.

\begin{defn}\textbf{Semicausal and Causal channels} \\
Given a multipartite system and a bipartition $S_A \cup S_B = \{1, \ldots, N\}$ of its elements, a map $\Lambda:\mathcal{L}(\mathcal{H}^{S_{A}}_{in}\otimes\mathcal{H}^{S_{B}}_{in})\rightarrow\mathcal{L}(\mathcal{H}^{S_{A}}_{out}\otimes\mathcal{H}^{S_{B}}_{out})$ is \textbf{semicausal} from $S_B$ to $S_A$, denoted $S_B \not\rightarrow S_A$, there exists a channel $\Gamma: \mathcal{L}(\cH^{S_A}_{in}) \rightarrow \mathcal{L}(\cH^{S_A}_{out})$, such that for all states $\rho_{in} \in \mathcal{D}(\mathcal{H}^{S_{A}}_{in}\otimes\mathcal{H}^{S_{B}}_{in})$, $\Tr_{S_B}\{ \Lambda[\rho_{in}] \} = \Gamma(\Tr_{S_B}[\rho_{in}])$. A map that is semicausal for all bipartitions is called \textbf{causal}. 
\end{defn}

For every channel there exists a unitary operator $U$ acting on a system and ancilla $E$, such that $\Lambda [\rho] = \Tr_E \left\{ U (\rho \otimes \ket{0} \bra{0}_E) U^\dagger \right\}$ for some state $|0\rangle_{E}\in\mathcal{H}_{E}$. What is the form of a unitary dilation of (semi)causal channels? For bipartite semicausal maps, works by Schumacher and Westmoreland \cite{SW}, D'Ariano et al \cite{Pao}, and Piani et al \cite{Piani} provide the following characterisation:
\begin{thm}\label{lem:bip} \textbf{Unitary representation for bipartite semicausal channels {\cite{SW,Pao,Piani}}.}\\
Let $\Lambda_{AB}$ be a CPTP map, with $B \not\rightarrow A$. Then, there exists an auxiliary system $E$ with input, intermediate and output Hilbert spaces, $\mathcal{H}^{E}_{in}$, $\mathcal{H}^{E}_{int}$, and $\mathcal{H}^{E}_{out}$ respectively, and  quantum state $|0\rangle\in\mathcal{H}^{E}_{in}$, producing a unitary representation of the map as $\Lambda_{AB} [\rho] = \Tr_E \left\{ U (\rho \otimes \ket{0} \bra{0}_E) U^\dagger \right\}$, where  the unitary $U$ can be decomposed as a unitary acting on $A$ and $E$ followed by a unitary acting on $B$ and $E$ alone. That is, 
\begin{equation*}
U = U_{EB} \, U_{AE}\,,
\end{equation*}
for $U_{EB}:\cH_{int}^{E}\otimes\mathcal{H}_{in}^{B}\rightarrow\cH_{out}^{E}\otimes\mathcal{H}_{out}^{B}$ and $U_{AE}:\cH_{in}^{A}\otimes\mathcal{H}_{in}^{E}\rightarrow\cH_{out}^{A}\otimes\mathcal{H}_{int}^{E}$.
\end{thm}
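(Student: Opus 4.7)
The plan is to apply Stinespring's dilation theorem iteratively, exploiting the semicausality condition to dictate the resulting structure, with the essential uniqueness of the Stinespring representation as the main technical tool.

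First I would extract the reduced channel $\Gamma:\mathcal{L}(\mathcal{H}^{A}_{in})\to\mathcal{L}(\mathcal{H}^{A}_{out})$ guaranteed by the semicausality hypothesis $B\not\rightarrow A$, so that $\Tr_B\Lambda_{AB}(\rho)=\Gamma(\Tr_B\rho)$ for every $\rho\in\mathcal{D}(\mathcal{H}^{A}_{in}\otimes\mathcal{H}^{B}_{in})$, and fix a \emph{minimal} Stinespring dilation $V_\Gamma:\mathcal{H}^{A}_{in}\to\mathcal{H}^{A}_{out}\otimes\mathcal{H}^{E}_{int}$ of $\Gamma$. Separately, I would take an arbitrary Stinespring isometry $W:\mathcal{H}^{A}_{in}\otimes\mathcal{H}^{B}_{in}\to\mathcal{H}^{A}_{out}\otimes\mathcal{H}^{B}_{out}\otimes\mathcal{H}^{F}$ for the full channel $\Lambda_{AB}$, purifying the ancilla.

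The pivotal observation is that, for each normalised $\ket{\beta}\in\mathcal{H}^{B}_{in}$, the map $\ket{\alpha}\mapsto W(\ket{\alpha}\otimes\ket{\beta})$ is itself a (generally non-minimal) Stinespring dilation of $\Gamma$: tracing out the $\mathcal{H}^{B}_{out}\otimes\mathcal{H}^{F}$ portion reproduces $\Gamma(\rho_A)$ independently of $\beta$, and this is precisely what the semicausality assumption provides. Stinespring's uniqueness theorem, together with the minimality of $V_\Gamma$, then yields a unique isometry $Y_\beta:\mathcal{H}^{E}_{int}\to\mathcal{H}^{B}_{out}\otimes\mathcal{H}^{F}$ with $W(\ket{\alpha}\otimes\ket{\beta})=(\id_{A_{out}}\otimes Y_\beta)V_\Gamma\ket{\alpha}$ for every $\ket{\alpha}\in\mathcal{H}^{A}_{in}$. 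I would then promote the assignment $\ket{\beta}\mapsto Y_\beta$ to a single linear map $Y:\mathcal{H}^{E}_{int}\otimes\mathcal{H}^{B}_{in}\to\mathcal{H}^{B}_{out}\otimes\mathcal{H}^{F}$ satisfying the factorisation $W=(\id_{A_{out}}\otimes Y)(V_\Gamma\otimes\id_{B_{in}})$.

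The main obstacle is showing that this extension $Y$ is well defined and remains an isometry, because Stinespring uniqueness only fixes each $Y_\beta$ up to an isometric freedom on any redundant part of the environment. The minimality of $V_\Gamma$ is what removes this ambiguity: it guarantees that the partial inner products $(\id_{A_{out}}\otimes\bra{e})V_\Gamma\ket{\alpha}$ span enough of $\mathcal{H}^{A}_{out}$ as $\ket{\alpha}$ and $\ket{e}$ vary to uniquely determine $Y_\beta$ from the displayed relation. Linearity of $\ket{\beta}\mapsto Y_\beta$ is then inherited from the linearity of $W$ in its second slot, and the isometry identity $Y^\dagger Y=\id$ follows from $W^\dagger W=\id_{A_{in}B_{in}}$ combined with $V_\Gamma^\dagger V_\Gamma=\id_{A_{in}}$.

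Finally, I would dilate each isometry to a unitary by appending an input ancilla register in a fixed state: $V_\Gamma$ extends to a unitary $U_{AE}:\mathcal{H}^{A}_{in}\otimes\mathcal{H}^{E}_{in}\to\mathcal{H}^{A}_{out}\otimes\mathcal{H}^{E}_{int}$ via $\ket{\alpha}\otimes\ket{0}_E\mapsto V_\Gamma\ket{\alpha}$, and $Y$ similarly extends to a unitary $U_{EB}:\mathcal{H}^{E}_{int}\otimes\mathcal{H}^{B}_{in}\to\mathcal{H}^{E}_{out}\otimes\mathcal{H}^{B}_{out}$, with $\mathcal{H}^{E}_{out}$ absorbing the residual environment $\mathcal{H}^{F}$. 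The composition $U=U_{EB}\,U_{AE}$ realises $\Lambda_{AB}[\rho]=\Tr_E\{U(\rho\otimes\ket{0}\bra{0}_E)U^\dagger\}$ in the advertised sequential form, completing the proof.
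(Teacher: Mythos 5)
Your proof is sound, and it is essentially the argument behind the cited result: the paper itself gives no proof of this theorem but imports it from \cite{SW,Pao,Piani}, where the same Stinespring-based strategy (minimally dilate the reduced channel $\Gamma$, note via semicausality that $\ket{\alpha}\mapsto W(\ket{\alpha}\otimes\ket{\beta})$ dilates the same $\Gamma$, and use uniqueness to factor $W$ through $V_\Gamma\otimes\id_{B_{in}}$) is how semicausal is shown to be semilocalizable. The one step I would tighten is the claim that $Y^\dagger Y=\id$ ``follows from $W^\dagger W=\id$ and $V_\Gamma^\dagger V_\Gamma=\id$'': these alone only fix $V_\Gamma^\dagger\left(\id\otimes Y_{\beta_1}^\dagger Y_{\beta_2}\right)V_\Gamma=\langle\beta_1|\beta_2\rangle\id$, and to conclude $Y_{\beta_1}^\dagger Y_{\beta_2}=\langle\beta_1|\beta_2\rangle\id$ you should either polarize over the individual isometries $Y_\beta$ using the linearity you established, or---more economically---observe that $V_\Gamma\otimes\id_{B_{in}}$ is already a \emph{minimal} Stinespring dilation of $\rho\mapsto\Gamma(\Tr_B\rho)$, so uniqueness delivers the global isometry $Y$ in a single step with no patching required.
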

The statement of this theorem is depicted in Fig.~\ref{f:causal}. When the map is fully causal, then there exist both a decomposition in terms of $U=U_{EB} \, U_{AE}$ and one in terms of $V=V_{AE} \, V_{EB}$, where the $U_j$ and $V_j$ are unitaries, for $j\in\{AE,EB\}$. 

\begin{figure}
\begin{center}
\subfigure[Semicausal map]{

\begin{tikzpicture}
\shade[draw, thick, ,rounded corners, inner color=white,outer color=gray!50!white] (-5,-1) rectangle (-1,1) ;
\node at (-3,0) {$\Lambda_{B \not\rightarrow A}$};
\draw[thick] (-4,-2) -- (-4,-1);
\draw[thick] (-2,-2) -- (-2,-1);
\draw[thick] (-4,1) -- (-4,2);
\draw[thick] (-2,1) -- (-2,2);
\node at (-4,2.5) {Alice};
\node at (-2,2.5) {Bob};

\node at (0,0) {$\equiv$};
\draw[thick, rounded corners] (1,-1) rectangle (5,1);
\draw[thick, rounded corners] (3,-0.99) -- (3.5,-0.49) -- (2.5, -0.49) -- cycle;
\node at (3,-0.69) {$E$};
\draw[thick, rounded corners] (1.5,-0.39) rectangle (3.5,0.11);
\draw[thick, rounded corners] (2.5,0.21) rectangle (4.5,0.71);
\node at (2.5,-0.15) {$U_{AE}$};
\node at (3.5,0.46) {$U_{EB}$};

\draw[thick] (2,-2) -- (2,-0.39);
\draw[thick] (2,0.11) -- (2,2);
\draw[thick] (4,-2) -- (4,0.21);
\draw[thick] (4,0.71) -- (4,2);

\draw[thick] (3,-0.49) -- (3,-0.39);
\draw[thick] (3,0.11) -- (3,0.21);
\draw[thick] (3,0.71) -- (3,0.85);
\draw[thick] (2.8,0.85) -- (3.2,0.85);
\draw[thick] (2.85,0.9) -- (3.15,0.9);
\draw[thick] (2.9,0.95) -- (3.1,0.95);
\node at (2,2.5) {Alice};
\node at (4,2.5) {Bob};

\end{tikzpicture}}

\subfigure[Causal map]{
\begin{tikzpicture}

\shade[draw, thick, ,rounded corners, inner color=white,outer color=gray!50!white] (-5,-1) rectangle (-1,1) ;
\node at (-3,0) {$\Lambda_{B \not\leftrightarrow A}$};
\draw[thick] (-4,-2) -- (-4,-1);
\draw[thick] (-2,-2) -- (-2,-1);
\draw[thick] (-4,1) -- (-4,2);
\draw[thick] (-2,1) -- (-2,2);
\node at (-4,2.5) {Alice};
\node at (-2,2.5) {Bob};
\node at (-0.5,0) {$\equiv$};

\draw[thick, rounded corners] (0,-1) rectangle (4,1);
\draw[thick, rounded corners] (2,-0.99) -- (2.5,-0.49) -- (1.5, -0.49) -- cycle;
\node at (2,-0.69) {$E$};
\draw[thick, rounded corners] (0.5,-0.39) rectangle (2.5,0.11);
\draw[thick, rounded corners] (1.5,0.21) rectangle (3.5,0.71);
\node at (1.5,-0.15) {$U_{AE}$};
\node at (2.5,0.46) {$U_{EB}$};

\draw[thick] (1,-2) -- (1,-0.39);
\draw[thick] (1,0.11) -- (1,2);
\draw[thick] (3,-2) -- (3,0.21);
\draw[thick] (3,0.71) -- (3,2);

\draw[thick] (2,-0.49) -- (2,-0.39);
\draw[thick] (2,0.11) -- (2,0.21);
\draw[thick] (2,0.71) -- (2,0.85);
\draw[thick] (1.8,0.85) -- (2.2,0.85);
\draw[thick] (1.85,0.9) -- (2.15,0.9);
\draw[thick] (1.9,0.95) -- (2.1,0.95);

\node at (1,2.5) {Alice};
\node at (3,2.5) {Bob};

\node at (4.5,0) {$\equiv$};
\draw[thick, rounded corners] (5,-1) rectangle (9,1);
\draw[thick, rounded corners] (7,-0.99) -- (7.5,-0.49) -- (6.5, -0.49) -- cycle;
\node at (7,-0.69) {$E$};
\draw[thick, rounded corners] (5.5,0.21) rectangle (7.5,0.71);
\draw[thick, rounded corners] (6.5,-0.39) rectangle (8.5,0.11);
\node at (7.5,-0.15) {$V_{EB}$};
\node at (6.5,0.46) {$V_{AE}$};

\draw[thick] (8,-2) -- (8,-0.39);
\draw[thick] (8,0.11) -- (8,2);
\draw[thick] (6,-2) -- (6,0.21);
\draw[thick] (6,0.71) -- (6,2);

\draw[thick] (7,-0.49) -- (7,-0.39);
\draw[thick] (7,0.11) -- (7,0.21);
\draw[thick] (7,0.71) -- (7,0.85);
\draw[thick] (6.8,0.85) -- (7.2,0.85);
\draw[thick] (6.85,0.9) -- (7.15,0.9);
\draw[thick] (6.9,0.95) -- (7.1,0.95);

\node at (6,2.5) {Alice};
\node at (8,2.5) {Bob};
\end{tikzpicture}}
\end{center}
\caption{(a) A semicausal CPTP channel, where Bob does not signal Alice, has an equivalent representation where Alice performs a unitary operation $U_{AE}$ on her system plus an ancilla, and afterwards Bob performs a unitary operation $U_{EB}$ on his system plus the shared ancilla. (b) A causal map, where there exist a unitary decomposition for each ordering of the parties. }
\label{f:causal}
\end{figure}
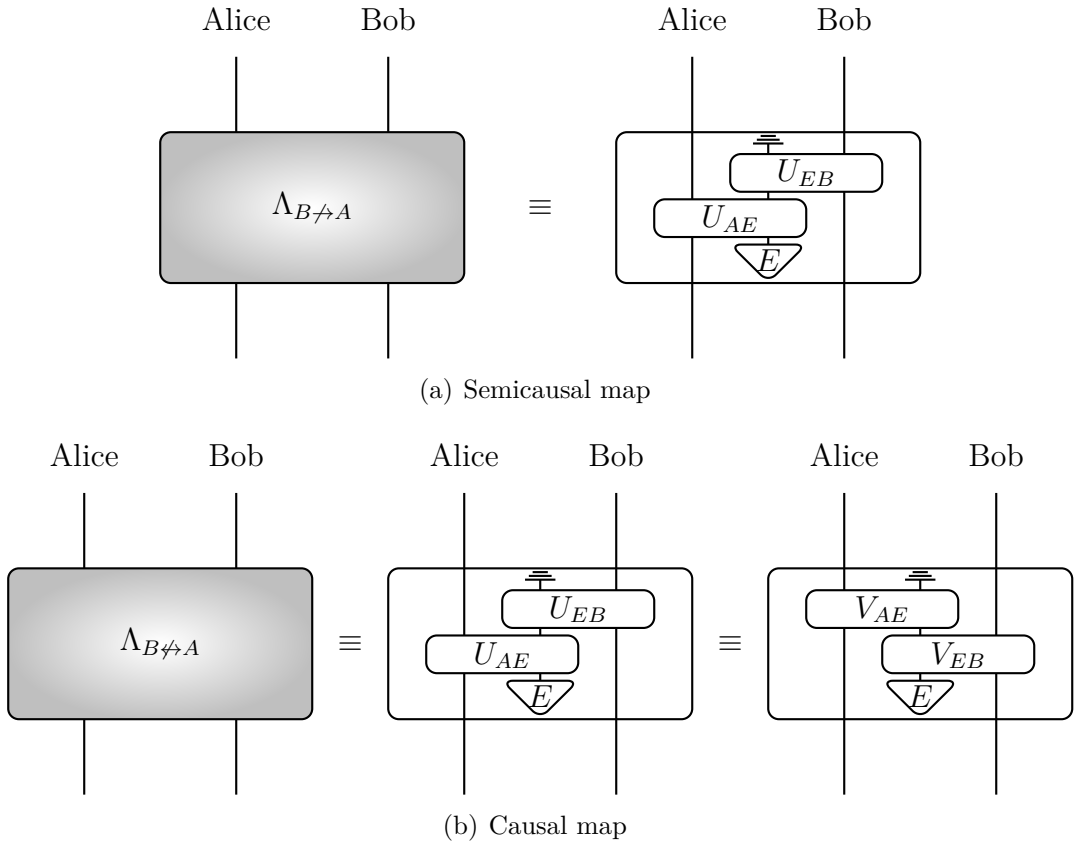
\begin{figure}
\begin{center}
\begin{tikzpicture}
\shade[draw, thick, ,rounded corners, inner color=white,outer color=gray!50!white] (-5,-1) rectangle (-1,1) ;
\node at (-3,0) {$\Lambda_{localizable}$};
\draw[thick] (-4,-2) -- (-4,-1);
\draw[thick] (-2,-2) -- (-2,-1);
\draw[thick] (-4,1) -- (-4,2);
\draw[thick] (-2,1) -- (-2,2);
\node at (-4,2.5) {Alice};
\node at (-2,2.5) {Bob};

\node at (0,0) {$\equiv$};
\draw[thick, rounded corners] (1,-1) rectangle (5,1);
\draw[thick, rounded corners] (3,-0.95) -- (3.9,-0.45) -- (2.1, -0.45) -- cycle;
\node at (3,-0.66) {$E$};

\draw[thick, rounded corners] (1.5,-0.1) rectangle (2.9,0.4);
\draw[thick, rounded corners] (3.1,-0.1) rectangle (4.5,0.4);
\node at (2.3,0.1) {$U_{AE}$};
\node at (3.7,0.1) {$U_{EB}$};

\draw[thick] (2,-2) -- (2,-0.1);
\draw[thick] (2,0.4) -- (2,2);
\draw[thick] (4,-2) -- (4,-0.1);
\draw[thick] (4,0.4) -- (4,2);

\draw[thick] (2.5,-0.45) -- (2.5,-0.1);
\draw[thick] (3.5,-0.45) -- (3.5,-0.1);

\draw[thick] (2.5,0.4) -- (2.5,0.6);
\draw[thick, xshift=-0.5cm, yshift=-0.25cm] (2.9,0.99) -- (3.1,0.99);
\draw[thick, xshift=-0.5cm, yshift=-0.25cm] (2.85,0.92) -- (3.15,0.92);
\draw[thick, xshift=-0.5cm, yshift=-0.25cm] (2.8,0.85) -- (3.2,0.85);

\draw[thick] (3.5,0.4) -- (3.5,0.6);
\draw[thick, xshift=0.5cm, yshift=-0.25cm] (2.9,0.99) -- (3.1,0.99);
\draw[thick, xshift=0.5cm, yshift=-0.25cm] (2.85,0.92) -- (3.15,0.92);
\draw[thick, xshift=0.5cm, yshift=-0.25cm] (2.8,0.85) -- (3.2,0.85);

\node at (2,2.5) {Alice};
\node at (4,2.5) {Bob};

\end{tikzpicture}
\end{center}
\caption{A bipartite localizable channel, decomposed as local unitaries acting on a shared ancilla.}
\label{f:localiz}
\end{figure}
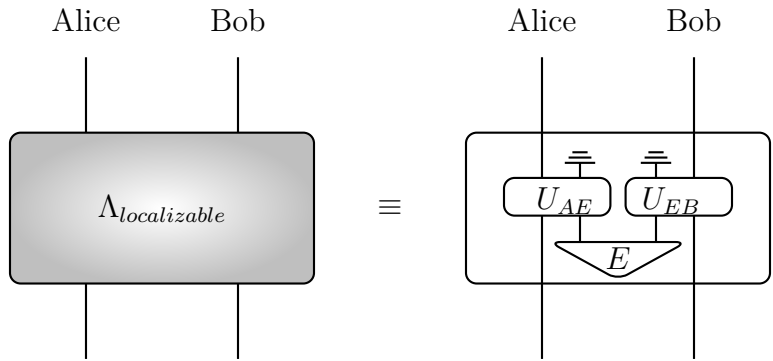

This result can be generalised to multipartite causal channels. {We now use notation where parties are labelled by numbers going from $1$ to $N$.} Given a multipartite causal map $\Lambda_{1 \ldots N}$, there exist unitary operators $\{U_{\pi}\}_{\pi}$ acting on local systems plus a global auxiliary system $E$ such that $\Lambda_{1 \ldots N} [\rho] = \Tr_E \left\{ U_{\pi} (\rho \otimes \ket{0} \bra{0}_E) U_{\pi}^\dagger \right\}$, where the unitary $U_{\pi}$ has the form of $U_{\pi}= \prod_{k=1}^N U_{{\pi}(k)E} $ and $\pi$ is a permutation of the parties $\{1,2,\ldots ,N\}$. The proof is presented in Appendix \ref{ap:multicausal}. 

A particular class of causal channels is the class of \textit{localizable channels} \cite{Beck}. These are channels implemented by local operations performed by each party on their input and a share of a quantum ancilla (see Fig.~\ref{f:localiz}). We formalise this definition below.

\begin{defn}\label{def:loc} \textbf{Localizable channels} \\
A causal channel $\Lambda_{1 \ldots N}$ is \textbf{localizable} if there exists an $N$-partite ancilla system $R$, with Hilbert spaces $\mathcal{H}_{R}:=\cH_{R_1}\otimes\cH_{R_2}\otimes...\otimes\cH_{R_N}$ with $R_{j}$  labelling the $j$th sub-system of $R$, and state $\sigma_{R}\in\mathcal{D}(\mathcal{H}_{R})$ such that, for all states $\rho\in\mathcal{D}(\bigotimes_{j=1}^{N}\mathcal{H}_{in}^{j})$,
\begin{equation*}
\Lambda_{1 \ldots N} [\rho] = \otimes_{k=1}^N \Lambda_{R_k} \, [\rho \otimes \sigma_{R}],
\end{equation*}
where $\Lambda_{R_k}:\mathcal{L}(\mathcal{H}_{in}^{k}\otimes\mathcal{H}_{R_{k}})\rightarrow\mathcal{L}(\mathcal{H}_{out}^{k})$. 
\end{defn}

{Notice that in this definition for localizable channels, the ancilla $\sigma_{R}$ is the same for all inputs to the channel $\Lambda_{1 \ldots N}$.}

It is known that, already for bipartite systems, there exist channels that are causal but not localizable \cite{Beck}. Furthermore, there are examples that are not entanglement-breaking \cite{Pao}, unlike the example given in Ref. \cite{Beck}. 

Just as we considered the causal channels in terms of unitaries, we can consider localizable channels in terms of unitary operators. Since there are only local maps in the localizable channels, it is straightforward to dilate each of these maps if we increase the Hilbert space dimension of local systems $R_{j}$ in the ancilla. This gives the following equivalent definition of localizable channels. 

\begin{defn} \textbf{Unitary representation of localizable channels.} \\
A causal channel $\Lambda_{1 \ldots N}$ is localizable if and only if there exists an $N$-partite ancilla system $E$ with input and output Hilbert spaces $\mathcal{H}_{in}^{E}:=\cH_{in}^{E_1}\otimes\cH_{in}^{E_2}\otimes...\otimes\cH_{in}^{E_N}$ and $\mathcal{H}_{out}^{E}:=\cH_{out}^{E_1}\otimes\cH_{out}^{E_2}\otimes...\otimes\cH_{out}^{E_N}$,  with $j$  labelling the $j$th sub-system of $E$, and state $|\psi\rangle_{E}\in\mathcal{H}_{in}^{E}$ such that, for all states $\rho\in\mathcal{D}(\bigotimes_{j=1}^{N}\mathcal{H}_{in}^{j})$,
\begin{equation*}
\Lambda_{1 \ldots N} [\rho] = \Tr_{E}\left\{V \, (\rho \otimes |\psi\rangle\langle\psi|_{E}) V^{\dagger}\right\},
\end{equation*}
where $V=\otimes_{k=1}^N U_{kE_{k}}$ for unitary operators $U_{kE_{k}}:\mathcal{H}_{in}^{k}\otimes\mathcal{H}_{in}^{E_{k}}\rightarrow\mathcal{H}_{out}^{k}\otimes\mathcal{H}_{out}^{E_{k}}$.
\end{defn} 

In addition to the above unitary representation, there exists another equivalent representation. This representation does not make reference to a tensor product structure in the ancilla, instead the unitaries in a unitary representation of a causal channel are independent of each other, in a particular sense. {Now we have a global ancilla living in Hilbert space $\mathcal{H}_{E}$ and local ancillae $E_{k}$ for each $k$th party, with input and output Hilbert spaces $\cH_{in}^{E_k}$ and $\cH_{out}^{E_k}$ respectively. The local ancillae are introduced so that everything can remain unitary. Therefore, the total input and output Hilbert spaces of all the ancillae are $\mathcal{H}_{E}\otimes\cH_{in}^{E_1}\otimes\cH_{in}^{E_2}\otimes...\otimes\cH_{in}^{E_N}$ and $\mathcal{H}_{E}\otimes\cH_{out}^{E_1}\otimes\cH_{out}^{E_2}\otimes...\otimes\cH_{out}^{E_N}$ respectively. }

\begin{defn}\label{def:locuni}  \textbf{Commuting unitary representation of localizable channels.} \\
A causal channel $\Lambda_{1 \ldots N}$ is localizable if and only if there exists {a global ancilla system $E$ with Hilbert space $\mathcal{H}_{E}$, a local ancilla system $E_{k}$ for each $k$th party, with input and output Hilbert spaces $\cH_{in}^{E_k}$ and $\cH_{out}^{E_k}$ respectively, and state $|\psi\rangle_{E}\in\mathcal{H}_{E}\otimes\cH_{in}^{E_1}\otimes\cH_{in}^{E_2}\otimes...\otimes\cH_{in}^{E_N}$ such that, for all states $\rho\in\mathcal{D}(\bigotimes_{j=1}^{N}\mathcal{H}_{in}^{j})$,}
\begin{equation*}
\Lambda_{1 \ldots N} [\rho] = \Tr_{EE_1 \ldots E_N}\left\{\prod_{j=1}^{N}U_{j E} \, (\rho \otimes |\psi\rangle\langle\psi|_{E}) \prod_{k=0}^{N-1}U_{N-k E}^{\dagger}\right\},
\end{equation*}
where $U_{kE}:\mathcal{H}_{in}^{k}\otimes\mathcal{H}_{E}\otimes\cH_{in}^{E_k}\rightarrow\mathcal{H}_{out}^{k}\otimes\mathcal{H}_{E}\otimes\cH_{out}^{E_k}$ is a unitary operator for all $k$, such that, for any permutation $\pi$ on the set $\{1,2,...,N\}$, $\prod_{k=1}^{N}U_{k E}=\prod_{k=1}^{N}U_{\pi({k}) E}$.
\end{defn} 

Since all the Hilbert spaces in this work are taken to be finite dimensional, these two unitary representations of localizable channels are equivalent. This can be shown by a straightforward extension of Lemma 4.1 in Ref. \cite{doherty}. It should be remarked upon that if we were to allow for infinite dimensional Hilbert spaces, then these two definitions will not be equivalent, as pointed out by Cleve, Liu and Paulsen \cite{cleve}. In full generality, since the first unitary representation implies the commuting unitary representation, one could then take the commuting unitary representation to be the most general definition of localizable channels when allowing for infinite dimension Hilbert spaces. 

Finally, from the point-of-view of non-locality and steering, the set of \textit{local} channels is of interest. 

\begin{defn}\textbf{Local channel.}\\ A channel is {local} if it is localizable, but with the additional constraint that the ancilla state $\sigma_{R}$ is a separable state, i.e.  
\begin{equation*}
\sigma_{R}=\sum_{\lambda}p_{\lambda}|\phi_{\lambda}^{1}\rangle\langle\phi_{\lambda}^{1}|_{R_{1}}\otimes|\phi_{\lambda}^{2}\rangle\langle\phi_{\lambda}^{2}|_{R_{2}}\otimes ... \otimes |\phi_{\lambda}^{N}\rangle\langle\phi_{\lambda}^{N}|_{R_{N}},
\end{equation*}
for $|\phi_{\lambda}^{j}\rangle_{R_{j}}\in\mathcal{H}_{R_{j}}$.
\end{defn} 

It can be readily seen that localizable channels are more general than local channels. For example, the latter set of channels cannot be used to generate entanglement between two parties, but a localizable channel can. We can now summarise all the information about causal channels in the following theorem.

\begin{thm}\label{channelhierarchy} Let $\mathsf{C}$, $\tilde{\mathsf{Q}}$, $\mathsf{Q}$, and $\mathsf{L}$ be the set of causal, almost localizable, localizable, and local channels respectively, then we have that $\mathsf{C}\supsetneq\tilde{\mathsf{Q}}\supsetneq\mathsf{Q}\supsetneq\mathsf{L}$.
\end{thm}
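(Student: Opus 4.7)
The plan is to establish the three inclusions, and then separately exhibit a witnessing channel for each strict inequality. The inclusions mostly follow by inspection of the definitions, while the strict separations will each be certified by exhibiting correlations (via the channel-defined correspondences of Section \ref{se:chanbell}) that lie in one set of the non-locality hierarchy but not the next.

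First I would handle the inclusions. The inclusion $\mathsf{Q}\supseteq\mathsf{L}$ is immediate from the definitions: a local channel is the special case of a localizable channel in which the shared ancilla state $\sigma_R$ is separable. For $\tilde{\mathsf{Q}}\supseteq\mathsf{Q}$, I would appeal to the remark following Definition \ref{almostloc}: the commuting-unitary condition in Def.~\ref{def:locuni} requires $\prod_k U_{kE}=\prod_k U_{\pi(k)E}$ as operator identities, which is strictly stronger than the existence of some state $|\psi\rangle$ on which these products agree, so every localizable channel is almost localizable. The inclusion $\mathsf{C}\supseteq\tilde{\mathsf{Q}}$ is built into Def.~\ref{almostloc}, which demands causality as a prerequisite.

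For the strict separations, I would proceed upward through the hierarchy. For $\mathsf{Q}\supsetneq\mathsf{L}$, it suffices to produce a localizable channel whose channel-defined correlations are non-local (hence not classical), since by Propositions \ref{probQprop} and \ref{probLHSprop} classical correlations are exactly the channel-defined correlations of local channels. The localizable channel of Fig.~\ref{f:mapsinglet}, which reproduces Tsirelson's bound for CHSH, does the job. For $\tilde{\mathsf{Q}}\supsetneq\mathsf{Q}$, I would use the construction already sketched at the end of Section \ref{construct}: start with an almost quantum correlation that has no quantum realisation (known to exist from \cite{aqp}, e.g.\ correlations that violate Tsirelson-type bounds while still being almost quantum), realise it with commuting projectors via the NPA-style state and measurements \cite{NPApaper}, and apply Proposition \ref{thmalmost} to obtain an almost localizable channel that channel-defines this distribution; were this channel localizable, Proposition \ref{probQprop} would yield a quantum realisation of the correlations, a contradiction. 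Finally, for $\mathsf{C}\supsetneq\tilde{\mathsf{Q}}$, the PR-box channel of Fig.~\ref{f:PRmap} is causal and channel-defines PR correlations, which violate Tsirelson's bound; since almost quantum correlations respect Tsirelson's bound \cite{aqp}, Proposition \ref{thmalmost} forbids this channel from being almost localizable.

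The main obstacle is the middle strict separation $\tilde{\mathsf{Q}}\supsetneq\mathsf{Q}$, because it is not witnessed by an elementary example in the way the other two are; it rides on the non-trivial fact that the almost quantum set strictly contains the quantum set in Bell scenarios, together with the commuting-operator construction from the proof of Proposition \ref{thmalmost}. Everything else in the argument is essentially bookkeeping on the definitions and on the channel-to-correlation correspondences already established.
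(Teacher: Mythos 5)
Your proposal is correct and follows essentially the same route as the paper: the inclusions are read off the definitions, the separation $\tilde{\mathsf{Q}}\supsetneq\mathsf{Q}$ is obtained exactly as in Section \ref{construct} (a non-quantum almost quantum correlation realised via the construction in Proposition \ref{thmalmost}), and $\mathsf{C}\supsetneq\tilde{\mathsf{Q}}$ is witnessed by the PR-box channel violating Tsirelson's bound. The only cosmetic difference is at the bottom of the hierarchy, where the paper notes that local channels cannot generate entanglement while localizable ones can, whereas you certify $\mathsf{Q}\supsetneq\mathsf{L}$ via the Tsirelson-saturating channel of Fig.~\ref{f:mapsinglet} and Proposition \ref{probLHSprop}; both arguments are valid.
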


A natural question is given a channel $\Lambda$, can we decide if it belongs to $\mathsf{C}$, $\tilde{\mathsf{Q}}$, $\mathsf{Q}$, or $\mathsf{L}$. We first restrict to the bipartite setting, and we suppose that $\Lambda$ is given a convenient representation, such as the Choi-Jamio{\l}kowski representation \cite{choi}. That is, for a bipartite setting channel $\Lambda$, the \textit{Choi state} $\Omega\in\mathcal{D}(\mathcal{H}^{1}_{in}\otimes\mathcal{H}^{1}_{out}\otimes\mathcal{H}^{2}_{in}\otimes\mathcal{H}^{2}_{out})$, such that $\Omega:=(\Lambda\otimes\id_{\mathcal{H}^{1}_{in}}\otimes\id_{\mathcal{H}^{2}_{in}})\left(|\Phi^{+}\rangle\langle\Phi^{+}|\right)$, for 
\begin{equation*}
|\Phi^{+}\rangle=\frac{1}{\sqrt{ d_{\mathcal{H}^{1}_{in}}d_{\mathcal{H}^{2}_{in}}}}\sum_{j}|j\rangle_{\mathcal{H}^{1}_{in}}|j\rangle_{\mathcal{H}^{1}_{in}}\sum_{k}|k\rangle_{\mathcal{H}^{2}_{in}}|k\rangle_{\mathcal{H}^{2}_{in}},
\end{equation*}
with $d_{\mathcal{H}}$ being the dimension of the Hilbert space $\mathcal{H}$. The Choi state $\Omega$ is positive semi-definite if only if $\Lambda$ is a channel. In order to decide if the channel $\Lambda$ is causal, we can use the definition of causal channels along with this property of Choi states to state the following result.

\begin{prop} A channel $\Lambda$ is causal if and only if its Choi state $\Omega$ satisfies the following:
\begin{enumerate}
\item $\exists$ a density matrix $\Sigma_{1}\in\mathcal{D}(\mathcal{H}^{1}_{in}\otimes\mathcal{H}^{1}_{out})$ such that $\textrm{tr}_{\mathcal{H}^{2}_{out}}\Omega=\frac{1}{d_{\mathcal{H}^{2}_{in}}}\Sigma_{1}\otimes\id_{\mathcal{H}^{2}_{in}}$, and
\item $\exists$ a density matrix $\Sigma_{2}\in\mathcal{D}(\mathcal{H}^{2}_{in}\otimes\mathcal{H}^{2}_{out})$ such that $\textrm{tr}_{\mathcal{H}^{1}_{out}}\Omega=\frac{1}{d_{\mathcal{H}^{1}_{in}}}\id_{\mathcal{H}^{1}_{in}}\otimes\Sigma_{2}$.
\end{enumerate}
\end{prop}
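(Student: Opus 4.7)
My plan is to prove this by directly unpacking the definitions of semicausality in each direction and translating them into conditions on the Choi state via the Choi--Jamio{\l}kowski isomorphism. Since causality of a bipartite channel is equivalent to simultaneous semicausality in both directions $2 \not\rightarrow 1$ and $1 \not\rightarrow 2$, it is natural to prove each of conditions (1) and (2) independently, each corresponding to one direction of semicausality. The crucial structural fact I will exploit is the elementary observation that a bipartite map $\mathcal{E}:\mathcal{L}(\mathcal{H}_A \otimes \mathcal{H}_B)\to\mathcal{L}(\mathcal{H}_C)$ factors as $\mathcal{E}(\rho)=\mathcal{F}(\textrm{tr}_B\rho)$ for some channel $\mathcal{F}$ if and only if its Choi state has the product form $\tau \otimes \tfrac{1}{d_B}\id_B$ for some $\tau$ that is the Choi state of $\mathcal{F}$.

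For the forward direction ($\Rightarrow$), I will assume $\Lambda$ is causal and focus on $2 \not\rightarrow 1$; this gives a channel $\Gamma_{1}:\mathcal{L}(\mathcal{H}^{1}_{in})\to\mathcal{L}(\mathcal{H}^{1}_{out})$ with $\textrm{tr}_{\mathcal{H}^{2}_{out}}\circ\Lambda = \Gamma_{1}\circ\textrm{tr}_{\mathcal{H}^{2}_{in}}$. I will then compute $\textrm{tr}_{\mathcal{H}^{2}_{out}}\Omega$ by applying both sides of this identity to the maximally entangled state $|\Phi^{+}\rangle\langle\Phi^{+}|$ (extended by identity on the ancilla copies), using the crucial factorisation $|\Phi^{+}\rangle = |\Phi^{+}\rangle_{1}\otimes|\Phi^{+}\rangle_{2}$. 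Tracing out party 2's input copy yields $\tfrac{1}{d_{\mathcal{H}^{2}_{in}}}\id_{\mathcal{H}^{2}_{in}}$ on the corresponding ancilla factor, while the party-1 factor becomes exactly the Choi state $\Sigma_{1}$ of $\Gamma_{1}$. This gives condition (1), and the symmetric argument using $1 \not\rightarrow 2$ yields condition (2).

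For the reverse direction ($\Leftarrow$), I will define $\Gamma_{1}$ as the unique linear map whose Choi state is $\Sigma_{1}$, and analogously $\Gamma_{2}$ from $\Sigma_{2}$. I must first verify that $\Gamma_{1}$ is actually CPTP: positivity of $\Sigma_{1}$ is immediate, while trace preservation of $\Gamma_{1}$ follows by taking a further partial trace over $\mathcal{H}^{1}_{out}$ in condition (1) and comparing with the trace-preservation relation $\textrm{tr}_{\mathcal{H}^{1}_{out}\otimes\mathcal{H}^{2}_{out}}\Omega = \tfrac{1}{d_{\mathcal{H}^{1}_{in}}d_{\mathcal{H}^{2}_{in}}}\id\otimes\id$ implied by $\Lambda$ being CPTP. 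Then I will invoke the inverse Choi--Jamio{\l}kowski formula reconstructing the action of a channel from its Choi state, apply $\textrm{tr}_{\mathcal{H}^{2}_{out}}$ to both sides, substitute the product form from condition (1), and verify the resulting expression equals $\Gamma_{1}(\textrm{tr}_{\mathcal{H}^{2}_{in}}\rho)$; this establishes $2 \not\rightarrow 1$ semicausality, and condition (2) similarly yields $1 \not\rightarrow 2$.

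The proof is really an exercise in unpacking definitions, so I expect no deep obstacle. The main source of friction will be bookkeeping: keeping track of the four Hilbert space factors in the order $\mathcal{H}^{1}_{in}\otimes\mathcal{H}^{1}_{out}\otimes\mathcal{H}^{2}_{in}\otimes\mathcal{H}^{2}_{out}$ dictated by the statement (as opposed to the more standard ``all inputs then all outputs'' convention), managing transpose conventions in the inverse Choi formula, and tracking the normalisation factors $1/d_{\mathcal{H}^{j}_{in}}$ that arise from partial traces of maximally entangled states. A useful observation that streamlines the reverse direction is that $\Sigma_{1}$ is forced to be a legitimate Choi state by the trace-preservation structure of $\Omega$ itself, removing any separate consistency check.
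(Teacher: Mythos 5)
Your proposal is correct and matches the paper's intent: the paper states this proposition without an explicit proof, presenting it as an immediate consequence of the definition of (semi)causality together with standard properties of the Choi--Jamio{\l}kowski isomorphism, which is exactly what your argument unpacks. In particular, your key structural lemma (that $\mathcal{E}(\rho)=\mathcal{F}(\Tr_B\rho)$ iff the Choi state factors as $\tau_{\mathcal{F}}\otimes\tfrac{1}{d_B}\id_B$) and your derivation of trace preservation of $\Gamma_1$ from the marginal condition $\Tr_{\mathcal{H}^1_{out}\otimes\mathcal{H}^2_{out}}\Omega=\tfrac{1}{d_{\mathcal{H}^1_{in}}d_{\mathcal{H}^2_{in}}}\id\otimes\id$ are both sound and supply precisely the details the paper leaves implicit.
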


A consequence of this result is that there exist positive semi-definite matrices $\Sigma_{1}$ and $\Sigma_{2}$ such that the conditions of the proposition are satisfied. In other words, one can decide whether a channel is in the set $\mathsf{C}$ using a semi-definite program (cf. Ref. \cite{gutoski}), and so one can efficiently decide this problem.

Deciding whether a channel belongs in the sets of $\mathsf{Q}$ and $\mathsf{L}$ is not as easily resolved as the case for $\mathsf{C}$. For example, while channels in $\mathsf{L}$ will have a Choi state $\Omega$ that is not entangled across the partition of party $1$ and party $2$'s respective Hilbert spaces, the converse is not true. That is, there are channels in $\mathsf{C}$ (but not in $\mathsf{Q}$) whose Choi state is also not entangled across this partition \cite{Beck}. Furthermore, deciding membership in $\mathsf{L}$ (even up to some error) is \textsc{NP}-hard, although it is possible to find conditions to test whether a channel is in the set $\mathsf{L}$ \cite{gutoski}.

For case of deciding membership in $\mathsf{Q}$ (even up to some error), this is problem is also \textsc{NP}-hard \cite{gutoski}. In addition to this, the set of localizable channels is not closed \cite{leung}. Gutoski has given a criterion for deciding if a channel is in $\mathsf{Q}$, which is somewhat analogous to the condition of complete positivity for channels. However, in general, there is no known way of deciding in finite time whether this criterion is satisfied. Indeed, as we will point out, this problem is deeply related to the problem of deciding whether certain correlations in a Bell test can be realised by local measurements on a quantum state; a problem with deep connections to open problems in mathematics \cite{tsirelsonproblem}. For the case of deciding if a channel belongs to the set $\tilde{\mathsf{Q}}$, we leave this to future work.

\section{Bell non-locality}\label{ap:nonloc}

A traditional Bell experiment (sometimes called a `Bell scenario') consists of $N$ distant parties, each of them having access to a share of a physical system. These  parties input (in a space-like separated manner) classical data into their device (labelled as $x_i\in\{1,...,m\}$ for party $i$), and obtain outputs (labelled as $a_i\in\{1,...,d\}$ for party $i$) from the device.  {For simplicity, in a bipartite setting (i.e. for $N=2$), we will use the notation of inputs being $x$ and $y$ instead of $x_{1}$ and $x_{2}$, with outputs being $a$ and $b$ instead of $a_{1}$ and $a_{2}$.}

The objects of interest in these Bell experiments are the correlations observed in the generated classical data, i.e.~ the conditional probability distribution $p(a_1, \ldots, a_N | x_1, \ldots, x_N)$. Depending on the type of device that the parties use (i.e. classical, quantum, possibly post-quantum), different correlations may be feasible in the experiment. The sets of correlations that have been of main interest in the literature are the following. 

\begin{defn} \textbf{Classical correlations}, also referred to as `locally causal' \cite{bell}, are those allowing for shared random variables $\lambda\in\Lambda$, and take the form
\begin{align}
p(a_1, \ldots, a_N | x_1, \ldots, x_N) = \sum_{\lambda\in\Lambda} D^{1}_\lambda(a_1 | x_1) \ldots D^{N}_\lambda(a_N | x_N) \, p(\lambda)\,,
\end{align}
where $D^{j}_\lambda(\cdot)\in\{0,1\}$ is a deterministic response function given $\lambda$ for the $j$th party, and $p(\lambda)$ is the distribution over the variables $\lambda$ such that $\sum_{\lambda}p(\lambda)=1$. 
\end{defn}

\begin{defn} \textbf{Quantum correlations} arise if there exists a Hilbert space $\mathcal{H}$, a state $|\psi\rangle\in\mathcal{H}$, and (complete) projective measurements $\{\Pi^{(i)}_{a_i|x_i}\}_{a_i,x_i}$ for each party  $i$, such that the conditional probability distribution is given by the Born rule:
\begin{align}
p(a_1, \ldots, a_N | x_1, \ldots, x_N) = \langle\psi|\Pi^{(1)}_{a_1|x_1} \ldots \Pi^{(N)}_{a_N|x_N} \, |\psi\rangle\,,
\end{align}
and such that $\prod_{i=1}^N \Pi^{(i)}_{a_i|x_i} = \prod_{j=1}^N \Pi^{(\pi(j))}_{a_{\pi(j)}|x_{\pi(j)}}$, for any permutation $\pi$ of the parties $\{1,2,...,N\}$. 
\end{defn}

\begin{defn}
Given a bipartition $S_1 \cup S_2 = \{1, ..., N\}$ of $N$ parties $S_{1}=\{i_1,... ,i_s\}$ and $S_{2}=\{i_{s+1},..., i_N\}$, a conditional probability distribution $p(a_1, ..., a_N | x_1, ..., x_N)$ does not permit signalling across this bipartition if

\begin{eqnarray}
p(a_{i_1}, ..., a_{i_s}|x_{i_1}, ..., x_{i_s})&=&\sum_{a_{i_{s+1}},... ,a_{i_N}}p(a_1, ..., a_N | x_1, ..., x_N)\\
p(a_{i_{s+1}}, ..., a_{i_N}|x_{i_{s+1}}, ..., x_{i_N})&=&\sum_{a_{i_1},... ,a_{i_s}}p(a_1, ..., a_N | x_1, ..., x_N),
\end{eqnarray}
for all inputs $(x_{1}, ..., x_{N})$. A distribution $p(a_1, ..., a_N | x_1, ..., x_N)$ belongs to the set of \textbf{non-signalling correlations} if and only if it does not permit signalling across all bipartitions of the $N$ parties.
\end{defn}

There exist non-signalling correlations that do not have a quantum realisation \cite{PR}. A relevant set of post-quantum yet non-signalling correlations is that of the \textit{almost-quantum correlations} \cite{aqp}. It is notable that, as mentioned, the almost quantum correlations happen to comply with the physical information theoretic principles that have been proposed so far to characterise the quantum set \cite{pples}. We now present the definition of the set of almost quantum correlations.

\begin{defn} \textbf{Almost quantum correlations} arise if there exists a Hilbert space $\mathcal{H}$, a state $|\psi\rangle\in\mathcal{H}$, and (complete) projective measurements $\{\Pi^{(i)}_{a_i|x_i}\}_{a_i,x_i}$ for each party  $i$, such that the conditional probability distribution is given by:
\begin{align}\label{eq:aqcomm}
p(a_1, \ldots, a_N | x_1, \ldots, x_N) = \langle\psi|\Pi^{(1)}_{a_1|x_1} \ldots \Pi^{(N)}_{a_N|x_N} \, |\psi\rangle\,,
\end{align}
such that $\prod_{i=1}^N \Pi^{(i)}_{a_i|x_i}|\psi\rangle = \prod_{j=1}^N \Pi^{(\pi(j))}_{a_{\pi(j)}|x_{\pi(j)}}|\psi\rangle$, for any permutation $\pi$ of the $N$ parties. 
\end{defn}

\section{Einstein-Podolsky-Rosen Steering}\label{ap:eprstee}

{In analogy with the study of non-locality, in steering scenarios there are four sets of assemblages of particular interest \cite{pqsp}. }

\begin{defn}
\textbf{Local hidden state} (LHS) assemblages  (a.k.a. {\it unsteerable} assemblages) are those that take the form:
\begin{equation*}
\sigma_{a_1...a_{N}|x_1...x_{N}} = \sum_\lambda \prod_{j=1}^{N}D_\lambda(a_j|x_j) \sigma_\lambda
\end{equation*}
where $\lambda$ is a shared random variable, $D_\lambda(a_j|x_j)\in\{0,1\}$ a deterministic response function for the $j$th party, and $\sigma_\lambda\in\mathcal{L}(\mathcal{H}_{B})$ a subnormalised quantum state prepared by Bob as a function of $\lambda$, such that~ $\sum_\lambda \tr{\sigma_\lambda}=1$.
\end{defn}

\begin{defn} \textbf{Quantum assemblages} arise when all $N$ untrusted parties perform local measurements on a shared (possibly entangled) quantum state that is also shared with Bob. That is, the elements of the assemblage are 
\begin{align}
\sigma_{a_{1}...a_{N}|x_{1}...x_{N}}   =\Tr_{\rmA_{1}...\rmA_{N}} [(\Pi_{a_{1}|x_{1}}\otimes\Pi_{a_{2}|x_{2}}\otimes...\otimes\Pi_{a_{N}|x_{N}}\otimes \id_B)\rho]
\end{align}
where for each $j$th party $\sum_{a_{j}} \Pi_{a_{j}|x_{j}} = \id$ forms a complete projective measurement for each $x_{j}$, and $\rho\in\mathcal{D}(\mathcal{H}_{1}\otimes...\otimes\mathcal{H}_{N}\otimes\mathcal{H}_{B})$ is the state of the shared system between $N$ parties (the $j$th party having Hilbert space $\mathcal{H}_{j}$, for each $j \leq N$) and Bob (with Hilbert space $\mathcal{H}_{B}$).
\end{defn}

\begin{defn}
Given a bipartition $S_1 \cup S_2 = \{1, ..., N\}$ of the $N$ untrusted parties where $S_{1}=\{i_1,... ,i_s\}$ and $S_{2}=\{i_{s+1},..., i_N\}$, an assemblage $\{\sigma_{a_{1}...a_{N}|x_{1}...x_{N}}\}$ does not permit signalling across this bipartition if its elements satisfy

\begin{eqnarray}
\sigma_{a_{i_1}, ..., a_{i_s}|x_{i_1}, ..., x_{i_s}}&=&\sum_{a_{i_{s+1}},... ,a_{i_N}}\sigma_{a_1, ..., a_N | x_1, ..., x_N}\\
\sigma_{a_{i_{s+1}}, ..., a_{i_N}|x_{i_{s+1}}, ..., x_{i_N}}&=&\sum_{a_{i_1},... ,a_{i_s}}\sigma_{a_1, ..., a_N | x_1, ..., x_N},
\end{eqnarray}
for all inputs $(x_{1}, ..., x_{N})$. An assemblage $\{\sigma_{a_1, ..., a_N | x_1, ..., x_N}\}$ belongs to the set of \textbf{non-signalling assemblages} if and only if it does not permit signalling across any bipartition of the $N$ untrusted parties, and 
\begin{equation*}
\rho_{B}=\sum_{a_{1},... ,a_{N}}\sigma_{a_1, ..., a_N | x_1, ..., x_N}
\end{equation*}
for all inputs $(x_{1}, ..., x_{N})$, where $\rho_{B}\in\mathcal{D}(\mathcal{H}_{B})$ is Bob's reduced quantum state.
\end{defn}

In complete analogy with the study of non-locality, we call \textit{post-quantum assemblages} those assemblages that are non-signalling yet are not quantum, and \textit{post-quantum steering} is the demonstration that an assemblage is post-quantum. Furthermore, one can now study more specific relaxations of the set of quantum assemblages; this not only allows us to generate post-quantum assemblages, but if an assemblage does not belong to a set that is a relaxation of the quantum set, it is definitely not quantum. A relevant set is that of {\it almost quantum assemblages} \cite{pqsp}, inspired by almost quantum correlations, and defined in Ref. \cite{pqsp} in terms of a semidefinite program (cf. \cite{NPApaper}). 

Before presenting the definition, we give a bit of simplifying notation. Given a subset $S\subseteq\{1...N\}$, we define the strings $(\vec{a}_{S}|\vec{x}_{S}):=(a_{j}...a_{k}...a_{l}|x_{j}...x_{k}...x_{l})$ such that $j$, $k$, $l\in S$, i.e. strings of outputs given inputs for parties in the subset $S$. The string of all $N$ parties is denoted  $(\vec{a}|\vec{x})$. To refer to particular inputs and outputs in the string $(\vec{a}_{S}|\vec{x}_{S})$, if $j\in S$, then $[(a_{S}|x_{S})]_j:=(a_{j}|x_{j})$, $[a_{S}]_{j}:=a_{j}$, and $[x_{S}]_{j}:=x_{j}$. We then take the set of such strings, or \textit{words}, to be $W:=\{(\vec{a}_{S}|\vec{x}_{S})\}_{S}$ (which also includes the empty string $\emptyset$ for when $S$ is the empty set). In addition to this notation we also define two words $(\vec{a}_{S}|\vec{x}_{S})$ and $(\vec{a}^\prime_{S'}|\vec{x}^\prime_{S'})$,  to be \textit{orthogonal}, denoted as $(\vec{a}_{S}|\vec{x}_{S}) \perp (\vec{a}^\prime_{S'}|\vec{x}^\prime_{S'})$ if there is a $j\in S\cap S'$ such that $[x_{S}]_{j}=[x'_{S'}]_{j}$ but $[a_{S}]_{j}\neq[a'_{S'}]_{j}$. We can now present the definition of almost quantum assemblages \cite{pqsp}.

\begin{defn}\label{def:sdpaq}
An assemblage $\{\sigma_{a_{1}...a_{N}|x_{1}...x_{N}}\}$ is an \textbf{almost quantum assemblage} if for the set of words $W:=\{(\vec{a}_{S}|\vec{x}_{S}) \,:\, S\subseteq\{1...N\}\}$ with $|W|$ elements, there exists a matrix $\Gamma$ of dimension $|W| \times |W|$, whose elements are $d_{B} \times d_{B}$ matrices $\Gamma_{(\vec{a}_S|\vec{x}_S),(\vec{a}^\prime_{S'}|\vec{x}^\prime_{S'})}$ indexed by words $(\vec{a}_S|\vec{x}_S)$ and $(\vec{a}^\prime_{S'}|\vec{x}^\prime_{S'})$, such that
\begin{enumerate}
\item[(i)]\label{i} $\Gamma \geq 0$,
\item[(ii)]\label{ii} $\Gamma_{(\vec{a}_S|\vec{x}_S),(\vec{a}^\prime_{S'}|\vec{x}^\prime_{S'})} = 0\quad \text{if} \quad(\vec{a}_{S}|\vec{x}_{S}) \perp (\vec{a}^\prime_{S'}|\vec{x}^\prime_{S'})$,
\item[(iii)]\label{iii} $\Gamma_{\emptyset,\emptyset} = \sigma_R$, 
\item[(iv)]\label{iv} $\Gamma_{\emptyset,(\vec{a}_{S}|\vec{x}_{S})} = \sigma_{(\vec{a}_{S}|\vec{x}_{S})}\,, \quad \forall \quad S$,
\item[(v)]\label{v}$\Gamma_{(\vec{a}_{T}\vec{a}^\prime_{S}|\vec{x}_{T}\vec{x}^\prime_{S}), (\vec{a}_{T}\vec{a}''_{S'}|\vec{x}_{T}\vec{x}''_{S'})} = \Gamma_{(\vec{a}^\prime_{S}|\vec{x}^\prime_{S}), (\vec{a}_{T}\vec{a}''_{S'}|\vec{x}_{T}\vec{x}''_{S'})} = \Gamma_{(\vec{a}_{T}\vec{a}^\prime_{S}|\vec{x}_{T}\vec{x}^\prime_{S}), (\vec{a}''_{S'}|\vec{x}''_{S'})}\,$ for all subsets $S$, $S'$, and $T$ such that $T\cap S$ and $T\cap S'$ are both the empty set. 
\end{enumerate}
\end{defn}

While this last definition is somewhat technical, Section \ref{se:substee} presents an equivalent physical definition of the almost quantum assemblages.

\section{Proofs of the statements in Section \ref{se:channels} {and Appendix \ref{ap:channels}}}

In this section we provide the proofs of the theorems and propositions of Section \ref{se:channels} {and Appendix \ref{ap:channels}}. 

\subsection{Unitary representation of multipartite causal maps} \label{ap:multicausal}

In this section we characterise multipartite causal channels (see fig.~\ref{f:multicausalmap}). Consider an $N$-partite map that acts on $\cH_{in} = \otimes_{k=1}^N \, \cH^{k}_{in}$, and denote by $A_k$ the $N$ input systems (a.k.a. parties). The main theorem that we will prove is the following.
\begin{thm}\label{theoremunirep} Unitary representation for multipartite semicausal maps.\\
Let $\Lambda_{1... N}$ be a CP map, with $A_N \not\rightarrow ... \not\rightarrow A_2 \not\rightarrow A_1$. Then, there exist unitary operators 
$U_{A_kE}\,:\, \cH_{in}^{k}\otimes\cH_{int_{k}}^{E} \rightarrow \cH_{out}^{k}\otimes\cH_{int_{k+1}}^{E}$ for $2\leq k\leq N-1$, $U_{A_1E}\,:\, \cH_{in}^{1}\otimes\cH_{in}^{E} \rightarrow \cH_{out}^{1}\otimes\cH_{int_{1}}^{E}$, and $U_{A_NE}\,:\, \cH_{in}^{N}\otimes\cH_{int_{N}}^{E} \rightarrow \cH_{out}^{N}\otimes\cH_{out}^{E}$, with $\mathcal{H}_{int_{k}}^{E}$ being the $k$th Hilbert space of the system $E$ between unitaries $U_{A_{(k-1)}E}$ and $U_{A_kE}$
acting on the system plus an ancilla $E$ such that
\begin{equation*}
\Lambda_{1... N} [\rho] = \Tr_E \left\{ U (\rho \otimes \ket{0} \bra{0}_E) U^\dagger \right\}\,,
\end{equation*}
where $U = U_{A_NE} ... U_{A_1E}$. 
\end{thm}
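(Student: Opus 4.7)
The plan is to proceed by induction on $N$. The base case $N=2$ is precisely Theorem \ref{lem:bip} as cited from \cite{SW,Pao,Piani}. For the inductive step, assume the statement is established for $N-1$ parties, and reduce the $N$-party claim to it by peeling off $A_1$ using the bipartite theorem on the bipartition $\{A_1\}\cup\{A_2,\ldots,A_N\}$.

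First, I would verify that the chain hypothesis $A_N \not\rightarrow \cdots \not\rightarrow A_1$ implies the bipartite semicausality $\{A_2,\ldots,A_N\} \not\rightarrow \{A_1\}$: by the chain, the reduced state on $A_1$ depends only on $A_1$'s input, which is exactly the required condition. Applying Theorem \ref{lem:bip} to this bipartition yields an ancilla with input state $\ket{0}_E \in \cH_{in}^E$, a unitary $U_{A_1 E}\colon \cH_{in}^1 \otimes \cH_{in}^E \to \cH_{out}^1 \otimes \cH_{int_1}^E$, and a unitary $V\colon(\otimes_{k=2}^N \cH_{in}^k)\otimes \cH_{int_1}^E \to (\otimes_{k=2}^N \cH_{out}^k)\otimes \cH_{out}^E$ such that
\begin{equation*}
\Lambda_{1\ldots N}[\rho] = \Tr_E\!\left\{V\, U_{A_1 E}\,(\rho\otimes \ket{0}\bra{0}_E)\, U_{A_1 E}^\dagger\, V^\dagger\right\}.
\end{equation*}

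Next, I would interpret $V$ as the Stinespring dilation of an $(N-1)$-partite channel $\Lambda'$ on $\{A_2,\ldots,A_N\}$, whose ancilla is prepared by running $U_{A_1 E}$ on a fixed reference state $\ket{\phi}_{A_1}\otimes\ket{0}_E$ and tracing out $A_1^{\mathrm{out}}$. The semicausality $\{A_2,\ldots,A_N\}\not\rightarrow A_1$ guarantees that this definition does not depend on the choice of $\ket{\phi}_{A_1}$, so $\Lambda'$ is unambiguous. Moreover, for each $2\leq j<k\leq N$, the chain property $A_k\not\rightarrow A_j$ for $\Lambda_{1\ldots N}$ transfers directly to $\Lambda'$, because the marginal on $A_j$ under $\Lambda'$ is obtained from the marginal on $\{A_1,\ldots,A_j\}$ under the original channel by tracing. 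Hence $\Lambda'$ satisfies the inductive hypothesis for $N-1$ parties and admits a decomposition $V = U_{A_N E}\, U_{A_{N-1}E}\cdots U_{A_2 E}$ of the stated form. Composing this with $U_{A_1 E}$ yields $U = U_{A_N E}\cdots U_{A_1 E}$.

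The main obstacle will be the gluing step: the ancilla Hilbert space produced by the inductive dilation of $V$ must be identified with the intermediate space $\cH_{int_1}^E$ already fixed by the outer bipartite dilation, so that the two sequences of unitaries act on a common ancilla. I would handle this by the standard Stinespring uniqueness argument—any two minimal dilations of the same channel are related by an isometry on the ancilla—which can be absorbed into the inductive unitaries without altering their local structure. A secondary technical point is that the ancilla Hilbert spaces $\cH_{int_k}^E$ are allowed to differ from step to step, which is harmless since one may always enlarge them by adjoining inert factors, and this is exactly what the theorem's flexible statement permits.
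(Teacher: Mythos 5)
Your overall strategy---peel off one party with the bipartite theorem and recurse---matches the paper's in outline, but the way you recurse has a genuine gap. The paper does \emph{not} re-dilate at each stage: after a single application of Theorem \ref{lem:bip} it decomposes the remaining inner \emph{unitary} directly and recursively via Lemma \ref{lem:uniuni}, the statement that a unitary tripartite map with $C\not\rightarrow A$ factors exactly as $U_{BC}U_{AB}$ on the same spaces, with no new ancilla and no partial trace. Your induction instead passes back down to the channel level and invokes the inductive hypothesis on an $(N-1)$-party channel $\Lambda'$, and this is where it breaks.

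First, $\Lambda'$ is not well defined. The hypothesis is a one-way chain $A_N\not\rightarrow\cdots\not\rightarrow A_1$; it gives you $\{A_2,\ldots,A_N\}\not\rightarrow A_1$ but says nothing about $A_1\not\rightarrow\{A_2,\ldots,A_N\}$. Independence of your construction from the reference state $\ket{\phi}_{A_1}$ would require precisely that missing forward condition --- the semicausality you invoke controls $A_1$'s \emph{output}, not the effective dynamics seen by the other parties. Since the map is only semicausal, the channel induced on $A_2,\ldots,A_N$ may genuinely depend on $A_1$'s input, so there is no single $\Lambda'$ to which the inductive hypothesis applies. Second, even granting a well-defined $\Lambda'$, the ancilla state entering $V$ is not a fixed pure state: it depends on $\rho$ and is correlated with $A_1^{\mathrm{out}}$, which is not traced out in the final expression for $\Lambda_{1\ldots N}$. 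So $V$ is not a Stinespring dilation of $\Lambda'$ in the sense needed, and decomposing \emph{some} dilation of $\Lambda'$ (which is all the inductive hypothesis provides) does not decompose $V$ itself. Stinespring uniqueness relates minimal dilations of a fixed channel with a fixed ancilla input by an isometry on the environment; it does not transport a product structure from a fresh dilation onto the particular unitary $V$ acting on the particular intermediate ancilla, uniformly over the mixed, input-dependent, correlated ancilla states that actually occur. What is needed at this point is a structural statement about the unitary $V$ valid for arbitrary inputs on the intermediate ancilla --- which is exactly what Lemma \ref{lem:uniuni} supplies and what the paper's proof uses in place of your gluing step.
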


For bipartite maps this reduces to the result by \cite{SW}. To prove the multipartite statement, we need the following lemma: 
\begin{lem}\label{lem:uniuni} Unitary representation for unitary semicausal maps. \cite{SW} \\
Let $\Lambda_{ABC}$ be a unitary tripartite CPTP map. If $C \not\rightarrow A$ the map can be decomposed as 
$$
\Lambda_{ABC} [\rho_{ABC}] = U \rho_{ABC} U^\dagger\,, 
$$
with $U=U_{BC}\,U_{AB}$, where $U_{BC}$ and $U_{AB}$ are unitaries.
\end{lem}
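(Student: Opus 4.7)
The plan is to prove this lemma via the uniqueness (up to isometry on the environment) of Stinespring dilations, leveraging the fact that $U$ is already a unitary on $\mathcal{H}_A \otimes \mathcal{H}_B \otimes \mathcal{H}_C$. First, I would interpret the semicausality condition $C \not\rightarrow A$ concretely as asserting that $\Gamma(\rho_{AB}) := \Tr_{BC}[U(\rho_{AB} \otimes \rho_C)U^\dagger]$ is independent of $\rho_C$, and hence defines a well-defined channel $\mathcal{L}(\mathcal{H}_{AB}) \to \mathcal{L}(\mathcal{H}_A)$. Fixing any unit vector $|c_0\rangle \in \mathcal{H}_C$, the map $V_0: |\psi\rangle_{AB} \mapsto U|\psi\rangle_{AB}|c_0\rangle_C$ is an isometry that provides a Stinespring dilation of $\Gamma$ with environment $\mathcal{H}_{BC}$.

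Next I would construct a second Stinespring dilation of $\Gamma$ that uses $\mathcal{H}_B$ alone as the environment: a unitary $W_{AB}$ on $\mathcal{H}_A \otimes \mathcal{H}_B$ such that $\Gamma(\rho_{AB}) = \Tr_B[W_{AB}\rho_{AB}W_{AB}^\dagger]$. The existence of such a small dilation reduces to showing that the minimal Stinespring environment of $\Gamma$ has dimension at most $\dim \mathcal{H}_B$; this bound is forced by the combination of $U$ being unitary on $\mathcal{H}_{ABC}$ and the semicausality constraint that the same $\Gamma$ arises for every choice of $|c_0\rangle$. By Stinespring uniqueness, the two dilations $V_0$ and $W_{AB}$ are then related by an isometry $J: \mathcal{H}_B \to \mathcal{H}_{BC}$ satisfying $U|\psi\rangle_{AB}|c_0\rangle_C = (I_A \otimes J)\, W_{AB}|\psi\rangle_{AB}$ for every $|\psi\rangle_{AB}$. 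Since $\dim \mathcal{H}_B \leq \dim \mathcal{H}_{BC}$, the isometry $J$ extends to a unitary $U_{BC}$ on $\mathcal{H}_B \otimes \mathcal{H}_C$ by setting $U_{BC}(|b\rangle_B \otimes |c_0\rangle_C) := J|b\rangle$ on the reference slice and extending arbitrarily (still unitarily) on the orthogonal complement of that slice.

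The final step is to promote the resulting identity $U(|\psi\rangle_{AB} \otimes |c_0\rangle_C) = U_{BC}\, (W_{AB} \otimes I_C) \, (|\psi\rangle_{AB} \otimes |c_0\rangle_C)$ from the single slice $\mathcal{H}_{AB} \otimes |c_0\rangle_C$ to the whole space $\mathcal{H}_{ABC}$. Since $W_{AB}$ does not act on $\mathcal{H}_C$, its global action is already pinned down; the identity then extends to arbitrary $|c\rangle_C \in \mathcal{H}_C$ by repeating the Stinespring-uniqueness argument with $|c_0\rangle_C$ replaced by $|c\rangle_C$, using the independence of $\Gamma$ on the reference vector together with the linearity and unitarity of $U$ to verify that the unitary $U_{BC}$ produced above is consistent for all such replacements.

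The main obstacle will be the Kraus-rank claim in the second paragraph, namely that the minimal Stinespring environment of $\Gamma$ has dimension at most $\dim \mathcal{H}_B$. This bound is strictly stronger than what holds for generic causal CPTP maps, where the Choi rank can exceed $\dim \mathcal{H}_B$, as is explicitly witnessed by the bipartite causal-but-non-localisable channels of \cite{Beck, Pao}. It is precisely here that the global unitarity of $U$ is used in an essential way, and making this step rigorous constitutes the technical crux of the proof.
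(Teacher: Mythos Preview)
The paper does not give its own proof of this lemma; it is quoted from \cite{SW} and used as a black box in the proof of Theorem~\ref{theoremunirep}. So there is no in-paper argument to compare against, and your proposal must stand on its own.

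Your outline has the right shape but contains two genuine gaps. The first is the one you flag yourself: that the Kraus rank of $\Gamma$ is at most $\dim\mathcal{H}_B$. You correctly note this is the crux, but you give no mechanism for proving it, and it is \emph{not} a generic fact about channels $\mathcal{L}(\mathcal{H}_{AB})\to\mathcal{L}(\mathcal{H}_A)$. The clean way to obtain it is to pass to the Heisenberg picture: for a unitary, $C\not\rightarrow A$ says exactly that $U^\dagger(X_A\otimes I_{BC})U\in\mathcal{L}(\mathcal{H}_{AB})\otimes I_C$ for every $X_A$, so $X_A\mapsto Y_{AB}$ (defined by $U^\dagger(X_A\otimes I_{BC})U=Y_{AB}\otimes I_C$) is a unital $*$-homomorphism $\mathcal{L}(\mathcal{H}_A)\hookrightarrow\mathcal{L}(\mathcal{H}_{AB})$. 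Every such representation is unitarily conjugate to $X_A\mapsto X_A\otimes I_B$, which hands you the unitary $W_{AB}$ directly. That representation-theoretic fact is the actual content behind your Kraus-rank claim; without it (or an equivalent), the argument is incomplete.

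The second gap is your extension from the slice $\mathcal{H}_{AB}\otimes|c_0\rangle$ to all of $\mathcal{H}_{ABC}$. Extending $J$ ``arbitrarily'' to a unitary $U_{BC}$ and then hoping to verify consistency slice by slice does not work as written: a different reference vector $|c\rangle$ produces, via Stinespring uniqueness, a different isometry $J_c$, and nothing forces your arbitrary extension of $J_{c_0}$ to agree with it. The right move, once $W_{AB}$ exists, is to set $\tilde U:=U(W_{AB}^\dagger\otimes I_C)$ and check, again in the Heisenberg picture, that $\tilde U^\dagger(X_A\otimes I_{BC})\tilde U=X_A\otimes I_{BC}$ for all $X_A$. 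This places $\tilde U$ in the commutant $I_A\otimes\mathcal{L}(\mathcal{H}_{BC})$, so $\tilde U=I_A\otimes U_{BC}$ globally, and $U=U_{BC}\,U_{AB}$ follows on all of $\mathcal{H}_{ABC}$ at once, with no slice-by-slice patching needed.
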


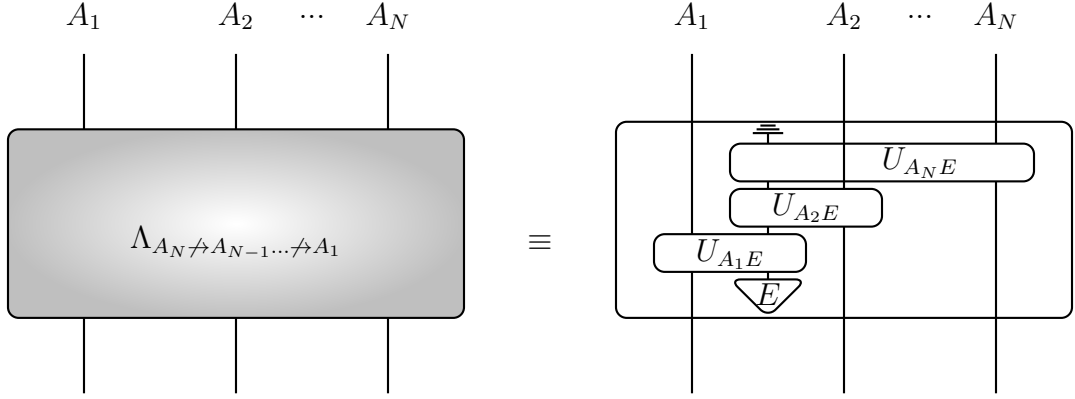
\begin{figure}
\begin{center}
\begin{tikzpicture}
\shade[draw, thick, ,rounded corners, inner color=white,outer color=gray!50!white] (-7,-1) rectangle (-1,1.5) ;
\node at (-4,0) {$\Lambda_{A_N \not\rightarrow A_{N-1} ... \not\rightarrow A_1}$};
\draw[thick] (-4,-2) -- (-4,-1);
\draw[thick] (-6,-2) -- (-6,-1);
\draw[thick] (-2,-2) -- (-2,-1);
\draw[thick] (-4,1.5) -- (-4,2.5);
\draw[thick] (-6,1.5) -- (-6,2.5);
\draw[thick] (-2,1.5) -- (-2,2.5);
\node at (-6,3) {$A_1$};
\node at (-4,3) {$A_2$};
\node at (-3,3) {$...$};
\node at (-2,3) {$A_N$};

\node at (0,0) {$\equiv$};
\draw[thick, rounded corners] (1,-1) rectangle (7,1.6);
\draw[thick, rounded corners] (3,-0.99) -- (3.5,-0.49) -- (2.5, -0.49) -- cycle;
\node at (3,-0.69) {$E$};
\draw[thick, rounded corners] (1.5,-0.39) rectangle (3.5,0.11);
\draw[thick, rounded corners] (2.5,0.21) rectangle (4.5,0.71);
\node at (2.5,-0.15) {$U_{A_1E}$};
\node at (3.5,0.46) {$U_{A_2E}$};

\draw[thick] (2,-2) -- (2,-0.39);
\draw[thick] (2,0.11) -- (2,2.5);
\draw[thick] (4,-2) -- (4,0.21);
\draw[thick] (4,0.71) -- (4,2.5);

\draw[thick] (3,-0.49) -- (3,-0.39);
\draw[thick] (3,0.11) -- (3,0.21);
\draw[thick] (3,0.71) -- (3,0.81);
\node at (5,1.06) {$U_{A_NE}$};
\draw[thick, rounded corners] (2.5,0.81) rectangle (6.5,1.31);

\draw[thick] (6,-2) -- (6,0.81);
\draw[thick] (6,1.31) -- (6,2.5);

\draw[thick] (3,1.31) -- (3,1.45);
\draw[thick] (2.8,1.45) -- (3.2,1.45);
\draw[thick] (2.85,1.5) -- (3.15,1.5);
\draw[thick] (2.9,1.55) -- (3.1,1.55);
\node at (2,3) {$A_1$};
\node at (4,3) {$A_2$};
\node at (5,3) {$...$};
\node at (6,3) {$A_N$};

\end{tikzpicture}
\end{center}
\caption{Multipartite semicausal channel, with $A_N \not\rightarrow ... \not\rightarrow A_2 \not\rightarrow A_1$.}
\label{f:multicausalmap}
\end{figure}

Now we can present the proof of the main theorem of the section. 
\begin{proof}
consider an $N$-partite map $\Lambda_{1... N}$ where $A_N \not\rightarrow A_{N-1} \not\rightarrow ... \not\rightarrow A_1$. By Lemma \ref{lem:bip}, this channel can be decomposed as: 
$$
\Lambda_{1... N} [\rho] = \Tr_E \left\{ U (\rho \otimes \ket{0} \bra{0}_E) U^\dagger \right\}\,,
$$
with $U=U_{A_NE} U_{A_{N-1} ... A_1 E}$. 

Now, the situation is the following: we have $N-1$ input systems plus an ancilla $E$ interacting via the unitary $U_{A_{N-1} ... A_1 E}$, and afterwards a unitary transformation $U_{A_NE}$ is applied to the ancilla $E$ and the last system $A_N$. The first part of this protocol, by Lemma \ref{lem:uniuni}, can be decomposed as a unitary between $A_1 ... A_{N-2}$ and $E$, followed by a unitary between $A_{N-1}$ and $E$, i.e. $U_{A_{N-1} ... A_1 E} = U_{A_{N-1}E}U_{A_{N-2} ... A_1 E}$. By applying this way Lemma \ref{lem:uniuni} recursively, one gets $U_{A_{N-1} ... A_1 E} = U_{A_{N-1}E} U_{A_{N-2} E} ... U_{A_1 E}$. It follows that the channel $\Lambda_{1 ... N}$ has a unitary decomposition with $U = U_{A_NE} U_{A_{N-1}E} ... U_{A_1 E}$. 
\end{proof}

A multipartite map is then causal if it is semicausal for all possible orderings of the parties. 

\section{Assemblages, correlations, distributed measurements, and teleportages, and the channels that define them}

\subsection{Proofs for correlations and assemblages}\label{appsec21}

In this subsection we gives proofs of Propositions \ref{assLHSprop}, \ref{prop11}, \ref{assQprop}. Essentially the same proofs apply for the Propositions \ref{probLHSprop}, \ref{probNSprop}, \ref{probQprop}, since one can take Bob's Hilbert space to be empty in a steering scenario, and for $N\geq 2$, we recover a Bell scenario.
\newline

\noindent
\textbf{Proposition \ref{assLHSprop}.}
\textit{An assemblage is a local hidden state assemblage if and only if there exists a local channel $\Lambda_{1...N,B}^{\mathsf{L}}:\mathcal{L}(\mathcal{H}_{m}^{\otimes N}\otimes\mathcal{H}_{B_{in}})\rightarrow\mathcal{L}(\mathcal{H}_{d}^{\otimes N}\otimes\mathcal{H}_{B_{out}})$ such that the assemblage is channel-defined by $\Lambda_{1...N,B}^{\mathsf{L}}$.}

\begin{proof}
First we prove that if an assemblage has a local hidden state model, then it is channel-defined by a local channel. The first thing to note is that any local hidden state assemblage can be reproduced by the $N$ untrusted parties making local measurements $\{M_{a_{j}|x_{j}}\in\mathcal{L}(\mathcal{H}_{j})\}$ (for the $j$th party) on a separable state $\rho\in\bigotimes_{j=1}^{N}\mathcal{H}_{j}\otimes\mathcal{H}_{B}$, for $\mathcal{H}_{j}$ being the $j$th party's local Hilbert space. {While some entangled states will only produce a local hidden state model assemblage, this means that there exists another separable state that can produce the same local hidden state model assemblage.} Without loss of generality we can model this measurement as a projective measurement. This choice of local measurement $\{M_{a_{j}|x_{j}}\}$ can then be simulated by preparing the input choice as the state $|x_{j}\rangle\in\mathcal{H}_{m}$ where $x_{j}\in\{1,...,m\}$. The outcome of the measurement will be translated into a register with Hilbert space $\mathcal{H}_{d}$, with outcomes $|a_{j}\rangle\in\mathcal{H}_{m}$ described by elements of an orthonormal basis; the register is initially prepared in the state $|0\rangle\in\mathcal{H}_{d}$. Therefore, each $j$th party is associated with the Hilbert space $\mathcal{H}_{j}\otimes\mathcal{H}_{d}\otimes\mathcal{H}_{m}$, and to this system we apply the unitary
\begin{equation}
U_{j}=\sum_{x_{j},a_{j}}M_{a_{j}|x_{j}}\otimes |a_{j}\rangle\langle 0|\otimes|x_{j}\rangle\langle x_{j}|.
\end{equation}
After applying this unitary, the systems in the register associated with $\mathcal{H}_{j}\otimes\mathcal{H}_{m}$ are traced out for each $j$th system, leaving the system in $\mathcal{H}_{d}$, which is then measured in the orthonormal basis $|a_{j}\rangle\in\mathcal{H}_{m}$. Therefore, the assemblage can be channel-defined by this whole local channel including the unitaries as the separable state $\rho$.

In the other direction, if given a local channel then it channel-defines a local hidden state assemblage. This should be clear since the local unitaries followed by a measurement acting on a share of a separable state, will only produce local measurements on a separable state, and therefore the assemblage has a local hidden state model.
\end{proof}

\noindent
\textbf{Proposition \ref{prop11}.}
\textit{An assemblage $\{\sigma_{a_{1}...a_{N}|x_{1}...x_{N}}\}$ is non-signalling if and only if there exists a causal channel $\Lambda_{1...N,B}^{\mathsf{C}}:\mathcal{L}(\mathcal{H}_{m}^{\otimes N}\otimes\mathcal{H}_{B_{in}})\rightarrow\mathcal{L}(\mathcal{H}_{d}^{\otimes N}\otimes\mathcal{H}_{B_{out}})$ such that the assemblage is channel-defined by $\Lambda_{1...N,B}^{\mathsf{C}}$.}

\begin{proof}
First we prove that an assemblage that is channel-defined by a causal channel is a non-signalling assemblage. This follows immediately from the definition of causal channel. Given this channel-defined assemblage $\{\sigma_{a_{1}...a_{N}|x_{1}...x_{N}}\}$, when we take a sum over outcomes $a_{j}$, then this is equivalent to tracing out the output system of a causal channel. This thus results in a new assemblage that is channel-defined by a causal channel (with fewer output systems), and thus the initial assemblage is a non-signalling assemblage. It is also straightforward to see, given the definition of a causal channel, that when tracing out the $N$ untrusted parties, we obtain a reduced quantum state for Bob that is independent of the inputs $(x_{1},...,x_{N})$.

We now proceed to the converse statement that given a non-signalling assemblage, then there exists a causal channel such that the assemblage is channel-defined by it. First, if we fix orthonormal bases for the input and output Hilbert spaces as $\{|x_{j}\rangle\}$ and $\{|a_{j}\rangle\}$ respectively, then we construct the channel with Kraus decomposition
\begin{equation}
\Gamma(\cdot)=\sum_{\textbf{a}:=(a_{1},...,a_{N}),\textbf{x}:=(x_{1},...,x_{N})}K_{\textbf{a},\textbf{x}}(\Tr_{B}\{\cdot\}\otimes\sigma_{\textbf{a}|\textbf{x}})K_{\textbf{a},\textbf{x}}^{\dagger},
\end{equation}
such that $\sigma_{\textbf{a}|\textbf{x}}\in\mathcal{D}(\mathcal{H}_{B})$, and
\begin{equation}
K_{\textbf{a},\textbf{x}}=|a_{1}\rangle\langle x_{1}|\otimes|a_{2}\rangle\langle x_{2}|\otimes...\otimes|a_{N}\rangle\langle x_{N}|\otimes\id_{B}.
\end{equation}
It can be readily verified that this is a channel, and when we prepare $|x_{1}...x_{N}\rangle\langle x_{1}...x_{N}|\otimes|0\rangle\langle 0|_{B}$ as input, act on the input with $\Gamma$, and then measure in the basis $\{|a_{1}...a_{N}\rangle\}$ on the $N$ untrusted parties, we obtain an assemblage. It remains to show that channel $\Gamma$ is itself a causal channel given a non-signalling assemblage. This can be shown inductively first tracing out the output system of party $1$ as follows:
\begin{eqnarray}
\Tr_{1}\{\Gamma(\cdot)\}&=&\sum_{a_{1}}\langle a_{1}|\left(\sum_{\textbf{a}:=(a_{1},...,a_{N}),\textbf{x}:=(x_{1},...,x_{N})}K_{\textbf{a},\textbf{x}}(\Tr_{B}\{\cdot\}\otimes\sigma_{\textbf{a}|\textbf{x}})K_{\textbf{a},\textbf{x}}^{\dagger}\right)|a_{1}\rangle\\
&=&\sum_{x_{1}}\langle x_{1}|\left(\sum_{\textbf{a}:=(a_{2},...,a_{N}),\textbf{x}:=(x_{2},...,x_{N})}K_{\textbf{a},\textbf{x}}(\Tr_{B}\{\cdot\}\otimes\sigma_{\textbf{a}|\textbf{x}})K_{\textbf{a},\textbf{x}}^{\dagger}\right)|x_{1}\rangle\\
&=&\Gamma'(\Tr_{1}\{\cdot\}),
\end{eqnarray}
where $\Gamma':\mathcal{L}(\mathcal{H}_{m}^{N-1}\otimes\mathcal{H}_{B})\rightarrow\mathcal{L}(\mathcal{H}_{d}^{N-1}\otimes\mathcal{H}_{B})$ is another channel corresponding to parties $2$ to $N$. The second line above results from the fact that the assemblages are non-signalling, and new channel $\Gamma'$ is written as
\begin{equation}
\Gamma'(\cdot)=\sum_{\textbf{a}:=(a_{2},...,a_{N}),\textbf{x}:=(x_{2},...,x_{N})}L_{\textbf{a},\textbf{x}}(\Tr_{B}\{\cdot\}\otimes\sigma_{\textbf{a}|\textbf{x}})L_{\textbf{a},\textbf{x}}^{\dagger},
\end{equation}
for
\begin{equation}
L_{\textbf{a},\textbf{x}}=|a_{2}\rangle\langle x_{2}|\otimes|a_{3}\rangle\langle x_{3}|\otimes...\otimes|a_{N}\rangle\langle x_{N}|\otimes\id_{B}.
\end{equation}
The same argument works for any party $j$, and then given the new channel $\Gamma'$, one can trace out one or more of the remaining parties' outputs to get another channel, and so on. In this way, the channel $\Gamma$ channel-defines the non-signalling assemblage, and is causal, thus concluding the proof.
\end{proof}

\noindent
\textbf{Proposition \ref{assQprop}.}
\textit{An assemblage $\{\sigma_{a_{1}...a_{N}|x_{1}...x_{N}}\}$ is quantum if and only if there exists a localizable channel $\Lambda_{1...N,B}^{\mathsf{Q}}:\mathcal{L}(\mathcal{H}_{m}^{\otimes N}\otimes\mathcal{H}_{B_{in}})\rightarrow\mathcal{L}(\mathcal{H}_{d}^{\otimes N}\otimes\mathcal{H}_{B_{out}})$ such that the assemblage is channel-defined by $\Lambda_{1...N,B}^{\mathsf{Q}}$.}

\begin{proof}
The proof of this is exactly the same as the proof for Proposition 23, except the separable state in the proof is replaced with an entangled state.
\end{proof}

As mentioned above, the proofs above easily generalise to the study of correlations. Indeed, one can run through the above arguments and just have Bob's system be the empty system, thus recovering the Bell scenario for $N\geq 2$.

\subsection{Proofs for distributed measurements and teleportages}

In this subsection we gives proofs of Propositions \ref{tellocprop}, \ref{telqprop}, \ref{telNSprop}. Essentially the same proofs apply for the Propositions \ref{buslocprop}, \ref{busqprop}, \ref{busNSprop} since, as with the connection between steering and Bell scenarios, one can take Bob's Hilbert space to be empty in a non-classical teleportation scenario, and for $N\geq 2$ we recover a Buscemi non-locality scenario.
\newline

\noindent
\textbf{Proposition \ref{tellocprop}.}
\textit{A teleportage is local if and only if there exists a local channel $\Lambda_{1...N}^{\mathsf{L}}:\mathcal{L}(\bigotimes_{j}\cH^{j}_{in}=\bigotimes_{j}\mathcal{K}_{j}\otimes\mathcal{K}_{B})\rightarrow\mathcal{L}(\bigotimes_{j=1}^{n}\mathcal{H}_{d}^{\otimes N}\otimes\mathcal{K}_{B})$ such that the teleportage is channel-defined by $\Lambda_{1...N}^{\mathsf{L}}$.}

\begin{proof}
The proof that a teleportage is channel-defined by a local channel is a local teleportage is immediate from the definitions, i.e. a local channel sequentially combined with a local measurement is again a local measurement. For the converse statement that given a local teleportage, there exists a local channel that channel-defines the teleportage, the channel is constructed by having local unitaries that `copy' the outcome of a local measurement to a local output register (with Hilbert space $\mathcal{H}_{d}$) into an orthonormal basis of this register, which is then measured in this basis.
\end{proof}

\noindent
\textbf{Proposition \ref{telqprop}.}
\textit{A teleportage is quantum if and only if there exists a localizable channel $\Lambda_{1...N}^{\mathsf{Q}}:\mathcal{L}(\bigotimes_{j}\cH^{j}_{in}=\bigotimes_{j}\mathcal{K}_{j}\otimes\mathcal{K}_{B})\rightarrow\mathcal{L}(\bigotimes_{j=1}^{n}\mathcal{H}_{d}^{\otimes N}\otimes\mathcal{K}_{B})$ such that the teleportage is channel-defined by $\Lambda_{1...N}^{\mathsf{Q}}$.}

\begin{proof}
The proof of this follows the proof of Proposition 44, except now with an entangled state instead of a separable state.
\end{proof}

\noindent
\textbf{Proposition \ref{telNSprop}.}
\textit{A teleportage is non-signalling if and only if there exists a causal channel $\Lambda_{1...N}^{\mathsf{C}}:\mathcal{L}(\bigotimes_{j}\cH^{j}_{in}=\bigotimes_{j}\mathcal{K}_{j}\otimes\mathcal{K}_{B})\rightarrow\mathcal{L}(\bigotimes_{j=1}^{n}\mathcal{H}_{d}^{\otimes N}\otimes\mathcal{K}_{B})$ such that the teleportage is channel-defined by $\Lambda_{1...N}^{\mathsf{C}}$.}

\begin{proof}
Given a teleportage that is channel-defined by a causal channel, the teleportage is non-signalling essentially by definition: taking a sum over outcomes $a_{j}$ is equivalent to tracing out the $j$th output system of the channel, resulting in a new teleportage for all systems not including $j$. For the other direction, of given a non-signalling teleportage, we can construct a causal channel that channel-defines the teleportage. First, given the elements $T_{a_{1},...,a_{N}}$ of the teleportage, since it is forms an instrument in general, we can straightforwardly construct a channel from an instrument: we introduce output registers $\mathcal{H}_{d}$ for each $j$th party and thus define a channel $\Gamma:\mathcal{L}(\bigotimes_{j}\mathcal{K}_{j}\otimes\mathcal{H}_{B})\rightarrow\mathcal{L}(\mathcal{H}_{d}^{\otimes N}\otimes\mathcal{H}_{B})$ in the following way:
\begin{equation}
\Gamma(\cdot)=\sum_{\textbf{a}:=(a_{1},...,a_{N})}|\textbf{a}\rangle\langle\textbf{a}|\otimes T_{\textbf{a}}(\Tr_{B}\{\cdot\}),
\end{equation}
with $|\textbf{a}\rangle\in\mathcal{H}_{d}^{\otimes{N}}$. Now with measurements on the register of the outputs $\mathcal{H}_{d}^{\otimes{N}}$, the teleportage is then channel-defined by $\Gamma$. It remains to show that $\Gamma$ is causal. This follows from the definition of non-signalling teleportages. 
\end{proof}

\subsection{Constructing channels from correlations and assemblages}\label{appseccon}

{Let us begin with Proposition \ref{canon}'s statement and proof.}
\newline

\noindent
\textbf{Proposition \ref{canon}.}
\textit{Given $\Lambda^{c}_{1...N}(\cdot)$ from $p(a_{1}...a_{N}|x_{1}...x_{N})$, for all measurements $M_{a'_{j}}$, and all states $\rho_{x'_{j}}$, the correlations}
\begin{equation}
p(a'_{1},a'_{2},...,a'_{N}|x'_{1},x'_2,...,x'_{N})=\Tr\left\{\bigotimes_{j=1}^{N}M_{a'_{j}}\left(\Lambda^{c}_{1...N}\otimes\id_{aux}(\bigotimes_{j=1}^{N}\rho_{x'_{j}})\right)\right\}
\end{equation}
\textit{are local if the correlations $p(a_{1}...a_{N}|x_{1}...x_{N})$ are local.}
\noindent
\begin{proof} Given a distribution $p(a'_{1}...a'_{N}|x'_{1}...x'_{N})$ of the form in eq. \eqref{gencorr} with $\Lambda_{1...N}(\cdot)$ being the channel in \eqref{canonical}, we first observe that this probability is invariant if for each state $\rho_{x'_{j}}\in\mathcal{D}(\mathcal{H}_{in}^{j}\otimes\mathcal{H}_{aux}^{j})$, we replace it with
$\rho'_{x'_{j}}=\sum_{x_{j}}(\id^{j}_{aux}\otimes|x_{j}\rangle\langle x_{j}|)\rho_{x'_{j}}(\id^{j}_{aux}\otimes|x_{j}\rangle\langle x_{j}|)$, where $\id^{j}_{aux}$ is the identity acting on $\mathcal{H}_{aux}^{j}$. Therefore, for whatever state $\rho_{x'_{j}}$, the probabilities $p(a'_{1}...a'_{N}|x'_{1}...x'_{N})$ are unchanged if we replace it with a state $\rho'_{x'_{j}}=\sum_{x_{j}}\sigma_{x_{j},x'_{j}}\otimes|x_{j}\rangle\langle x_{j}|$, where $\sigma_{x_{j},x'_{j}}\in\mathcal{D}(\mathcal{H}_{aux}^{j})$ and is equal to $\Tr_{in}(\rho_{x'_{j}}\id^{j}_{aux}\otimes|x_{j}\rangle\langle x_{j}|)$. Furthermore, we apply a similar argument to the general measurement $M_{a'_{j}}$, such that the distribution is conserved by replacing it with
\begin{equation}
M'_{a'_{j}}=\sum_{a_{j}}Q_{a_{j},a'_{j}}\otimes |a_{j}\rangle\langle a_{j}|,
\end{equation}
such that $Q_{a_{j},a'_{j}}=\Tr_{in}\left((\id^{j}_{aux}\otimes\langle a_{j}|)M_{a'_{j}}\right)$, which lives in $\mathcal{D}(\mathcal{H}_{aux}^{j})$. Taking $\rho'_{a'_{j}}$ and $\{M'_{a'_{j}}\}_{j}$ as our state and measurements, we obtain the correlations 
\begin{eqnarray}
p(a'_{1},a'_{2},...,a'_{N}|x'_{1},x'_2,...,x'_{N})&=&\sum_{a_{1}...a_{N}}\sum_{x_{1}...x_{N}}p(a_{1}...a_{N}|x_{1}...x_{N})\Tr\left\{\bigotimes_{j=1}^{N}Q_{a_{j},a'_{j}}\sigma_{x_{j},x'_{j}}\right\}\\
&=&\sum_{a_{1}...a_{N}}\sum_{x_{1}...x_{N}}p(a_{1}...a_{N}|x_{1}...x_{N})\prod_{j=1}^{N}q(a_{j'}|a_{j},x_{j},x'_{j}).
\end{eqnarray}
{Hence, the correlations $p(a'_{1},a'_{2},...,a'_{N}|x'_{1},x'_2,...,x'_{N})$ can be seen as the correlations \linebreak $p(a_{1}...a_{N}|x_{1}...x_{N})$ acted on by some local stochastic processing in the form of a conditional probability $q(a_{j'}|a_{j},x_{j},x'_{j})$. Since local stochastic processing cannot turn local correlations into non-local ones, the distribution $\{p(a'_{1},a'_{2},...,a'_{N}|x'_{1},x'_2,...,x'_{N})\}$ will be local provided that $\{p(a_{1}...a_{N}|x_{1}...x_{N})\}$ is.}
\end{proof}

We now {present} Proposition \ref{canonass}, {and its proof}.
\newline

\noindent
\textbf{Proposition \ref{canonass}.}
\textit{Given $\Sigma^{c}_{1...N,B}(\cdot)$ from assemblage elements $\sigma_{a_{1}...a_{N}|x_{1}...x_{N}}$, this channel is local-limited if for all measurements $P_{a_{B}|x_{B}}\in\mathcal{L}(\mathcal{H}_{B})$ indexed by the choice $x_{B}$ and outcomes $a_{B}$, the correlations {$p(a_{1},...,a_{N},a_{B}|x_{1},...,x_{N},x_{B}):=\Tr\{P_{a_{B}|x_{B}}\sigma_{a_{1}...a_{N}|x_{1}...x_{N}}\}$} are local.}
\noindent

\begin{proof}
Firstly, since the channel $\Sigma^{c}_{1...N,B}$ traces out the part of $\rho_{x'_{B}}$ that is input into the channel, without loss of generality, we can replace $\rho_{x'_{B}}$ with $\sigma_{0}\otimes\sigma_{x'_{B}}$, where $\sigma_{0}\in\mathcal{D}(\mathcal{H}_{B_{in}})$ is some fixed state, and $\sigma_{x'_{B}}\in\mathcal{D}(\mathcal{H}_{B_{aux}})$. Therefore, the preparation of $\sigma_{x'_{B}}$ following by the measurement $M_{a'_{B}}$ can be incorporated into a single measurement $\{P_{a'_{B}|x'_{B}}\in\mathcal{L}(\mathcal{H}_{B_{out}})\}$, such that $\sum_{a'_{B}}P_{a'_{B}|x'_{B}}=\id_{B_{out}}$. This then simplifies the correlations to be
\begin{equation}
p(a'_{1},...,a'_{N},a'_{B}|x'_{1},...,x'_{N},x'_{B})=\Tr\left\{\bigotimes_{j=1}^{N}M_{a'_{j}}\otimes P_{a'_{B}|x'_{B}}\left(\Sigma^{c}_{1...N,B}\otimes\id_{aux}(\bigotimes_{j=1}^{N}\rho_{x'_{j}}\otimes\rho_{0})\right)\right\}.
\end{equation}
By expanding out this expression and using identical reasoning to the proof of Proposition \ref{canon}, we arrive at
\begin{equation}
p(a'_{1},...,a'_{N},a'_{B}|x'_{1},...,x'_{N},x'_{B})=\sum_{a_{1}...a_{N}}\sum_{x_{1}...x_{N}}\Tr\{P_{a'_{B}|x'_{B}}\sigma_{a_{1}...a_{N}|x_{1}...x_{N}}\}\prod_{j=1}^{N}q(a_{j'}|a_{j},x_{j},x'_{j}),
\end{equation}
where $q(a_{j'}|a_{j},x_{j},x'_{j})$ is a local conditional probability. Therefore, if $\Tr\{P_{a'_{B}|x'_{B}}\sigma_{a_{1}...a_{N}|x_{1}...x_{N}}\}$ is local for all measurements $P_{a'_{B}|x'_{B}}$, the correlations are local, and the channel is local-limited.
\end{proof}

\section{Unitary representation of non-signalling assemblages and teleportages}\label{appc}

In this section we discuss the unitary representations of non-signalling assemblages and teleportages as outlined in Theorems \ref{nonsigass} and \ref{nonsigtel}. As outlined in the main text, the Gisin-Hughston-Jozsa-Wootters theorem \cite{Gisin,HJW} can also be seen as a corollary of Theorem \ref{nonsigass}, and our generalisation of the Gisin-Hughston-Jozsa-Wootters theorem is a corollary of Theorem \ref{nonsigtel}. We will only present the proof of Theorem \ref{nonsigtel} since Theorem \ref{nonsigass} is a special case. \\

\noindent
\textbf{Theorem} \ref{nonsigtel} \textbf{Unitary representation of non-signalling teleportages}\\
\textit{Let $\{T_{a_{1}...a_{N}}\}$ be a non-signalling teleportage. Then, the teleportage is channel-defined by a channel $\Lambda_{1...N,B}^{\mathsf{C}}:\mathcal{L}(\bigotimes_{j}\cH^{j}_{in}\otimes\mathcal{K}_{B})\rightarrow\mathcal{L}(\mathcal{H}_{d}^{\otimes N}\otimes\mathcal{K}_{B})$ if and only if there exist}

\begin{itemize}
\item \textit{auxiliary systems $E$ and $E'$ with input and output Hilbert spaces, $\mathcal{H}^{E}_{in}$  and $\mathcal{H}^{E}_{out}$ for $E$, with $\mathcal{H}^{E'}_{in}=\mathcal{K}_{B}$ and  $\mathcal{H}^{E'}_{out}=\mathcal{K}_{B}$ for $E'$, that is the output Hilbert space of $E'$ and $B$ coincide;}
\item \textit{quantum state $|R\rangle\in\mathcal{H}^{E}_{in}\otimes\mathcal{H}^{E'}_{in}$;}
\item \textit{unitary operator $V:\bigotimes_{j}\cH^{j}_{in}\otimes\mathcal{H}^{E}_{in}\rightarrow\mathcal{H}_{d}^{\otimes N}\otimes\mathcal{H}^{E}_{out}$,}
\end{itemize}
\textit{which produce a unitary representation of the channel $\Lambda_{1...N,B}^{\mathsf{C}}:\mathcal{L}(\bigotimes_{j}\cH^{j}_{in}\otimes\mathcal{K}_{B})\rightarrow\mathcal{L}(\mathcal{H}_{d}^{\otimes N}\otimes\mathcal{K}_{B})$ via}
\begin{equation*}
\Lambda_{1...N,B}^{\mathsf{C}}(\cdot)=\Tr_{E_{out}}\{V(\Tr_{B_{in}}(\cdot)\otimes|R\rangle\langle  R|_{E,E'})V^{\dagger}\}.
\end{equation*}
\textit{Futhermore the unitary $V$ can be decomposed into a sequence of unitaries $U_{k,E}:\mathcal{H}_m\otimes\mathcal{H}^{E}_{1}\rightarrow\mathcal{H}_d\otimes\mathcal{H}^{E}_{2}$ for appropriately chosen Hilbert spaces $\mathcal{H}^{E}_{1}$ and $\mathcal{H}^{E}_{2}$, where for any given permutation $\pi$ of the set $\{1,...,N\}$, we have that}
\begin{equation*}
V=U^{\pi}_{\pi(1),E}U^{\pi}_{\pi(2),E}...U^{\pi}_{\pi(N),E}
\end{equation*}
\textit{where $U^{\pi}_{k,E}$ is not necessarily the same as $U^{\pi'}_{k,E}$ for two different permutations $\pi$ and $\pi'$.}

\begin{proof}
First of all, we can treat all $N$ untrusted parties as a single party called 'Alice' (or $A$ for short) that produces an $N$-length string $\textbf{a}:=(a_{1},...,a_{N})$. This allows us to then consider a bipartite non-classical teleportation scenario and a causal map $\Lambda_{AB}$. First, write the map in its unitary representation with $A \not\rightarrow B$, 
\begin{center}
\begin{tikzpicture}
\shade[draw, thick, ,rounded corners, inner color=white,outer color=gray!50!white] (-5,-1) rectangle (-1,1) ;
\node at (-3,0) {$\Lambda_{A \not\leftrightarrow B}$};
\draw[thick] (-4,-1.5) -- (-4,-1);
\draw[thick] (-2,-1.5) -- (-2,-1);
\draw[thick] (-4,1) -- (-4,1.5);
\draw[thick] (-2,1) -- (-2,1.5);
\node at (-4,2.5) {Alice};
\node at (-2,2.5) {Bob};
\draw[thick, rounded corners] (-2,-2) -- (-1.5,-1.5) -- (-2.5, -1.5) -- cycle;
\node at (-2,-1.7) {$0$};
\draw[thick, rounded corners] (-4,2) -- (-3.5,1.5) -- (-4.5, 1.5) -- cycle;
\node at (-4,1.7) {$\textbf{a}$};

\node at (0,0) {$\equiv$};
\draw[thick, rounded corners] (1,-1) rectangle (5,1);
\draw[thick, rounded corners] (3,-0.99) -- (3.5,-0.49) -- (2.5, -0.49) -- cycle;
\node at (3,-0.69) {$E$};
\draw[thick, rounded corners] (1.5,0.21) rectangle (3.5,0.71);
\draw[thick, rounded corners] (2.5,-0.39) rectangle (4.5,0.11);
\node at (3.5,-0.15) {$V_{EB}$};
\node at (2.5,0.46) {$V_{AE}$};

\draw[thick] (4,-1.5) -- (4,-0.39);
\draw[thick] (4,0.11) -- (4,1.5);
\draw[thick] (2,-1.5) -- (2,0.21);
\draw[thick] (2,0.71) -- (2,1.5);

\draw[thick] (3,-0.49) -- (3,-0.39);
\draw[thick] (3,0.11) -- (3,0.21);
\draw[thick] (3,0.71) -- (3,0.85);
\draw[thick] (2.8,0.85) -- (3.2,0.85);
\draw[thick] (2.85,0.9) -- (3.15,0.9);
\draw[thick] (2.9,0.95) -- (3.1,0.95);
\node at (2,2.5) {Alice};
\node at (4,2.5) {Bob};

\draw[thick, rounded corners] (4,-2) -- (3.5,-1.5) -- (4.5, -1.5) -- cycle;
\node at (4,-1.7) {$0$};
\draw[thick, rounded corners] (2,2) -- (1.5,1.5) -- (2.5, 1.5) -- cycle;
\node at (2,1.7) {$\textbf{a}$};
\node at (5.5,0) {.};
\end{tikzpicture}
\end{center}
By defining $U_{AR} := V_{AE}$, and the new ancilla $R$ and map $U_{RB}$ as follows
\begin{center}
\begin{tikzpicture}
\draw[thick, rounded corners] (3,-0.99) -- (3.5,-0.49) -- (2.5, -0.49) -- cycle;
\node at (3,-0.69) {$R$};
\draw[thick] (3.25,-0.49)--(3.25,0);
\draw[thick] (2.75,-0.49)--(2.75,0);

\node at (4,-0.5) {$:=$};
\draw[thick, rounded corners, xshift=2cm] (3,-0.99) -- (3.5,-0.49) -- (2.5, -0.49) -- cycle;
\node[xshift=2cm] at (3,-0.69) {$E$};
\draw[thick, rounded corners, xshift=3cm] (3,-0.99) -- (3.5,-0.49) -- (2.5, -0.49) -- cycle;
\node[xshift=3cm] at (3,-0.69) {$0$};
\draw[thick, rounded corners,xshift=2cm] (2.5,-0.39) rectangle (4.5,0.11);
\node[xshift=2cm] at (3.5,-0.15) {$V_{EB}$};

\draw[thick, xshift=1.75cm] (3.25,-0.49)--(3.25,-0.39);
\draw[thick, xshift=1.75cm] (3.25,0.11)--(3.25,0.4);
\draw[thick, xshift=2.75cm] (3.25,-0.49)--(3.25,-0.39);
\draw[thick, xshift=2.75cm] (3.25,0.11)--(3.25,0.4);

\node at (7,-0.5) {;};


\draw[thick, rounded corners] (7.5,-0.75) rectangle (8.5,-0.25);
\node at (8,-0.5) {$U_{RB}$};
\draw[thick] (7.75,-1) -- (7.75,-0.75);
\draw[thick] (8.25,-1) -- (8.25,-0.75);
\draw[thick] (7.75,-0.25) -- (7.75,0);
\draw[thick] (8.25,-0.25) -- (8.25,0);

\node at (10,-0.5) {$:= \quad$ SWAP \,;};

\end{tikzpicture}
\end{center}
we obtain
\begin{center}
\begin{tikzpicture}
\draw[thick, rounded corners] (1,-1) rectangle (5,1);
\draw[thick, rounded corners] (3,-0.99) -- (3.5,-0.49) -- (2.5, -0.49) -- cycle;
\node at (3,-0.69) {$E$};
\draw[thick, rounded corners] (1.5,0.21) rectangle (3.5,0.71);
\draw[thick, rounded corners] (2.5,-0.39) rectangle (4.5,0.11);
\node at (3.5,-0.15) {$V_{EB}$};
\node at (2.5,0.46) {$V_{AE}$};

\draw[thick] (4,-1.5) -- (4,-0.39);
\draw[thick] (4,0.11) -- (4,1.5);
\draw[thick] (2,-1.5) -- (2,0.21);
\draw[thick] (2,0.71) -- (2,1.5);

\draw[thick] (3,-0.49) -- (3,-0.39);
\draw[thick] (3,0.11) -- (3,0.21);
\draw[thick] (3,0.71) -- (3,0.85);
\draw[thick] (2.8,0.85) -- (3.2,0.85);
\draw[thick] (2.85,0.9) -- (3.15,0.9);
\draw[thick] (2.9,0.95) -- (3.1,0.95);
\node at (2,2.5) {Alice};
\node at (4,2.5) {Bob};

\draw[thick, rounded corners] (4,-2) -- (3.5,-1.5) -- (4.5, -1.5) -- cycle;
\node at (4,-1.7) {$0$};
\draw[thick, rounded corners] (2,2) -- (1.5,1.5) -- (2.5, 1.5) -- cycle;
\node at (2,1.7) {$\textbf{a}$};

\node[xshift=5.5cm] at (0,0) {$\equiv$};
\draw[xshift=5cm, thick, rounded corners] (1,-1) rectangle (5,1);
\draw[thick, rounded corners, xshift=5cm] (3,-0.99) -- (3.5,-0.49) -- (2.5, -0.49) -- cycle;
\node[xshift=5cm] at (3,-0.69) {$R$};
\draw[thick, rounded corners,xshift=5cm] (1.5,0.21) rectangle (3.5,0.71);
\node[xshift=5cm] at (2.5,0.46) {$V_{AE}$};

\draw[thick,xshift=5cm] (4,-1.5) -- (4,-1);
\draw[thick,xshift=5cm] (4,1) -- (4,1.5);
\draw[thick,xshift=5cm] (2,-1.5) -- (2,0.21);
\draw[thick,xshift=5cm] (2,0.71) -- (2,1.5);

\draw[thick,xshift=5cm] (3,-0.49) -- (3,0.21);
\draw[thick,xshift=5cm] (3,0.71) -- (3,0.85);
\draw[thick,xshift=5cm] (2.8,0.85) -- (3.2,0.85);
\draw[thick,xshift=5cm] (2.85,0.9) -- (3.15,0.9);
\draw[thick,xshift=5cm] (2.9,0.95) -- (3.1,0.95);
\node[xshift=5cm] at (2,2.5) {Alice};
\node[xshift=5cm] at (4,2.5) {Bob};

\draw[thick, rounded corners,xshift=5cm] (4,-2) -- (3.5,-1.5) -- (4.5, -1.5) -- cycle;
\node[xshift=5cm] at (4,-1.7) {$0$};
\draw[thick, rounded corners,xshift=5cm] (2,2) -- (1.5,1.5) -- (2.5, 1.5) -- cycle;
\node[xshift=5cm] at (2,1.7) {$\textbf{a}$};

\draw[thick] (9,-1) -- (9,-0.8);
\draw[thick, xshift=6cm, yshift=-1.65cm] (2.9,0.99) -- (3.1,0.99);
\draw[thick, xshift=6cm, yshift=-1.65cm] (2.85,0.92) -- (3.15,0.92);
\draw[thick, xshift=6cm, yshift=-1.65cm] (2.8,0.85) -- (3.2,0.85);
\draw[thick] (8.3,-0.49) to [out=90,in=-90] (9,1);

\node at (10.5,0) {$\equiv$};

\draw[thick, rounded corners, xshift=10cm] (1,-1) rectangle (5,1);
\draw[thick, rounded corners,xshift=10cm] (3,-0.95) -- (3.9,-0.45) -- (2.1, -0.45) -- cycle;
\node[xshift=10cm] at (3,-0.66) {$R$};

\draw[thick, rounded corners,xshift=10cm] (1.5,-0.1) rectangle (2.9,0.4);
\draw[thick, rounded corners,xshift=10cm] (3.1,-0.1) rectangle (4.5,0.4);
\node[xshift=10cm] at (2.3,0.1) {$U_{AR}$};
\node[xshift=10cm] at (3.7,0.1) {$U_{RB}$};

\draw[thick,xshift=10cm] (2,-1.5) -- (2,-0.1);
\draw[thick,xshift=10cm] (2,0.4) -- (2,1.5);
\draw[thick,xshift=10cm] (4,-1.5) -- (4,-0.1);
\draw[thick,xshift=10cm] (4,0.4) -- (4,1.5);

\draw[thick,xshift=10cm] (2.5,-0.45) -- (2.5,-0.1);
\draw[thick,xshift=10cm] (3.5,-0.45) -- (3.5,-0.1);

\draw[thick,xshift=10cm] (2.5,0.4) -- (2.5,0.6);
\draw[thick, xshift=9.5cm, yshift=-0.25cm] (2.9,0.99) -- (3.1,0.99);
\draw[thick, xshift=9.5cm, yshift=-0.25cm] (2.85,0.92) -- (3.15,0.92);
\draw[thick, xshift=9.5cm, yshift=-0.25cm] (2.8,0.85) -- (3.2,0.85);

\draw[thick,xshift=10cm] (3.5,0.4) -- (3.5,0.6);
\draw[thick, xshift=10.5cm, yshift=-0.25cm] (2.9,0.99) -- (3.1,0.99);
\draw[thick, xshift=10.5cm, yshift=-0.25cm] (2.85,0.92) -- (3.15,0.92);
\draw[thick, xshift=10.5cm, yshift=-0.25cm] (2.8,0.85) -- (3.2,0.85);

\node[xshift=10cm] at (2,2.5) {Alice};
\node[xshift=10cm] at (4,2.5) {Bob};
\draw[thick, rounded corners,xshift=10cm] (4,-2) -- (3.5,-1.5) -- (4.5, -1.5) -- cycle;
\node[xshift=10cm] at (4,-1.7) {$0$};
\draw[thick, rounded corners,xshift=10cm] (2,2) -- (1.5,1.5) -- (2.5, 1.5) -- cycle;
\node[xshift=10cm] at (2,1.7) {$\textbf{a}$};
\node at (15.5,0) {.};

\end{tikzpicture}
\end{center}
This first equivalence then gives part of the statement of the theorem where $V$ is the unitary $V_{AE}$, and the state on $\mathcal{K}_{B}$ is the state of the right-hand system of $|R\rangle$. The second equivalence shows that the whole channel can be seen as a localizable channel across the bipartition between Alice and Bob. 

In order to prove the remainder of the theorem, it remains to decompose $V$ into a sequence of unitaries. To do this, we use the fact that among the $N$ parties (that make up Alice) there cannot be any signalling. Therefore, using Theorem \ref{theoremunirep} we arrive at the full statement of the theorem.
\end{proof}

\section{Almost quantum assemblages}\label{ap:AQCSDP}

Here we provide the proofs of Lemma \ref{lemmaAQ} and Theorem \ref{stalmost}, respectively. 

\begin{lemma*}
An assemblage $\{\sigma_{a_{1}...a_{N}|x_{1}...x_{N}}\}$ is almost quantum if and only if there exists a Hilbert space $\mathcal{H}\cong\mathcal{K}\otimes\mathcal{H}_{B}$, quantum state $|\psi\rangle\in\mathcal{H}$, and projective measurements $\{\Pi_{a_{j}|x_{j}}\in\mathcal{L}(\mathcal{K})\}$ for each $j$th party where $\sum_{a_{j}}\Pi_{a_{j}|x_{j}}=\id$ and for all permutations $\pi$ of $\{1,...,N\}$, $\prod_{j=1}^{N}\Pi_{a_{\pi(j)}|x_{\pi(j)}}|\psi\rangle=\prod_{j=1}^{N}\Pi_{a_{j}|x_{j}}|\psi\rangle$, such that
\begin{equation}
\sigma_{a_{1}...a_{N}|x_{1}...x_{N}}=\Tr_{\mathcal{K}}\{\prod_{j=1}^{N}\Pi_{a_{j}|x_{j}}\otimes\id_{{B}}|\psi\rangle\langle\psi|\}.
\end{equation}
\end{lemma*}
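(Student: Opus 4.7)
The plan is to prove the two directions of the biconditional separately, using the assemblage analogue of the equivalence established in Ref.~\cite{aqp} for almost quantum correlations. The novel ingredient is that the ``physical'' object now lives in the bipartite Hilbert space $\mathcal{K}\otimes\mathcal{H}_B$ rather than in $\mathcal{K}$ alone, so one must be careful that the projective measurements extracted from the SDP matrix act non-trivially only on the $\mathcal{K}$-factor.

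For the forward direction, the first useful observation is that the hypothesised commutation-on-state for full $N$-party products automatically extends to subsets: summing the full commutation identity over the outcomes $a_k$ for $k\notin S$ and using $\sum_{a_k}\Pi_{a_k|x_k}=\id$ shows that $\prod_{j\in S}\Pi_{a_j|x_j}|\psi\rangle$ is invariant under all permutations of $S$. Setting $|\phi_w\rangle := (\Pi_w\otimes\id_B)|\psi\rangle$ for each word $w$, I would then define
\[
\Gamma_{w,w'} \;:=\; \Tr_\mathcal{K}\bigl\{|\phi_w\rangle\langle\phi_{w'}|\bigr\},
\]
and verify the five conditions of Def.~\ref{def:sdpaq}. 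Positivity (i) is immediate from $\Gamma=\Tr_\mathcal{K}\{|\Phi\rangle\langle\Phi|\}$ with $|\Phi\rangle:=\sum_w|w\rangle|\phi_w\rangle$; orthogonality (ii) follows by using subset commutation to bring the conflicting projectors $\Pi_{a_j|x_j}$ and $\Pi_{a'_j|x_j}$ adjacent on $|\psi\rangle$, whereupon they annihilate; conditions (iii) and (iv) reduce to the algebraic identity $\Pi_w^\dagger\Pi_w|\psi\rangle=\Pi_w|\psi\rangle$, which itself follows by noting that each factor $\Pi_{a_j|x_j}$ for $j\in S$ stabilises $\Pi_w|\psi\rangle$ (apply subset commutation to place $\Pi_{a_j|x_j}$ first, then use idempotency); and (v) is the same argument applied to a disjoint label set $T$.

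For the reverse direction, I would purify $\Gamma$ as $\Gamma=\Tr_\mathcal{K}\{|\Phi\rangle\langle\Phi|\}$ with $|\Phi\rangle \in \mathbb{C}^{|W|}\otimes\mathcal{H}_B\otimes\mathcal{K}$, expand $|\Phi\rangle=\sum_w|w\rangle\otimes|\phi_w\rangle$ with $|\phi_w\rangle\in\mathcal{K}\otimes\mathcal{H}_B$, and set $|\psi\rangle:=|\phi_\emptyset\rangle$. The projective measurements would then be defined on the subspace of $\mathcal{K}\otimes\mathcal{H}_B$ spanned by $\{|\phi_w\rangle\}$ via the rule $(\Pi_{a_j|x_j}\otimes\id_B)|\phi_w\rangle:=|\phi_{w\cdot(a_j|x_j)}\rangle$, interpreting ``append $(a_j|x_j)$ to $w$'' via (ii) (giving $0$ for conflicting outcomes) and (v) (so that $\Pi_{a_j|x_j}^2=\Pi_{a_j|x_j}$ on the spanned subspace). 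Condition (iv) then directly gives the assemblage formula, and the commutation-on-state relation on $|\psi\rangle$ is automatic because the word $(a_{\pi(1)}|x_{\pi(1)})\cdot(a_{\pi(2)}|x_{\pi(2)})\cdots(a_{\pi(N)}|x_{\pi(N)})$ is, as an element of the index set $W$, the same for every permutation $\pi$.

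The main obstacle is showing that each $\Pi_{a_j|x_j}$, partially defined on the span of $\{|\phi_w\rangle\}$, extends to a Hermitian operator acting on the $\mathcal{K}$-factor alone, with the family $\{\Pi_{a_j|x_j}\}_{a_j}$ forming a complete projective measurement. Hermiticity on the spanned subspace reduces to $\Tr_B\Gamma_{w,w'\cdot(a_j|x_j)}=\Tr_B\Gamma_{w\cdot(a_j|x_j),w'}$, which follows from (v) together with the Hermiticity of $\Gamma$; completeness $\sum_{a_j}\Pi_{a_j|x_j}=\id$ follows similarly from (v) and the implied sum rule. The extension off the spanned subspace, possibly after enlarging $\mathcal{K}$, is then a Naimark-dilation-type argument parallel to the one used in Ref.~\cite{aqp}. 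That the $\Pi_{a_j|x_j}$ act only on $\mathcal{K}$ and not on $\mathcal{H}_B$ requires the purification to be chosen so that $\mathcal{H}_B$ appears as a fixed tensor factor in each $|\phi_w\rangle$, which is precisely enforced by the block structure of $\Gamma$ as a $|W|\times|W|$ array of operators on $\mathcal{H}_B$.
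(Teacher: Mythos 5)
Your overall strategy is the same as the paper's: in one direction you build the $|W|\times|W|$ moment matrix of Def.~\ref{def:sdpaq} directly from the state and the (commuting-on-state) projectors, and in the other you read the state and projectors off the Gram vectors of $\Gamma$, using the block structure over $M_{d_B}$ to carve out $\mathcal{H}_B$ as a tensor factor. Your forward direction is fine and in one respect cleaner than the paper's: positivity via $\Gamma=\Tr_{\mathcal{K}}\{|\Phi\rangle\langle\Phi|\}$ with $|\Phi\rangle=\sum_w\ket{w}\ket{\phi_w}$ is immediate, whereas the paper writes out an explicit Gramian decomposition; your reduction of (ii)--(v) to subset commutation on $\ket{\psi}$ plus cyclicity of $\Tr_{\mathcal{K}}$ also checks out.

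Two points in your reverse direction need more care. First, defining $(\Pi_{a_j|x_j}\otimes\id_B)\ket{\phi_w}:=\ket{\phi_{w\cdot(a_j|x_j)}}$ prescribes an operator by its action on a possibly linearly dependent spanning set, so you must check well-definedness (that $\sum_w c_w\ket{\phi_w}=0$ forces $\sum_w c_w\ket{\phi_{w\cdot(a_j|x_j)}}=0$); this does follow from condition (v) by computing the norm of the image, but it is a step you have to do. The paper sidesteps it by instead defining $E^{(k)}_{a_k|x_k}$ as the orthogonal projection onto $\mathrm{span}\{\ket{(a_k\vec{a}'|x_k\vec{x}')}\}$ and then \emph{deriving} the append-a-letter action from (v). Second, and this is the genuine gap: you assert that completeness $\sum_{a_j}\Pi_{a_j|x_j}=\id$ ``follows from (v) and the implied sum rule,'' but Def.~\ref{def:sdpaq} as stated contains no normalisation constraint of the form $\sum_{a}\Gamma_{(a|x)v,w}=\Gamma_{v,w}$, and conditions (i)--(v) do not obviously imply it for general entries (only the $\Gamma_{\emptyset,\cdot}$ row inherits it from normalisation of the assemblage). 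The paper's proof is built to avoid needing it: only the outcomes $a=1,\dots,d-1$ are realised as projections onto subspaces, and $E^{(k)}_{d|x_k}:=\id-\sum_{a<d}E^{(k)}_{a|x_k}$ is defined as the complement, which guarantees completeness by fiat. You should either adopt that device or explicitly prove the sum rule from the stated conditions before invoking it. Your final concern --- that the operators act on $\mathcal{K}$ alone --- is the right one to raise, and your appeal to the $M_{d_B}$-module structure of the Gram vectors is the same mechanism the paper uses (its closing operator-algebra argument), but as written it is a gesture rather than a proof and should be fleshed out along those lines.
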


\begin{proof}
Consider a steering scenario where $N$ parties steer one, by performing $m$ measurements of $d$ outcomes each. Let $D$ be the dimension of the Hilbert space of the characterised party, denoted by $\mathcal{H}_B$. 

We will first prove the `if' direction.
 
Take an assemblage $\{\sigma_{\vec{a}|\vec{x}}\}$, generated by the uncharacterised parties performing measurements $\Pi_{a_k|x_k}^{(k)}$ on the state $\rho$, which without loss of generality we can consider to be a pure state $\rho = \ket{\psi}\bra{\psi}$. Let $W$ be the set of words as in Def.~\ref{def:sdpaq}. Now define the matrix $\Gamma$, with size $|W|\times |W|$ and whose elements are $D\times D $ matrices, as follows:  
\begin{align*}
&\Gamma_{(\vec{a}_{S}|\vec{x}_{S}),(\vec{a}^\prime_{S'}|\vec{x}^\prime_{S'})} := \Tr_{1, ..., N} \left\{ \prod_{i=1:|S|} \Pi_{a_i|x_i}^{\dagger} \prod_{j=1:|S'|} \Pi_{a_j^\prime|x_j^\prime} \, \rho \right\}\,,\\
&\Gamma_{\emptyset,(\vec{a}^\prime_{S'}|\vec{x}^\prime_{S'})} := \Tr_{1, ..., N} \left\{ \prod_{j=1:size(x^\prime)} \Pi_{a_j^\prime|x_j^\prime} \, \rho \right\}\,,\\
&\Gamma_{(\vec{a}_{S}|\vec{x}_{S}),\emptyset} := \Tr_{1, ..., N} \left\{ \prod_{i=1:size(x)} \Pi_{a_i|x_i}^{\dagger} \, \rho \right\}\,,\\
&\Gamma_{\emptyset,\emptyset} := \sigma_R\,.
\end{align*}

By definition, this $\Gamma$ satisfies (iii) and (iv) of Def.~\ref{def:sdpaq}. Since the measurements $\Pi_{a_k|x_k}$ are projective, $\Gamma$ also satisifies condition (ii). The commutation relation of the projective measurements $\Pi_{a_k|x_k}$ on the state $\rho$ further imply (v). 

We still need to show that $\Gamma\geq 0$. We hence need to show that $\Gamma$ as an element of $M_{|W|} (M_D)$ (i.e. of the set of $|W| \times |W|$ matrices whose entries are $D\times D$ matrices) is positive semidefinite, which is equivalent to showing that $\Gamma$ as an element of $M_{|W|D}$ (i.e. of the set of $|W|D \times |W|D$ matrices whose entries are complex numbers) is positive semidefinite \cite{paulsen}. 
We will hence show that $\Gamma \in M_{|W|D}$ is a Gramian matrix {(i.e.~it can be written as $\Gamma=V^{\dagger}V$ for some matrix $V$),} and since all Gramian matrices are positive semidefinite the claim follows. 

First observe that entries of $\Gamma$ are of the form 
$$
\Gamma_{[(\vec{a}_{S}|\vec{x}_{S})]_j,[(\vec{a}^\prime_{S'}|\vec{x}^\prime_{S'})]_l}=\bra{j}{\Gamma}_{(\vec{a}_{S}|\vec{x}_{S}),(\vec{a}^\prime_{S'}|\vec{x}^\prime_{S'})}\ket{l}
$$ 
where $[(\vec{a}_{S}|\vec{x}_{S})]_j,[(\vec{a}^\prime_{S'}|\vec{x}^\prime_{S'})]_l$ denotes the $(j,l)$ component of the matrix ${\Gamma}_{(\vec{a}_{S}|\vec{x}_{S}),(\vec{a}^\prime_{S'}|\vec{x}^\prime_{S'})}$, with 
 ${\Gamma}_{(\vec{a}_{S}|\vec{x}_{S}),(\vec{a}^\prime_{S'}|\vec{x}^\prime_{S'})}\in\{\Gamma_{u,v} \,:\, u,v \in W\}$, and $\ket{j}$, $\ket{l}$ elements of an orthonormal basis of $\mathcal{H}_{B}$. 

By cyclicity of the partial trace we can also write 
$$
{\Gamma}_{(\vec{a}_{S}|\vec{x}_{S}),(\vec{a}^\prime_{S'}|\vec{x}^\prime_{S'})}=\textrm{tr}_{1 ... N}(F_{(\vec{a}^\prime_{S'}|\vec{x}^\prime_{S'})}\ket{\psi}\bra{\psi}G^{\dagger}_{(\vec{a}_{S}|\vec{x}_{S})})
$$ 
for $F_{(\vec{a}^\prime_{S'}|\vec{x}^\prime_{S'})} , G_{(\vec{a}_{S}|\vec{x}_{S})}\in\{\id_{A}\} \cup \{\Pi_{v} \,:\,v \in S\}$, where we have identified $\Pi_{(\vec{a}_{S}|\vec{x}_{S})} := \prod_{i=1:|S|} \Pi_{a_i|x_i}$. 

By defining $\mathcal{H}_{A}$ as the Hilbert space of the uncharacterised parties, note that
\begin{align*}
\bra{j}{\Gamma}_{(\vec{a}_{S}|\vec{x}_{S}),(\vec{a}^\prime_{S'}|\vec{x}^\prime_{S'})}\ket{l}
&=\sum_{\ket{y}\in\mathcal{H}_{A}}\bra{j}\bra{y}F_{(\vec{a}^\prime_{S'}|\vec{x}^\prime_{S'})}\ket{\psi}\bra{\psi}G^{\dagger}_{(\vec{a}_{S}|\vec{x}_{S})}\ket{y}\ket{l}\\
&=\sum_{\ket{y}\in\mathcal{H}_{A}}\bra{\psi}G^{\dagger}_{(\vec{a}_{S}|\vec{x}_{S})}\ket{y}\ket{l}\bra{j}\bra{y}F_{(\vec{a}^\prime_{S'}|\vec{x}^\prime_{S'})}\ket{\psi}\\
&=\left(\sum_{\ket{y'}\in\mathcal{H}_{A}}\bra{\psi}G^{\dagger}_{(\vec{a}_{S}|\vec{x}_{S})}\ket{y'}\ket{l}\bra{y'}\right)\left(\sum_{\ket{y}\in\mathcal{H}_{A}}\bra{j}\bra{y}F_{(\vec{a}^\prime_{S'}|\vec{x}^\prime_{S'})}\ket{\psi}\ket{y}\right)\\
&=\sum_{y'}\alpha_{y',[(\vec{a}_{S}|\vec{x}_{S})]_l }^{*}\bra{y'}\sum_{y}\alpha_{y,[(\vec{a}^\prime_{S'}|\vec{x}^\prime_{S'})]_j} \ket{y}\\
&=\bra{{u}}_{[(\vec{a}_{S}|\vec{x}_{S})]_l}\ket{v}_{[(\vec{a}^\prime_{S'}|\vec{x}^\prime_{S'})]_j}
\end{align*}
where $\{\ket{y}\}$ is an orthonormal basis in $\mathcal{H}_{A}$ such that $\langle y'\ket{y}=\delta^{y}_{y'}$ and $\alpha_{y,[(\vec{a}^\prime_{S'}|\vec{x}^\prime_{S'})]_j}=\bra{j}\bra{y}F{(\vec{a}^\prime_{S'}|\vec{x}^\prime_{S'})}\ket{\psi}$ is some scalar.
{Now we can further define the set of vectors
\begin{align*}
\left\{ \ket{v}_{[(\vec{a}^\prime_{S'}|\vec{x}^\prime_{S'})]_j} \,:\, (\vec{a}^\prime_{S'}|\vec{x}^\prime_{S'}) \in W\,,\, j=1\ldots D \,,\, \ket{v}_{[(\vec{a}^\prime_{S'}|\vec{x}^\prime_{S'})]_j} =  \sum_{y}\alpha_{y,[(\vec{a}^\prime_{S'}|\vec{x}^\prime_{S'})]_j} \ket{y}  \right\},
\end{align*}
And define the matrix $V$ as that whose columns are each one of these vectors. We see then that $\Gamma=V^{\dagger}V$, i.e.~the elements of $\Gamma$ are all the inner product of vectors associated with a row and column. Therefore, $\Gamma$ is Gramian. }

This proves the first part of the claim: an assemblage that arises from performing those types of measurements on a quantum state are almost quantum assemblages.

For the converse, take an almost-quantum assemblage $\{\sigma_{\vec{a}|\vec{x}}\}$. Let $\Gamma \in M_{|W|} (M_D)$ be its moment matrix from Def.~\ref{def:sdpaq}. Let {$\{ \ket{i} \,:\, i \in W \}$} be the Gram vectors of that matrix, i.e. $\Gamma_{i,j} = \langle i | j \rangle$. Note that these vectors have as entries elements of $M_D$ (i.e. $D\times D$ matrices on complex numbers). For each uncharacterised party $k$, define the subspace $V^{(k)}_{a_{k}|x_{k}} := \text{span}\{\ket{(a_{k}\vec{a}'|x_{k}\vec{x}')} \,:\, (\vec{a}'|\vec{x}') \in S_{\bar{k}} \}$, where $S_{\bar{k}} \subset W $ is the subset of words that do not involve party $k$. Note for clarity we have dropped the index indicating the subset of the parties. Now define:
\begin{align}
&E^{(k)}_{a_{k}|x_{k}} := \text{proj} (V^{(k)}_{a_{k}|x_{k}})\,, \quad \forall \, x=1:m\,,\quad \forall \, a=1:d-1\,,\\
&E^{(k)}_{d|x_{k}} := \id - \sum_{a=1}^{d-1} E^{(k)}_{a_{k}|x_{k}} \,.
\end{align}
By definition, $E^{(k)}_{a_{k}|x_{k}}$ are projection operators. Condition (ii) of Def.~\ref{def:sdpaq} implies that $E^{(k)}_{a_{k}|x_{k}}E^{(k)}_{a'_{k}|x_{k}}=0$ if $a \neq a'$. Hence, $\{E^{(k)}_{a_{k}|x_{k}}\}_a$ defines a complete projective measurement for each $x_{k}$ for each party. 

Now we will see the action of any sequence $\prod_{k=1}^N E^{(k)}_{a_k|x_k}$ on $\ket{\emptyset}$. Let us start with just one projector: 
\begin{align*}
E^{(k)}_{a_k|x_k} \ket{\emptyset} &=  E^{(k)}_{a_k|x_k} \ket{(a_k|x_k)} + E^{(k)}_{a_k|x_k} \left( \ket{\emptyset} - \ket{(a_k|x_k)} \right) \\
&= \ket{(a_k|x_k)} \,
\end{align*}
since by definition $E^{(k)}_{a_k|x_k} \ket{(a_k|x_k)} = \ket{(a_k|x_k)}$ and condition (v) implies that 
$$
\langle (a_k\vec{a}'|x_k\vec{x}'), \emptyset \rangle = \langle (a_k\vec{a}'|x_k\vec{x}'), (a_k|x_k) \rangle
$$ 
for all $(\vec{a}'|\vec{x}') \in S_{\bar{k}}$.

The same reasoning can be applied to $E^{(k)}_{a|x}$ acting on an arbitrary $\ket{v}$ with $v \in S_{\bar{k}}$:
\begin{align}\label{eq:applytogen}
E^{(k)}_{a_k|x_k} \ket{v} &=  E^{(k)}_{a_k|x_k} \ket{(a_k|x_k) \, v} + E^{(k)}_{a_k|x_k} \left( \ket{v} - \ket{(a_k|x_k) \, v} \right) \\ \nonumber
&= \ket{(a_k|x_k) \, v} \,
\end{align}
since $\ket{(a_k|x_k) \, v} \in V^{(k)}_{a_k|x_k}$ and condition (v) implies $ \langle (a_k|x_k) \, v , v \rangle =\langle (a_k|x_k) \, v , (a_k|x_k) \, v \rangle$.

Therefore,
\begin{align}\label{eq:theaux}
\sigma_{\vec{a}_{S}|\vec{x}_{S}} = \bra{\emptyset} \prod_{k=1:n} E^{(k)}_{a_k|x_k} \ket{\emptyset} = \tr{\prod_{k=1:n} E^{(k)}_{a_k|x_k} \, \ket{\emptyset}\bra{\emptyset} }.
\end{align}

This motivates the following definition: $\rho := \ket{\emptyset}\bra{\emptyset}$. 

The next ingredient is to check that commutation relations of the projective measurements $\{ E^{(k)}_{a_k|x_k}\}$ on the state $\rho$ satisfy the conditions stated in the Lemma. This follows immediately from \eqref{eq:applytogen}.

To conclude the proof, notice that the trace in eq.~\eqref{eq:theaux} can actually be interpreted as a partial trace on the Hilbert space of the uncharacterised parties $\mathcal{H}_A$. This follows from understanding the set $M_{|S|}(M_D)$  as  the tensor product algebra $M_{|S|} \otimes M_D$ \cite{paulsen}. 
First, invoke the isomorphism between projection operators $E \in M_{|S|}( M_D )$ and projection operators $E \otimes \mathds{1}_D$ with $E \in M_{|S|}$ \footnote{ For the reader who is not familiar with operator algebras, the main idea of the identification goes as follows. Let $E$ be a projection operator onto a linear subspace of a vector space of dimension $|S|$ on the space of matrices $M_D$. $E$ is hence an element of $M_{|S|}( M_D )$. Now perform a change of basis of $M_{|S|}( M_D )$ so that $E$ is diagonal. That is, in this new basis $\{\ket{\phi}\}$, either $E\ket{\phi} = \ket{\phi}$ or $E\ket{\phi}=0$. Remember that these $\ket{\phi}$ are vectors with entries in $M_D$.
Hence, $\bra{\phi}E\ket{\phi} = \mathds{1}_D$ or $\bra{\phi}E\ket{\phi} = \mathds{O}_D$. It follows that we can think of $E$ as $E= E_{\mathcal{H}_A} \otimes \mathds{1}_D$, where $ E_{\mathcal{H}_A}$ is a projector in $M_{|S|}$. } . 
Then, notice that an orthonormal basis $\{\ket{\phi}\}$ for $M_{|S|}( M_D )$ can be seen as $\{\ket{\phi} \bra{\phi} = \ket{\varphi}\bra{\varphi} \otimes \mathds{1}_D\}$, where $\{ \ket{\varphi}\}$ in an orthonormal basis for $\mathcal{H}_A$. 

\end{proof}

\begin{thm*} 
An assemblage $\{\sigma_{a_{1}...a_{N}|x_{1}...x_{N}}\}$ is almost quantum if and only if there exists an almost localizable channel $\Lambda_{1...N,B}^{\mathsf{\tilde{Q}}}:\mathcal{L}(\mathcal{H}_{m}^{\otimes N}\otimes\mathcal{H}_{B_{in}})\rightarrow\mathcal{L}(\mathcal{H}_{d}^{\otimes N}\otimes\mathcal{H}_{B_{out}})$ such that the assemblage is channel-defined by $\Lambda_{1...N,B}^{\mathsf{\tilde{Q}}}$.
\end{thm*}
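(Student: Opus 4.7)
The plan is to reduce the theorem to Lemma \ref{lemmaAQ}, which already delivers a physical characterisation of almost quantum assemblages in terms of a state and projective measurements satisfying a commutation-on-the-state property, and then to mimic the construction used in the proof of Proposition \ref{thmalmost}, adapting it to accommodate Bob's Hilbert space. The structural parallel with the Bell-scenario case is essentially perfect once the right identifications are made; Lemma \ref{lemmaAQ} serves as the bridge between the two settings.

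For the forward direction, I would start with an almost quantum assemblage $\{\sigma_{a_{1}\dots a_{N}|x_{1}\dots x_{N}}\}$ and invoke Lemma \ref{lemmaAQ} to obtain $\mathcal{H}\cong\mathcal{K}\otimes\mathcal{H}_{B}$, a state $|\psi\rangle\in\mathcal{H}$, and projective measurements $\{\Pi_{a_{j}|x_{j}}\in\mathcal{L}(\mathcal{K})\}$ whose ordered products on $|\psi\rangle$ are permutation-invariant. From this data I would construct $\Lambda_{1\ldots N,B}^{\tilde{\mathsf{Q}}}$ exactly as in Fig.~\ref{f:theAQconstr}: initialise the channel ancilla as $|\psi\rangle\otimes|0\rangle^{\otimes N}$, define controlled operators $U^{(i)}$ as in \eqref{eq:AQuni} and \eqref{eq:theOaq} but with the lemma's $\Pi_{a_{j}|x_{j}}$ in place of the almost quantum projectors, and follow each with a swap between that party's input qumit and output qudit. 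Bob's Hilbert space is included as a ``passive'' factor carried along: his input port (with fixed state $|0\rangle$) is traced out immediately while his output port is identified with the $\mathcal{H}_{B}$-factor inside $|\psi\rangle$. The permutation-invariance on $|\psi\rangle$ then gives exactly the commuting-unitary condition of Def.~\ref{almostloc}, and a direct computation analogous to that of Proposition \ref{thmalmost} shows that the channel channel-defines $\{\sigma_{a_{1}\dots a_{N}|x_{1}\dots x_{N}}\}$.

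For the reverse direction, I would take an almost localizable $\Lambda_{1\ldots N,B}^{\tilde{\mathsf{Q}}}$ that channel-defines the assemblage, pass to its commuting unitary representation from Def.~\ref{almostloc}, and absorb the preparations $|x_{j}\rangle\langle x_{j}|$ and the output projectors $|a_{j}\rangle\langle a_{j}|$ into the unitaries. This produces projective measurements $\tilde{\Pi}_{a_{j}|x_{j}}$ on an enlarged ancilla space that factorises as $\mathcal{K}\otimes\mathcal{H}_{B}$; the permutation invariance of the $U_{jE}$ on the ancilla state descends immediately to the commutation-on-the-state condition of Lemma \ref{lemmaAQ}, and the resulting partial-trace expression matches the one in the lemma, so the assemblage is almost quantum. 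The main obstacle, and the only point requiring real care, is the bookkeeping around Bob's Hilbert space: in Lemma \ref{lemmaAQ} it appears as a tensor factor of the state on which the measurements act, whereas in Def.~\ref{almostloc} it appears as a pair of input/output ports with a fixed input. The key technical move is to show that one may freely pass between these presentations by dilating/extracting the appropriate factor of the ancilla without disturbing the almost-localizability condition; once this identification is pinned down, everything else reduces to the Bell-scenario argument already carried out for Proposition \ref{thmalmost}.
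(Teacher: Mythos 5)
Your proposal is correct and follows essentially the same route as the paper: the paper's proof also reduces the theorem to Lemma \ref{lemmaAQ} and then reuses the constructive argument of Proposition \ref{thmalmost} (the Fig.~\ref{f:theAQconstr} construction) to build the almost localizable channel from the lemma's state and projectors, with the converse direction following immediately from the lemma. Your write-up is in fact more explicit than the paper's about the handling of Bob's tensor factor, but the underlying argument is identical.
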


\begin{proof}
Given an assemblage channel-defined by an almost localizable channel, it is immediate that is an almost quantum assemblage due to Lemma 29. To show that an assemblage as defined in Lemma 29 can be channel-defined by an almost localizable channel, we can use exactly the same constructive argument as in Proposition 17. That is, given projectors as in Lemma 29, we can construct local unitaries that act on a register in the state $|\psi\rangle$ as in the proof of Proposition 17.
\end{proof}

\end{document}